\def\theTitle{Improved hardness for $H$-colourings of $G$-colourable graphs}
\title{\theTitle\thanks{Stanislav \v{Z}ivn\'y was supported by a Royal Society
University Research Fellowship. This project has received funding from the
European Research Council (ERC) under the European Union's Horizon 2020 research
and innovation programme (grant agreement No 714532). The paper reflects only
the authors' views and not the views of the ERC or the European Commission. The
European Union is not liable for any use that may be made of the information
contained therein.}}
\author{
Marcin Wrochna\\
University of Oxford, UK\\
\texttt{marcin.wrochna@cs.ox.ac.uk}
\and
Stanislav \v{Z}ivn\'{y}\\
University of Oxford, UK\\
\texttt{standa.zivny@cs.ox.ac.uk}
}
\date{\small First appeared: July 2019, updated: October 2019}
\newtheorem{theorem}{Theorem}
\newtheorem*{theorem*}{Theorem}
\numberwithin{theorem}{section} 
\newtheorem{lemma}[theorem]{Lemma}
\newtheorem{corollary}[theorem]{Corollary}
\newtheorem{proposition}[theorem]{Proposition}
\newtheorem*{proposition*}{Proposition}
\newtheorem{observation}[theorem]{Observation}
\theoremstyle{definition}
\newtheorem{remark}[theorem]{Remark}
\newtheorem{definition}[theorem]{Definition}
\newtheorem{conjecture}[theorem]{Conjecture}
\newcommand{\ZZ}{\mathbb{Z}}
\newcommand{\NN}{\mathbb{N}}
\newcommand{\RR}{\mathbb{R}}
\newcommand{\eps}{\varepsilon}
\newcommand{\defeq}{\vcentcolon=}
\newcommand{\Cc}{\mathcal{C}}
\newcommand{\Dd}{\mathcal{D}}
\DeclareMathOperator{\PCSP}{PCSP}
\DeclareMathOperator{\Pol}{Pol}
\DeclareMathOperator{\ar}{ar}
\newcommand\B[1]{{\textstyle\binom{#1}{\lfloor #1/2 \rfloor}}}
\DeclareMathOperator{\sub}{sub}
\DeclareMathOperator{\sym}{sym}
\newcommand\Sphere{\mathcal{S}}
\DeclareMathOperator{\Hom}{Hom}
\newcommand\Bx[1]{\mathrm{Box}(#1)}
\newcommand\BBox[1]{\left|\mathrm{Box}(#1)\right|}
\newcommand{\zeq}{\simeq_{\ZZ_2}}
\newcommand{\vL}[1]{{#1}{}\sp{\circ}}
\newcommand{\vR}[1]{{#1}{}\sp{\bullet}}
\begin{document}

\maketitle

\vspace*{-1em}	
\begin{abstract}

  We present new results on approximate colourings of graphs and, more
  generally, approximate $H$-colourings and promise constraint satisfaction
  problems.\looseness=-1
  
  First, we show NP-hardness of colouring $k$-colourable graphs with
  $\binom{k}{\lfloor k/2\rfloor}-1$ colours for every $k\geq 4$. This improves
  the result of Bul\'in, Krokhin, and Opr\v{s}al [STOC'19], who gave NP-hardness
  of colouring $k$-colourable graphs with $2k-1$ colours for $k\geq 3$, and the
  result of Huang [APPROX-RANDOM'13], who gave NP-hardness of
  colouring $k$-colourable graphs with $2^{\Omega(k^{1/3})}$ colours for sufficiently
  large $k$. Thus, for $k\geq 4$, we improve from known linear/sub-exponential gaps
  to exponential gaps.
  
  Second, we show that the topology of the box complex of $H$ 
  alone determines whether $H$-colouring of $G$-colourable graphs is NP-hard for all (non-bipartite, $H$-colourable)~$G$.  
  This formalises the topological intuition behind the result of Krokhin and Opr\v{s}al [FOCS'19] that
  $3$-colouring of $G$-colourable graphs is NP-hard for all ($3$-colourable,
  non-bipartite) $G$.
  We use this technique to establish
  NP-hardness of $H$-colouring of $G$-colourable graphs for $H$ that include but
  go beyond $K_3$, including square-free graphs and circular cliques (leaving $K_4$ and larger cliques open).

  Underlying all of our proofs is a very general observation that adjoint
  functors give reductions between promise constraint satisfaction problems.

\end{abstract}

\section{Introduction}

Graph colouring is one of the most fundamental and studied problems in
combinatorics and computer science. A graph $G$ is called $k$-colourable if
there is an assignment of colours $\{1,2,\ldots,k\}$ to the vertices of $G$ so
that any two adjacent vertices are assigned different colours. The chromatic
number of $G$, denoted by $\chi(G)$, is the smallest integer $k$ for which $G$ is
$k$-colourable. Deciding whether $\chi(G)\leq k$ appeared on Karp's original
list of 21 NP-complete problems~\cite{Karp72:reducibility}, and is NP-hard for
every $k\geq 3$. In particular, it is NP-hard to decide whether $\chi(G)\leq 3$
or $\chi(G)>3$. Put differently (thanks to self-reducibility of graph
colouring), it is NP-hard to find a $3$-colouring of $G$ even if $G$ is promised
to be $3$-colourable. 

In the \emph{approximate graph colouring} problem, we are allowed to use more
colours than needed. For instance, given a $3$-colourable graph $G$ on $n$
vertices, can we find a colouring of $G$ using significantly fewer than $n$ colours?
On the positive side, the currently best polynomial-time algorithm of
Kawarabayashi and Thorup~\cite{Kawarabayashi17:jacm} finds a colouring using
$O(n^{0.19996})$ colours. Their work continues a long line of research and is
based on a semidefinite relaxation. On the negative side, it is believed that
finding a $c$-colouring of a $k$-colourable graph is NP-hard for all constants
$3\leq k\leq c$. Already in this regime (let alone for non-constant $c$) our
understanding remains rather limited, despite lots of work and the development
of complex techniques, as we will survey in Section~\ref{sec:related}.

A natural and studied generalisation of graph colourings is that of graph
homomorphisms and, more generally, constraint satisfaction
problems~\cite{Hell08:survey}.

Given two graphs $G$ and $H$, a map $h:V(G)\to V(H)$ is a \emph{homomorphism}
from $G$ to $H$ if $h$ preserves edges; that is, if $\{h(u),h(v)\}\in E(H)$
whenever $\{u,v\}\in E(G)$~\cite{hell2004homomorphism_book}. A celebrated result
of Hell and Ne\v{s}et\v{r}il established a dichotomy for the homomorphism
problem with a fixed target graph~$H$, also known as the \emph{$H$-colouring
problem}: deciding whether an input graph $G$ has a homomorphism to $H$ is
solvable in polynomial time if $H$ is bipartite or if $H$ has a loop; for all
other $H$ this problem is NP-hard~\cite{HellN90}. Note that the $H$-colouring
problem for $H=K_k$, the complete graph on $k$ vertices, is precisely the graph
colouring problem with $k$ colours.

The constraint satisfaction problem (CSP) is a generalisation of the graph
homomorphism problem from graphs to arbitrary relational structures. One type of
CSP that has attracted a lot of attention is the one with a fixed target
structure, also known as the \emph{non-uniform} CSP; see, e.g., the work of
Jeavons, Cohen, and Gyssens~\cite{Jeavons97:jacm},
Bulatov~\cite{Bulatov06:3-elementJACM,Bulatov11:conservative}, and Barto and
Kozik~\cite{Barto14:jacm,Barto16:sicomp}. Following the above mentioned
dichotomy of Hell and Ne\v{s}et\v{r}il for the $H$-colouring~\cite{HellN90} and
a dichotomy result of Schaefer for Boolean CSPs~\cite{Schaefer78:complexity},
Feder and Vardi famously conjectured a dichotomy for all non-uniform
CSPs~\cite{Feder98:monotone}. The Feder-Vardi conjecture was recently confirmed 
independently by Bulatov~\cite{Bulatov17:focs} and Zhuk~\cite{Zhuk17:focs}. In
fact, both proofs establish the so-called ``algebraic dichotomy'', conjectured
by Bulatov, Jeavons, and Krokhin~\cite{Bulatov05:classifying}, which delineates
the tractability boundary in algebraic terms. A high-level idea of the
tractability boundary is that of higher-order symmetries, called polymorphisms,
which allow to combine several solutions to a CSP instance into a new solution.
The lack of non-trivial\footnote{We note that projections/dictators are not the
only trivial polymorphims, cf.~\cite[Example~41]{bkw17:survey}.} polymorphisms guarantees NP-hardness, as shown
already in~\cite{Bulatov05:classifying}. The work of Bulatov and Zhuk show that
\emph{any} non-trivial polymorphism guarantees tractability. We refer the
reader to a recent accessible survey by Barto, Krokhin, and Willard on the
algebraic approach to CSPs~\cite{bkw17:survey}.

Given two graphs $G$ an $H$ such that $G$ is $H$-colourable (i.e., there is
a homomorphism from $G$ to~$H$), the \emph{promise constraint satisfaction
problem} parametrised by $G$ and $H$, denoted by $\PCSP(G,H)$, is the following
computational problem: given a $G$-colourable graph, find an $H$-colouring of
this graph.\footnote{What we described is the ``search version'' of PCSPs. In
the ``decision version'', the goal is to say \textsf{YES} if the input graph is
$G$-colourable and \textsf{NO} if the input graph is not $H$-colourable. The decision
PCSP reduces to the search PCSP but they are not known to be equivalent in
general. However, as far as we know, all known positive results are for the
search version, while all known negative results, including the new results from this
paper, are for the decision version.}
More generally, $G$ and $H$ do not have to be graphs but arbitrary relational
structures. Note that if $G=H$ then we obtain the (search version of the)
standard $H$-colouring and constraint satisfaction problem.

PCSPs have been studied as early as in the classic work of Garey and
Johnson~\cite{Garey76:jacm} on approximate graph colouring but a systematic
study originated in the paper of Austrin, Guruswami, and H{\aa}stad~\cite{Austrin17:sicomp}, who studied a promise version of
$(2k+1)$-SAT, called $(2+\epsilon)$-SAT. In a series of
papers~\cite{Brakensiek16:ccc,Brakensiek18:soda,Brakensiek19:soda}, Brakensiek
and Guruswami linked PCSPs to the universal-algebraic methods developed for the
study of non-uniform CSPs~\cite{bkw17:survey}. In particular, the notion of weak
polymorphisms, identified in~\cite{Austrin17:sicomp}, allowed for some ideas 
developed for CSPs to be be used in the context of PCSPs. The algebraic
theory of PCSPs was then lifted to an abstract level by Bul\'in, Krokhin, and
Opr\v{s}al in~\cite{BulinKO18}. Consequently, this theory was used by Ficak,
Kozik, Ol\v{s}\'ak, and Stankiewicz to obtain a dichotomy for symmetric Boolean
PCSPs~\cite{Ficak19:icalp}, thus improving on an earlier result
from~\cite{Brakensiek18:soda}, which gave a dichotomy for symmetric Boolean PCSP
with folding (negations allowed).

\subsection{Prior and related work}
\label{sec:related}

While the NP-hardness of finding a $3$-colouring of a $3$-colourable graph was
obtained by Karp~\cite{Karp72:reducibility} in 1972, the NP-hardness of finding
a $4$-colouring of a $3$-colourable graph was only proved in 2000 by Khanna,
Linial, and Safra~\cite{Khanna00:combinatorica} (see also the work of Guruswami
and Khanna for a different proof~\cite{Guruswami04:sidma}). This result implied
NP-hardness of finding a $(k+2\lfloor k/3\rfloor-1)$-colouring of a
$k$-colourable graph for $k\geq 3$~\cite{Khanna00:combinatorica}. Early work of
Garey and Johnson established NP-hardness of finding a $(2k-5)$-colouring of a
$k$-colourable graph for $k\geq 6$~\cite{Garey76:jacm}. In 2016, Brakensiek and
Guruswami proved NP-hardness of a $(2k-2)$-colouring of a $k$-colourable
graph for $k\geq 3$~\cite{Brakensiek16:ccc}. Only very recently, Bul\'in,
Krokhin, and Opr\v{s}al showed that finding a  $5$-colouring of a $3$-colourable
graph, and more generally, finding a $(2k-1)$-colouring of a $k$-colourable
graph for any $k\geq 3$, is NP-hard~\cite{BulinKO18}. 

In 2001, Khot gave an asymptotic result -- he showed that for sufficiently large $k$, finding a $k^{\frac{1}{25}(\log k)}$-colouring of a
$k$-colourable graph is NP-hard~\cite{Khot01}. In 2013, Huang improved the gap.
For sufficiently large $k$, he showed that 
finding a $2^{\Omega(k^{1/3})}$-colouring of a $k$-colourable graph is
NP-hard~\cite{Huang13}.

The NP-hardness of colouring ($k$-colourable graphs) with $(2k-1)$ colours for
$k\geq 3$ from~\cite{BulinKO18} and with $2^{\Omega(k^{1/3})}$ colours for sufficiently
large $k$ from~\cite{Huang13} constitute the currently strongest known
NP-hardness results for approximate graph colouring.

Under stronger assumptions (Khot's 2-to-1 Conjecture~\cite{Khot02stoc} for
$k\geq 4$ and its non-standard variant for $k=3$), Dinur, Mossel, and Regev
showed that finding a $c$-colouring of a $k$-colourable graph is NP-hard for
all constants $3\leq k\leq c$~\cite{Dinur09:sicomp}
A variant of Khot's 2-to-1 Conjecture with imperfect completeness has recently been proved~\cite{DinurKKMS18,KhotMS18},
which implies hardness for approximate colouring variants where most but not all of the graph is guaranteed to be $k$-colourable.

Hypergraphs colourings, a special case of PCSPs, is another line of work
intensively studied. A $k$-colouring of a hypergraph is an assignment of colours
$\{1,2,\ldots,k\}$ to its vertices that leaves no hyperedge monochromatic. Dinur,
Regev, and Smyth showed that for any constants $2\leq  k\leq c$, it is NP-hard
to find a $c$-colouring of given $3$-uniform $k$-colourable
hypergraph~\cite{Dinur05:combinatorica}. Other notions of colourings (such as
different types of rainbow colourings) for
hypergraphs were studied by  Brakensiek and
Guruswami~\cite{Brakensiek16:ccc,Brakensiek17:approx}, Guruswami and
Lee~\cite{Guruswami18:combinatorica}, and Austrin, Bhangale, and
Potukuchi~\cite{Austrin18:arxiv}.

Some results are also known for colourings with a super-constant number of colours.
For graphs, conditional hardness was obtained by Dinur and
Shinkar~\cite{Dinur10:approx}. For hypergraphs, NP-hardness results were obtained
in recent work of Bhangale~\cite{Bhangale18:icalp} and Austrin, Bhangale, and
Potukuchi~\cite{Austrin19:arxiv}.

\section{Results}

For two graphs or digraphs $G$, $H$, we write $G \to H$ if there exists a
homomorphism from $G$ to $H$.\footnote{In this paper, we allow graphs to have
loops: the existence of homomorphisms for such graphs is trivial, but this
allows us to make statements about graph constructions that will work without
exceptions.} We are interested in the following computational problem.

\begin{definition}
  Fix two graphs $G$ and $H$ with $G\to H$. The (decision variant of the) $\PCSP(G,H)$ is, given an
  input graph $I$, output $\textsf{YES}$ if $I\to G$, and $\textsf{NO}$ if $I\not\to H$.
\end{definition}

To state our results it will be convenient to use the following definition.

\begin{definition}
	A graph $H$ is \emph{left-hard} if for every non-bipartite graph $G$ with $G \to H$, $\PCSP(G,H)$ is NP-hard.
	A graph $G$ is \emph{right-hard} if for every loop-less graph $H$ with $G \to H$, $\PCSP(G,H)$ is NP-hard.
\end{definition}
If $G \to G'$ and $H' \to H$, then $\PCSP(G,H)$ trivially reduces to  $\PCSP(G',H')$
(this is called \emph{homomorphic relaxation}~\cite{BulinKO18}; intuitively, {increasing} the promise gap makes the problem {easier}).
Therefore, if $H$ is a left-hard graph, then all graphs left of $H$ (that is,
$H'$ such that $H' \to H$) are trivially left-hard.\footnote{Note that by our
definition, bipartite graphs are vacuously left-hard.}
If $G$ is right-hard, then all graphs right of $G$ are right-hard.

For the same reason, since every non-bipartite graph admits a homomorphism from
an odd cycle, to show that $H$ is left-hard it suffices to show that
$\PCSP(C_n,H)$ is NP-hard for arbitrarily large odd $n$, where $C_n$ denotes the
cycle on $n$ vertices.
Dually, since every loop-less graph admits a homomorphism to a clique, to show that $G$ is right-hard it suffices to show that $\PCSP(G,K_k)$ is NP-hard for arbitrarily large $k$.

It is conjectured that all non-trivial PCSPs for (undirected) graphs are NP-hard, greatly extending Hell and  Ne\v{s}et\v{r}il's theorem:

\begin{conjecture}[Brakensiek and Guruswami~\cite{Brakensiek18:soda}]\label{conj:main}
	$\PCSP(G,H)$ is NP-hard for every non-bipartite loop-less $G,H$.
	Equivalently, every loop-less graph is left-hard.
	Equivalently, every non-bipartite graph is right-hard.
\end{conjecture}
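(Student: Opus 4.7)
The three formulations are equivalent essentially by definition. The first picks out the family $\mathcal{F}$ of pairs $(G,H)$ with $G$ non-bipartite, $H$ loop-less, and $G \to H$, and asserts NP-hardness of $\PCSP(G,H)$ on $\mathcal{F}$. ``Every loop-less $H$ is left-hard'' quantifies $\mathcal{F}$ by first fixing $H$; ``every non-bipartite $G$ is right-hard'' does so by first fixing $G$. So the equivalences amount to routine unfolding of definitions, together with the convention that bipartite graphs are vacuously left-hard and loop graphs vacuously right-hard.

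For the NP-hardness assertion itself (which is in fact the standing open conjecture motivating this paper, rather than a theorem to be dispatched in the usual sense), the plan is to exploit the reductions noted in the excerpt: since every non-bipartite graph admits a homomorphism from some odd cycle $C_n$, and every loop-less graph admits a homomorphism to some clique $K_k$, homomorphic relaxation reduces the task to showing $\PCSP(C_n, K_k)$ is NP-hard for every odd $n \geq 3$ and every $k \geq 3$. The natural attack is the algebraic framework of~\cite{BulinKO18}: establish that $\Pol(C_n, K_k)$ carries no non-trivial minion identities, then invoke the general PCSP hardness machinery. The paper complements this with two further ingredients -- adjoint functors transferring known hardness to new pairs, and topological invariants of the box complex of $H$ (such as $\ZZ_2$-index) -- which already force $K_3$, circular cliques, and certain square-free graphs to be left-hard.

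The main obstacle is that this is an open conjecture and the gap between current results and the full statement is enormous. Even after the improvement in this paper from $2k-1$ to $\binom{k}{\lfloor k/2 \rfloor}-1$ colours, the ratio between the source and target chromatic numbers in known hardness results stays bounded by a function of $k$; the conjecture demands hardness with \emph{arbitrary} ratio, including $k$ fixed while the source chromatic number grows. No existing technique, combinatorial, algebraic, or topological, rules out polymorphisms of $\PCSP(C_n, K_k)$ when $k$ is much larger than $n$. A successful attack would likely require a finer topological analysis of box complexes (higher connectivity, or $\ZZ_2$-equivariant cohomology operations beyond the index currently used) together with new adjoint constructions tailored to large cliques. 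Until such ingredients appear, the conjecture must be approached in increments, exactly as the present paper does.
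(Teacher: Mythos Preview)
This statement is a conjecture, not a theorem: the paper does not prove it, and there is no ``paper's own proof'' to compare against. Your first paragraph, explaining why the three formulations are equivalent by unfolding the definitions of left-hard and right-hard (with the vacuous cases), is correct and matches exactly how the paper treats the equivalence in the surrounding discussion. The remaining two paragraphs are a reasonable survey of the landscape rather than a proof, which is the appropriate response here.

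One small slip: in your third paragraph you write ``$k$ fixed while the source chromatic number grows.'' In your own reduction to $\PCSP(C_n,K_k)$ the source is $C_n$, which always has chromatic number~3; and in the paper's clique results $\PCSP(K_n,K_k)$ the difficulty is the \emph{target} $k$ growing while the source $n$ stays fixed (e.g.\ $n=3$ or $n=4$). So the phrase should be reversed: the open gap is source fixed, target unbounded.
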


In addition to the results on classical colourings discussed above (the case where $G$ and $H$ are cliques),
the following result was recently obtained in a novel application of topological ideas.
\begin{theorem}[Krokhin and Opr\v{s}al~\cite{KrokhinO19}]\label{thm:K3}
	$K_3$ is left-hard.
\end{theorem}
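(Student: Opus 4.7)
The plan is to reduce the statement to proving NP-hardness of $\PCSP(C_n, K_3)$ for arbitrarily large odd integers $n$. This suffices by the discussion preceding the theorem: every non-bipartite graph $G$ admits a homomorphism from some odd cycle $C_n$, and homomorphic relaxation immediately yields $\PCSP(C_n, K_3) \leq_P \PCSP(G, K_3)$, transferring hardness to every non-bipartite $3$-colourable target.

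To prove hardness of $\PCSP(C_n, K_3)$, I would invoke the paper's central reduction principle: any adjunction $\Lambda \dashv \Gamma$ between graph-valued functors induces a polynomial-time reduction $\PCSP(\Gamma(A), \Gamma(B)) \leq_P \PCSP(A, B)$ by sending the instance $I$ to $\Lambda(I)$, since $\Lambda(I) \to A \iff I \to \Gamma(A)$ and similarly for $B$. Starting from a known NP-hard PCSP $\PCSP(A, B)$—for example $\PCSP(K_k, K_{2k-1})$ for suitable $k$, whose hardness is recalled in the introduction—it then suffices to exhibit an adjoint pair together with homomorphisms $C_n \to \Gamma(A)$ and $\Gamma(B) \to K_3$, so that homomorphic relaxation chains the reduction all the way into $\PCSP(C_n, K_3)$.

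The main obstacle is constructing such a $\Gamma$: it must simultaneously absorb the large chromatic gap of the source pair and be compatible with both $C_n$ (which is bipartite) and $K_3$. Any purely combinatorial attempt is doomed because $\chi(C_n) = 2 < 3 = \chi(K_3)$, so $\Gamma$ has to increase chromatic number on cycles in a controlled way while not blowing it up on $K_3$. The topological approach resolves this: choose $\Gamma$ so that the box complex $\Bx{\Gamma(H)}$ is $\ZZ_2$-homotopy equivalent to $\Bx{H}$ (for instance via a suitable iterated arc-graph or $\Bx{\cdot}$-related construction), and exploit the key fact that $\Bx{C_n} \zeq \Bx{K_3}$ (both are $\ZZ_2$-homotopy equivalent to $S^1$). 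The required existence and non-existence of homomorphisms between images under $\Gamma$ then become equivariant-map questions on $\ZZ_2$-spaces, settled by Borsuk--Ulam-type $\ZZ_2$-index computations. This topological input, together with the preservation of box-complex homotopy by the chosen adjoint, is where the technical weight of the proof sits, and it is precisely the pattern that the rest of the paper seeks to axiomatise.
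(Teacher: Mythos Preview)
Your proposal has a fatal error in the direction of the reduction. The chain you build runs the wrong way: with $\Lambda \dashv \Gamma$, the map $I \mapsto \Lambda(I)$ gives $\PCSP(\Gamma(A),\Gamma(B)) \leq_P \PCSP(A,B)$, as you correctly state. Then, from $C_n \to \Gamma(A)$ and $\Gamma(B) \to K_3$, homomorphic relaxation gives $\PCSP(C_n,K_3) \leq_P \PCSP(\Gamma(A),\Gamma(B))$. Composing, you obtain $\PCSP(C_n,K_3) \leq_P \PCSP(A,B)$, which tells you nothing about the hardness of $\PCSP(C_n,K_3)$; it only says it is at most as hard as the problem you started from. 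To transfer NP-hardness \emph{to} $\PCSP(C_n,K_3)$ you would need the reverse chain, and adjunctions do not give that direction for free. (There is also a slip: for odd $n$ the cycle $C_n$ is not bipartite and has $\chi(C_n)=3$, not $2$.)

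The paper's proof is entirely different in nature. It does not reduce from another PCSP via adjoint functors at all. Instead it analyses the polymorphisms $f\colon C_n^L \to K_3$ directly: each such $f$ induces a $\ZZ_2$-map $|\Hom(K_2,C_n)|^L \to |\Hom(K_2,K_3)|$, hence (since both complexes are circles) a group homomorphism $f_*\colon \ZZ^{\star L} \to \ZZ$ on fundamental groups of the quotient spaces. The assignment $f \mapsto f_*$ is shown to be a minion homomorphism, and a short counting argument using mirror automorphisms of $C_n$ shows that the image minion has essential arity bounded by $|K_3|^n$ and contains no constants. Hardness then follows from the general criterion (Theorem~\ref{thm:minbndarity}). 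The topology enters to control polymorphisms, not to manufacture an adjunction between PCSPs.
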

	
\subsection{Improved hardness of classical colouring}

In Section~\ref{sec:arc}, we focus on right-hardness. We use a simple construction called the \emph{arc digraph} or \emph{line digraph}, which decreases the chromatic number of a graph in a controlled way.
The construction allows to conclude the following, in a surprisingly simple way:

\begin{proposition}\label{prop:right-hard}
	There exists a right-hard graph if and only if $K_4$ is right-hard.%
\footnote{Jakub Opr\v{s}al and Andrei Krokhin realised that in this Proposition, 4 can be improved to 3 by using the fact that $\delta(\delta(K_4))$ is 3-colourable, as proved by Rorabaugh, Tardif, Wehlau, and Zaguia~\cite{RTWZ16}. Details will appear in a future journal version.}	
\end{proposition}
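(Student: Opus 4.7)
The forward direction is trivial. For the converse, suppose some graph $G_0$ is right-hard; we must show $\PCSP(K_4, K_M)$ is NP-hard for arbitrarily large $M$. The key ingredient is the arc digraph functor $\delta$ (treating undirected graphs as symmetric digraphs), together with the Harner-Entringer / Poljak-R\"odl correspondence $\chi(\delta(G)) \leq k \iff \chi(G) \leq \binom{k}{\lfloor k/2 \rfloor}$.

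The plan is to apply $\delta$ to both the instance and the promise of the PCSP. Concretely, the assignment $I \mapsto \delta(I)$ should be a polynomial-time reduction from $\PCSP(G, K_{\binom{k}{\lfloor k/2 \rfloor}})$ to $\PCSP(\delta(G), K_k)$: the YES case is immediate by functoriality (from $I \to G$ we get $\delta(I) \to \delta(G)$), and the NO case is the contrapositive of Poljak-R\"odl applied to the instance $I$. This reduction can also be viewed as an instance of the general adjoint-functor reduction philosophy advertised in the introduction, with the right adjoint of $\delta$ sending $K_k$ to a digraph of chromatic number exactly $\binom{k}{\lfloor k/2 \rfloor}$.

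Iterating this reduction $n$ times, with $m_{i-1} = \binom{m_i}{\lfloor m_i/2 \rfloor}$, yields a polynomial-time reduction $\PCSP(G_0, K_{m_0}) \to \PCSP(\delta^n(G_0), K_{m_n})$. Choosing $n$ to be a constant large enough (depending only on $\chi(G_0)$) ensures $\chi(\delta^n(G_0)) \leq 4$, so $\delta^n(G_0) \to K_4$, and homomorphic relaxation then gives a further reduction to $\PCSP(K_4, K_{m_n})$. Since $G_0$ is right-hard, $\PCSP(G_0, K_{m_0})$ is NP-hard for $m_0$ in an unbounded set, so the same conclusion transfers to $\PCSP(K_4, K_{m_n})$ for arbitrarily large $m_n$, which suffices for right-hardness of $K_4$.

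The main conceptual step is identifying $\delta$ as the right reduction; once this is combined with the classical Poljak-R\"odl bound (which is the actual substance of the argument), the proof is genuinely only a few lines. No real obstacle arises from the fact that $\delta^n(G_0)$ is a digraph rather than a graph, since after the final relaxation we land in $\PCSP(K_4, K_{m_n})$, a pure graph colouring question; the intermediate digraph PCSPs are well-defined in the standard relational-structure setting.
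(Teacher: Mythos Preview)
Your proposal is correct and follows essentially the same approach as the paper: both use the arc digraph $\delta$ together with the Poljak--R\"odl correspondence to step the chromatic numbers down, iterating until one reaches $K_4$. The only cosmetic difference is ordering---the paper first relaxes the right-hard graph to a clique $K_{\chi(G_0)}$ and then iterates the clique-to-clique reduction $\PCSP(K_{b(n)},K_{b(k)})\to\PCSP(K_n,K_k)$ (phrased contrapositively), whereas you iterate $\delta$ on $G_0$ itself and relax to $K_4$ only at the end; the paper's ordering has the minor advantage that instances stay symmetric throughout, so the exact bound $b(k)$ (rather than $2^k$) applies at every step, but as you note this is irrelevant for the conclusion.
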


More concretely, we show in particular that $\PCSP(K_6,K_{2^k})$ log-space
reduces to $\PCSP(K_4,K_k)$, for all $k \geq 4$. This contrasts
with~\cite[Proposition~10.3]{BartoBKO19},\footnote{\cite{BartoBKO19} is a full
version of~\cite{BulinKO18}. Proposition~10.3 in~\cite{BartoBKO19} is Proposition 5.31 in the
previous two versions of~\cite{BartoBKO19}.} where it is shown to be impossible to
obtain such a reduction with \emph{minion homomorphisms}: an algebraic
reduction, described briefly in Section~\ref{subsec:category}, central to the
framework of~\cite{BulinKO18,BartoBKO19} (in particular, there exists a $k$ such that
$\PCSP(K_4,K_k)$ admits no minion homomorphism to any $\PCSP(K_{n'},K_{k'})$ for
$4 < n' \leq k'$).

Furthermore, we strengthen the best known asymptotic hardness: Huang~\cite{Huang13} showed that for all sufficiently large $n$, $\PCSP(K_n, K_{2^{n^{1/3}}})$ is NP-hard.
We improve this in two ways, using Huang's result as a black-box.
First, we improve the asymptotics from sub-exponential $2^{n^{1/3}}$ to single-exponential  $\B{n} \sim \frac{2^n}{\sqrt{\pi n/2}}$.
Second, we show the claim holds for $n$ as low as $4$.

\newcommand{\ThmAsymp}{
	For all $n \geq 4$, $\PCSP(K_n, K_{\B{n}-1})$ is NP-hard.
}
\begin{theorem}[\textbf{Main Result \#1}]\label{thm:asymp}
  \ThmAsymp
\end{theorem}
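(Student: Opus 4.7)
The plan is to iterate the arc-digraph (line digraph) reduction underlying Proposition~\ref{prop:right-hard}, using Huang's theorem as the asymptotic black-box at the base.

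The combinatorial heart is the classical Poljak--R\"odl theorem, which characterises the chromatic number of the arc digraph $\delta(G)$: for every (di)graph $G$ and integer $m$, $G \to K_{\binom{m}{\lfloor m/2 \rfloor}}$ if and only if $\delta(G) \to K_m$. Since the map $I \mapsto \delta(I)$ is log-space computable, completeness and soundness both follow directly from this equivalence, yielding the log-space reduction
\begin{equation*}
  \PCSP\bigl(K_{\binom{n}{\lfloor n/2 \rfloor}},\, K_{\binom{c}{\lfloor c/2 \rfloor}}\bigr)\ \leq_{\log}\ \PCSP(K_n, K_c)
\end{equation*}
for all $n, c \geq 3$.

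Given $n \geq 4$, I would iterate this reduction. Set $(n_0, c_0) = (n, \binom{n}{\lfloor n/2 \rfloor} - 1)$ and $(n_{i+1}, c_{i+1}) = (\binom{n_i}{\lfloor n_i/2 \rfloor}, \binom{c_i}{\lfloor c_i/2 \rfloor})$; the composition of $t$ such steps remains log-space for any fixed $t$. Since $n_{i+1} \sim 2^{n_i}/\sqrt{n_i}$ grows as a tower of exponentials, after only a constant number $t$ of steps one ensures $n_t \geq N_0$, where $N_0$ is the constant beyond which Huang's theorem asserts NP-hardness of $\PCSP(K_N, K_{2^{\Omega(N^{1/3})}})$. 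Moreover $c_0 \geq n_0$ (since $\binom{n}{\lfloor n/2 \rfloor} \geq n + 1$ for $n \geq 4$), and the monotonicity of $m \mapsto \binom{m}{\lfloor m/2 \rfloor}$ gives $c_i \geq n_i$ for all $i$, whence $c_t \geq n_t \geq 2^{\Omega(n_t^{1/3})}$ as soon as $N_0$ is chosen large enough. Thus Huang's hardness applies at $(n_t, c_t)$ via homomorphic relaxation, and composing the $t$ arc-digraph reductions back down yields the desired NP-hardness of $\PCSP(K_n, K_{\binom{n}{\lfloor n/2 \rfloor} - 1})$.

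The main technical obstacle is the careful realisation of the arc-digraph reduction in the PCSP framework: since $\delta(G)$ is naturally a directed construction, one either works in the digraph PCSP formalism throughout or applies an appropriate symmetrisation to stay within undirected graph PCSPs. This is precisely the step already handled inside the proof of Proposition~\ref{prop:right-hard}; given that reduction, the remainder of the argument is a routine combinatorial verification using the Poljak--R\"odl characterisation and the known homomorphic-relaxation behaviour of PCSPs.
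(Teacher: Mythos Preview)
Your reduction chain is set up correctly, but the step where you invoke Huang's theorem is broken. For Huang's hardness of $\PCSP(K_{n_t},K_{2^{C n_t^{1/3}}})$ to transfer to $\PCSP(K_{n_t},K_{c_t})$ via homomorphic relaxation, you need $c_t \leq 2^{C n_t^{1/3}}$, not $c_t \geq 2^{\Omega(n_t^{1/3})}$ as you wrote (and note that even the inequality $n_t \geq 2^{\Omega(n_t^{1/3})}$ you state is false for large $n_t$, since $\log n_t \ll n_t^{1/3}$). The required upper bound does not hold under your iteration: from $c_0 = b(n_0)-1 = n_1-1$ and $c_{i+1}=b(c_i)$ one gets $c_t$ within a constant factor of $n_{t+1}=b(n_t)\sim 2^{n_t}/\sqrt{n_t}$, so $\log c_t \asymp n_t$, which vastly exceeds $C\,n_t^{1/3}$. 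In short, iterating $b$ on both coordinates preserves the ``one exponential ahead'' relationship $c_i \approx b(n_i)$, so you never land inside Huang's window.

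The paper's proof runs the iteration in the opposite direction. From $\PCSP(K_{b(n)},K_k)$ hard it deduces $\PCSP(K_n,K_{\lfloor\log k\rfloor})$ hard (combining Lemma~\ref{lem:red} with the relaxation $b(\lfloor\log k\rfloor)\leq k$). Applied repeatedly to Huang's bound, each step replaces a gap of shape $\exp^{(i+1)}(\alpha\log^{(i)} n)$ by one of shape $\exp^{(i)}(\alpha\log^{(i-1)} n)$, so after a fixed number of steps the sub-exponential $2^{n^{1/3}}$ becomes the single-exponential $b(n)-1$, yielding NP-hardness of $\PCSP(K_n,K_{b(n)-1})$ for all sufficiently large $n$. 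Only at that point does the bootstrap you describe become valid: once $\PCSP(K_m,K_{b(m)-1})$ is known hard for large $m$, your iteration (together with $b(b(k)-1)\leq b(b(k))-1$) pushes the conclusion down to every $n\geq 4$. So your argument correctly captures the final descent, but skips the essential asymptotic refinement that makes the base case available.
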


In comparison, the previous best result relevant for all integers $n$ was proved
by Bul\'in, Krokhin, and Opr\v{s}al~\cite{BulinKO18}: $\PCSP(K_n, K_{2n-1})$ is NP-hard for all $n\geq 3$.
For $n=3$ we are unable to obtain any results; for $n=4$ the new bound $\B{n}-1=5$ is worse than $2n-1=7$, while for $n=5$ the two bounds coincide at~9.
However, already for $n=6$ we improve the bound from $2n-1=11$ to $\B{n}-1=19$.

\subsection{Left-hardness and topology}

In Section~\ref{sec:functors}, we focus on left-hardness.
The main idea behind Krokhin and Opr\v{s}al's~\cite{KrokhinO19} proof that $K_3$ is left-hard is simple to state.
To prove that $\PCSP(C_n,H)$ is NP-hard for all odd $n$, the algebraic framework of~\cite{BulinKO18} shows that it is sufficient to establish certain properties of \emph{polymorphisms}: homomorphisms $f \colon C_n^{L} \to H$ for $L \in \NN$ (where $G^L=G \times \dots \times G$ is the $L$-fold tensor product\footnote{
	The \emph{tensor} (or \emph{categorical}) \emph{product} $G \times H$ of graphs $G,H$ has pairs $(g,h) \in V(G) \times V(H)$ as vertices and
  $(g,h)$ is adjacent to $(g',h')$ whenever $g$ is adjacent to $g'$ (in $G$) and
  $h$ is adjacent to $h'$ (in $H$).
}).
For large $n$ the graph $C_n^{L}$ looks like an $L$-torus: an $L$-fold product of circles, so the pertinent information about $f$ seems to be subsumed by its topological properties (such as \emph{winding numbers}, when $H$ is a cycle).
We refer to~\cite{KrokhinO19} for further details, but this general principle
applies to any $H$ and in fact we prove (in Theorem~\ref{thm:topoMain} below)
that whether $H$ is left-hard or not depends \emph{only} on its topology.\looseness=-1

The topology we associate with a graph is its \emph{box complex}.
See Appendix~\ref{app:topo} for formal definitions and statements.
Intuitively, the box complex $\BBox{H}$ is a topological space built from $H$ by taking the tensor product $H \times K_2$ and then gluing faces to each four-cycle and more generally, gluing higher-dimensional faces to complete bipartite subgraphs.
The added faces ensure that the box complex of a product of graphs is the same as the product space of their box complexes:
thanks to this, $\BBox{C_n^{L}}$ is indeed equivalent to the $L$-torus.
The product with $K_2$ equips the box complex with a symmetry that swaps the two sides of $H \times K_2$.
This make the resulting space a $\ZZ_2$-space: a topological space together with a continuous involution from the space to itself, which we denote simply as $-$.
A \emph{$\ZZ_2$-map} between two $\ZZ_2$-spaces is a continuous function which preserves this symmetry: $f(-x)=-f(x)$.
This allows to concisely state that a given map is ``non-trivial'' (in contrast, there is always \emph{some} continuous function from one space to another: just map everything to a single point).
The main use of the box complex is then the statement that every graph homomorphism $G \to H$ induces a $\ZZ_2$-map from $\BBox{G}$ to $\BBox{H}$.
Graph homomorphisms can thus be studied with tools from algebraic topology.

The classical example of this is an application of the Borsuk-Ulam theorem:
there is no $\ZZ_2$-map from $\Sphere^n$ to $\Sphere^m$ for $n > m$,
where $\Sphere^n$ denotes the $n$-dimensional sphere with antipodal symmetry.
Hence if $G$ and $H$ are graphs such that $\BBox{G}$ and $\BBox{H}$ are equivalent to $\Sphere^n$ and $\Sphere^m$, respectively,
then there can be no graph homomorphism $G\to H$.
See Figure~\ref{fig:box}.

This is essentially the idea in Lov\'{a}sz' proof~\cite{Lovasz78} of Kneser's conjecture that the chromatic number of Kneser graphs $KG(n,k)$ is $n-2k+2$.
In the language of box complexes, the proof amounts to showing that the box complex of a clique $K_c$ is equivalent to $\Sphere^{c-2}$,  while the box complex of a Kneser graph contains $\Sphere^{n-2k}$.
We refer to~\cite{matousek2008using} for an in-depth, yet accessible reference.

We show that the left-hardness of a graph depends only on the topology of its box complex (in fact, it is only important what $\ZZ_2$-maps it admits, which is significantly coarser than $\ZZ_2$-homotopy equivalence):

\begin{theorem}[\textbf{Main Result \#2}]\label{thm:topoMain}
	If $H$ is left-hard and $H'$ is a graph such that $\BBox{H'}$ admits a $\ZZ_{2}$-map to $\BBox{H}$, then $H'$ is left-hard.
\end{theorem}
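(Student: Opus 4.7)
The plan is to show that $\PCSP(C_n, H')$ is NP-hard for arbitrarily large odd $n$; by the observations made just before Conjecture~\ref{conj:main} (every non-bipartite graph receives a homomorphism from an odd cycle, combined with homomorphic relaxation) this is enough to conclude that $H'$ is left-hard. Since $H$ is itself left-hard, $\PCSP(C_m, H)$ is NP-hard for arbitrarily large odd $m$, so it suffices to produce, for each sufficiently large odd $n$ and a suitable odd $m = m(n) \to \infty$, a polynomial-time reduction from $\PCSP(C_m, H)$ to $\PCSP(C_n, H')$.

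The central device would be the adjoint-functor principle advertised in the introduction: adjoint graph functors give reductions between PCSPs. Concretely, if $\Lambda$ is a polynomial-time computable graph functor with right adjoint $\Gamma$, in the sense that $\Lambda(X) \to Y$ iff $X \to \Gamma(Y)$ naturally in $X$ and $Y$, then $I \mapsto \Lambda(I)$ is a log-space reduction from $\PCSP(\Gamma(G_0), \Gamma(H_0))$ to $\PCSP(G_0, H_0)$ for any $G_0 \to H_0$. Applying this with $G_0 = C_n$ and $H_0 = H'$ yields a reduction $\PCSP(\Gamma(C_n), \Gamma(H')) \to \PCSP(C_n, H')$. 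Composing with homomorphic relaxation, provided $\Gamma(H') \to H$ and $C_m \to \Gamma(C_n)$, gives
\[
  \PCSP(C_m, H) \;\longrightarrow\; \PCSP(\Gamma(C_n), \Gamma(H')) \;\longrightarrow\; \PCSP(C_n, H').
\]
The whole task thus reduces to identifying an adjoint pair $(\Lambda, \Gamma)$ whose action on box complexes is compatible with the topological hypothesis, and then extracting the combinatorial homomorphism $\Gamma(H') \to H$ from the given $\ZZ_2$-map $\BBox{H'} \to \BBox{H}$.

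For the choice of functor I would use a subdivision-type construction, closely related to the arc-digraph construction used in Section~\ref{sec:arc}: informally, $\Lambda$ replaces each edge by a long odd-length structure, so that its effect on box complexes is, up to $\ZZ_2$-homotopy equivalence, a simplicial refinement whose mesh can be made arbitrarily fine by iterating. The right adjoint $\Gamma$ then encodes walks of a fixed parity in the target, and $\BBox{\Gamma(H')}$ stays $\ZZ_2$-homotopy equivalent to $\BBox{H'}$. The desired map $\Gamma(H') \to H$ would be obtained by a $\ZZ_2$-equivariant simplicial approximation argument: after sufficient refinement on the source, the given $\ZZ_2$-map $\BBox{H'} \to \BBox{H}$ can be taken to be simplicial, and the combinatorial structure of the box complex lets one read an equivariant simplicial map back off as a graph homomorphism.

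The principal obstacle lies precisely in this bridge between topology and combinatorics. The box complex functor forgets a great deal of combinatorial information, so a $\ZZ_2$-map $\BBox{H'} \to \BBox{H}$ is in general \emph{not} induced by any graph homomorphism $H' \to H$; the best one can hope for is that after applying a fine enough parity-preserving subdivision functor, the topological map lifts to a combinatorial one. Making this lifting equivariant, ensuring that on the cycle side $\Gamma(C_n)$ still contains a sufficiently long odd cycle $C_m$, and keeping all auxiliary parameters polynomially controlled is where the bulk of the work lies. This is the abstraction of the explicit winding-number analysis performed for the case $\BBox{K_3} \simeq \Sphere^1$ by Krokhin and Opr\v{s}al~\cite{KrokhinO19} in their proof of Theorem~\ref{thm:K3}.
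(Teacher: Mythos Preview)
Your high-level strategy---use an adjoint pair $(\Lambda,\Gamma)$ so that $I\mapsto\Lambda(I)$ reduces $\PCSP(\Gamma C_n,\Gamma H')$ to $\PCSP(C_n,H')$, then obtain $\Gamma H'\to H$ from the topological hypothesis by simplicial approximation---is exactly the paper's approach. But you have picked the wrong pair. With $\Lambda=\Lambda_k$ (odd subdivision) and $\Gamma=\Gamma_k$ ($k$-th power), the claim that ``$\BBox{\Gamma(H')}$ stays $\ZZ_2$-homotopy equivalent to $\BBox{H'}$'' is simply false: for instance $\Gamma_3 C_5=K_5$, so the box complex jumps from $\Sphere^1$ to $\Sphere^3$. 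Consequently there is no reason to expect a homomorphism $\Gamma_k H'\to H$ from a $\ZZ_2$-map $\BBox{H'}\to\BBox{H}$, and your simplicial-approximation step would fail here.

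The fix is to shift one step to the right in the chain of adjoints $\Lambda_k\dashv\Gamma_k\dashv\Omega_k$: use the pair $(\Gamma_k,\Omega_k)$ instead. The left adjoint $\Gamma_k$ is still log-space computable, and it is the \emph{right} adjoint $\Omega_k$ that behaves like barycentric subdivision on box complexes, preserving the $\ZZ_2$-homotopy type while refining the mesh. Equivariant simplicial approximation (packaged as Theorem~\ref{thm:approx}, proved in~\cite{Wrochna17b}) then gives precisely $\Omega_k H'\to H$ for some odd $k$. With this correction your reduction goes through verbatim: $C_{kn}=\Lambda_k C_n\to\Omega_k C_n$ supplies the long odd cycle on the left, and $\Omega_k H'\to H$ handles the right side. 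The paper abstracts the remaining bookkeeping as Lemma~\ref{lem:left-hard} (``$\Omega_k H'$ left-hard $\Leftrightarrow$ $H'$ left-hard''), after which the proof of Theorem~\ref{thm:topoMain} is three lines.
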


\begin{figure}[t!]
	\centering
	\makebox[\textwidth][c]{
		\tikzset{ %
	L/.style={circle,draw=black,fill=white,inner sep=0pt,minimum size=4pt},
	R/.style={circle,fill=black,inner sep=0pt,minimum size=4pt},
	v/.style={circle,draw=black!75,inner sep=0pt,minimum size=8pt},
	h/.style={draw=black!50},
}

\begin{tikzpicture}[scale=1.1]
\begin{scope}[shift={(-5.2,-1.3)}]
	\newcommand\zscale{1.3}
	\coordinate (z) at ($\zscale*(0.6,0.4)$);
	\fill[fill=blue,opacity=0.1] (0,2)--(2,2)--(2,0)--(0,0)--cycle; 
	\fill[fill=blue,opacity=0.2] (0,2)--($(0,2)+(z)$)--($(2,2)+(z)$)--(2,2)--cycle; 
	\fill[fill=blue,opacity=0.15] (2,2)--($(2,2)+(z)$)--($(2,0)+(z)$)--(2,0)--cycle; 
	\fill[fill=blue,opacity=0.1] (0,0)--($(0,0)+(z)$)--($(2,0)+(z)$)--(2,0)--cycle; 
	\fill[fill=blue,opacity=0.05] (0,0)--($(0,0)+(z)$)--($(0,2)+(z)$)--(0,2)--cycle; 
	\node[L,label={135:$1$}] (v1L) at (0,2) {};
	\node[R,label={135:$2$}] (v2R) at (2,2) {};
	\node[L,label={135:$3$}] (v3L) at (2,0) {};
	\node[R,label={135:$4$}] (v4R) at (0,0) {};
	\draw (v1L)--(v2R)--(v3L)--(v4R)--(v1L);
	\node[R,label={135:\small$1$}] (v1R) at ($(2,0)+(z)$) {};
	\node[L,label={135:\small$2$}] (v2L) at ($(0,0)+(z)$) {};
	\node[R,label={135:\small$3$}] (v3R) at ($(0,2)+(z)$) {};
	\node[L,label={135:\small$4$}] (v4L) at ($(2,2)+(z)$) {};
	\draw[h] (v1R)--(v2L)--(v3R)--(v4L)--(v1R);
	\draw (v1L)--(v3R) (v2R)--(v4L) (v3L)--(v1R) (v4R)--(v2L);
	\node at (1.2,-0.5) {$\BBox{K_4}$};
\end{scope}
\begin{scope}[shift={(0,0)}]
	\node at (-1.8,0) {\Large$\zeq$};
	\shade[ball color = blue!40, opacity = 0.4] (0,0) circle (1.3);
	\node at (0,-1.7) {$\Sphere^2$};
\end{scope}
\begin{scope}[shift={(3.4,0)}]
	\node at (-1.6,0) {\Large$\not\to_{\ZZ_2}$};
	\draw (0,0) circle (1.1);
	\node at (0,-1.7) {$\Sphere^1$};
	\node at (1.6,0) {\Large$\zeq$};
\end{scope}

\begin{scope}[shift={(7,0)}]
	\foreach \i in {0,...,6} 
	{
		\node (u\i) at (90-360*\i/7 : 1.3)  {};
		\node (v\i) at (-90-360*\i/7 : 1.3)  {};
	}
	\foreach \i in {0,...,6}  
	{
		\pgfmathtruncatemacro\im{mod(\i+6,7)};
		\pgfmathtruncatemacro\ipp{mod(\i+2,7)};
		\pgfmathtruncatemacro\ippp{mod(\i+3,7)};
		\fill[fill=blue,opacity=0.2] (u\i.center) -- (v\ippp.center) -- (u\im.center) -- (v\ipp.center) -- cycle;
		\fill[fill=blue,opacity=0.2] (v\i.center) -- (u\ippp.center) -- (v\im.center) -- (u\ipp.center) -- cycle;
	}
	\foreach \i in {0,...,6} 
	{
		\node[L,label={{90-360*\i/7}:\small$\i$}] at (u\i) {};
		\node[R,label={{-90-360*\i/7}:\small$\i$}] at (v\i) {};
	}
	\foreach \i in {0,...,6}  
	{
		\pgfmathtruncatemacro\im{mod(\i+6,7)};
		\pgfmathtruncatemacro\ipp{mod(\i+2,7)};
		\pgfmathtruncatemacro\ippp{mod(\i+3,7)};
		\draw (u\i)--(v\ipp) (u\i)--(v\ippp);
		\draw (v\i)--(u\ipp) (v\i)--(u\ippp);

	}
	\node at (0,-1.9) {$\BBox{K_{7/2}}$};
\end{scope}

\begin{scope}[shift={(-1.2,-3.2)}]
	\node (u1) at (10:1) {\small$4$};
	\node (u2) at (-80:1) {\small$1$};
	\node (u3) at (190:1) {\small$2$};
	\node (u4) at (100:1) {\small$3$};
	\draw (u1)--(u2) (u1)--(u3) (u1)--(u4) (u2)--(u3) (u2)--(u4) (u3)--(u4);
	\node at (1,-1) {$K_4$};
\end{scope}
\begin{scope}[shift={(1.8,-3.2)}]
	\node at (0,0) {\Large$\not\to$};
\end{scope}
\begin{scope}[shift={(4.8,-3.2)}]
	\foreach \i in {0,...,6} 
	{
		\node (u\i) at (90-360*\i/7 : 1.1)  {\small$\i$};
	}
	\foreach \i in {0,...,6} 
	{
		\pgfmathtruncatemacro\ipp{mod(\i+2,7)};
		\pgfmathtruncatemacro\ippp{mod(\i+3,7)};
		\draw (u\i)--(u\ipp) (u\i)--(u\ippp);
	}
	\node at (-1.3,-1) {$K_{7/2}$};
\end{scope}
\end{tikzpicture}
	}
	\caption{The box complex of $K_4$ is the hollow cube (informally speaking; the drawing skips some irrelevant faces). It is equivalent ($\ZZ_2$-homotopy equivalent) to the sphere.
		The box complex of the circular clique $K_{7/2}$ is equivalent to the circle.
		Thus there cannot be a homomorphism from $K_4$ to $K_{7/2}$ (of course in this case it is easier to show this directly).}
	\label{fig:box}
\end{figure}
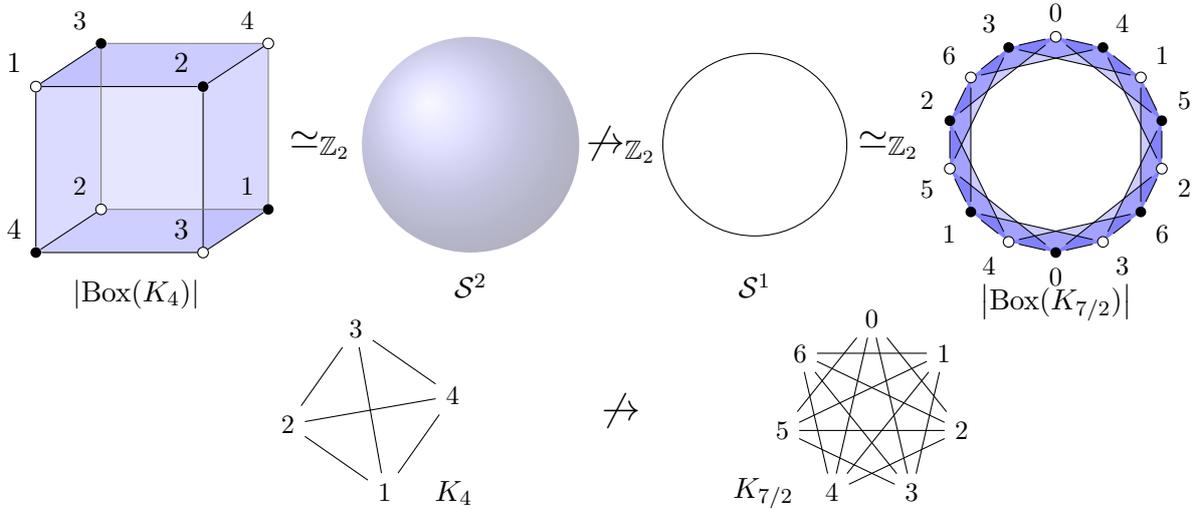

Using Krokhin and Opr\v{s}al's result that $K_3$ is left-hard (Theorem~\ref{thm:K3}), since $\BBox{K_3}$ is the circle $\Sphere^1$ (up to $\ZZ_2$-homotopy equivalence), we immediately obtain the following:
\begin{corollary}\label{cor:S1}
	Every graph $H$ for which $\BBox{H}$ admits a $\ZZ_2$-map to $\mathcal{S}^1$ is left-hard.
\end{corollary}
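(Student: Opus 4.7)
The plan is to deduce this as an immediate consequence of Theorem~\ref{thm:topoMain} applied to the target $H = K_3$, combined with Theorem~\ref{thm:K3} and the classical identification of the box complex of a triangle.

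First, I would invoke Theorem~\ref{thm:K3} to record that $K_3$ itself is left-hard. Then, to apply Theorem~\ref{thm:topoMain} with this choice of $H$, it remains only to convert the hypothesis ``$\BBox{H'}$ admits a $\ZZ_2$-map to $\Sphere^1$'' into ``$\BBox{H'}$ admits a $\ZZ_2$-map to $\BBox{K_3}$''. This is purely a topological step and does not require revisiting the combinatorial or algebraic machinery of the earlier sections.

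The bridge between the two formulations is the fact that $\BBox{K_3} \zeq \Sphere^1$, which I would verify directly from the definition of box complex (given in Appendix~\ref{app:topo}). Concretely, $K_3 \times K_2$ is isomorphic to the $6$-cycle $C_6$, and the higher-dimensional cells of the box complex are indexed by non-trivial complete bipartite subgraphs; since $K_3$ is triangle-free when restricted to any bipartition and contains no $K_{2,2}$, no cell of dimension $\geq 2$ is glued in. The resulting $1$-dimensional complex is a hexagon, equipped with the $\ZZ_2$-action that swaps the two sides of the product with $K_2$; this involution has no fixed points and acts antipodally, yielding a $\ZZ_2$-homeomorphism to $\Sphere^1$.

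With this identification in hand, any given $\ZZ_2$-map $f \colon \BBox{H'} \to \Sphere^1$ can be post-composed with a $\ZZ_2$-homotopy equivalence $\Sphere^1 \to \BBox{K_3}$ to produce a $\ZZ_2$-map $\BBox{H'} \to \BBox{K_3}$, and Theorem~\ref{thm:topoMain} then yields that $H'$ is left-hard. The only potentially delicate point of the argument is the identification $\BBox{K_3} \zeq \Sphere^1$, but this is a standard computation in topological combinatorics (see, e.g., \cite{matousek2008using}) and presents no real obstacle; all the substantive work has been discharged by Theorems~\ref{thm:K3} and~\ref{thm:topoMain}.
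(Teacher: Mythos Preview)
Your approach is exactly the paper's: invoke Theorem~\ref{thm:K3} for the left-hardness of $K_3$, use that $\BBox{K_3}\zeq\Sphere^1$, and apply Theorem~\ref{thm:topoMain}. The paper states this in one sentence preceding the corollary.

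One small slip in your verification of $\BBox{K_3}\zeq\Sphere^1$: you claim ``no cell of dimension $\geq 2$ is glued in'' because $K_3$ contains no $K_{2,2}$. That is not quite right under the definition in Appendix~\ref{app:topo}: faces of $\Bx{G}$ are vertex sets of complete bipartite subgraphs of $G\times K_2$ with both sides non-empty, so a $K_{1,2}$ already contributes a $2$-simplex. For $K_3$, sets like $\{(1,0),(2,1),(3,1)\}$ are genuine $2$-dimensional faces. The correct statement (see Lemma~\ref{lem:boxes}(ii) and~(iv)) is that these extra faces \emph{collapse}, leaving a complex $\ZZ_2$-homotopy equivalent to the hexagon $C_6=K_3\times K_2$, hence to $\Sphere^1$. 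Since you also cite~\cite{matousek2008using} for this identification, the slip does not affect the validity of your argument.
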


Two examples of such graphs (other than 3-colourable graphs) are loop-less square-free graphs and circular cliques $K_{p/q}$ with $2<\frac{p}{q}<4$ (see Lemma~\ref{lem:boxes} for proofs), which we introduce next.
\emph{Square-free graphs} are graphs with no cycle of length exactly 4.
In particular, this includes all graphs of girth at least 5 and hence graphs of arbitrarily high chromatic number (but incomparable to $K_4$ and larger cliques, in terms of the homomorphism $\to$ relation).
The \emph{circular clique} $K_{p/q}$ (for $p,q\in \NN, \frac{p}{q}>2$) is the graph with vertex set $\ZZ_p$ and an edge from $i$ to every integer at least $q$ apart: $i+q, i+q+1, \dots, i+p-q$.
They generalise cliques $K_n = K_{n/1}$ and odd cycles $C_{2n+1} \simeq K_{(2k+1)/k}$.
Their basic property is that $K_{p/q} \to K_{p'/q'}$ if and only if $\frac{p}{q} \leq \frac{p'}{q'}$.
Thus circular cliques refine the chain of cliques and odd cycles, corresponding to rational numbers between integers.
For example:
$$ \dots \to C_7 \to C_5 \to C_3 = K_3 \to K_{7/2} \to K_4 \to K_{9/2} \to K_5 \to \dots $$
The \emph{circular chromatic number} $\chi_c(G)$ is the infimum over $\frac{p}{q}$ such that $G \to K_{p/q}$.
Therefore:

\begin{corollary}
	For every $2<r \leq r'<4$, it is NP-hard to distinguish graphs $G$ with $\chi_c(G) \leq r$ from those with $\chi_c(G) > r'$.
\end{corollary}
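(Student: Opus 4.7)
The plan is to reduce from $\PCSP(K_{p/q}, K_{p'/q'})$ for suitable rationals $p/q$ and $p'/q'$, which will be NP-hard by Corollary~\ref{cor:S1} together with Lemma~\ref{lem:boxes}. Given $2 < r \leq r' < 4$, by density of $\QQ$ (or using $r,r'$ themselves when rational) I pick $p/q, p'/q' \in \QQ$ with $2 < p/q \leq r$ and $r' \leq p'/q' < 4$.

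Since $2 < p'/q' < 4$, Lemma~\ref{lem:boxes} provides a $\ZZ_2$-map $\BBox{K_{p'/q'}} \to \Sphere^1$, so Corollary~\ref{cor:S1} yields that $K_{p'/q'}$ is left-hard. The graph $K_{p/q}$ is non-bipartite (because $p/q > 2$ forces $\chi(K_{p/q}) \geq \chi_c(K_{p/q}) > 2$) and admits a homomorphism to $K_{p'/q'}$ (because $p/q \leq p'/q'$), so the definition of left-hardness immediately gives NP-hardness of $\PCSP(K_{p/q}, K_{p'/q'})$. The reduction to the distinguishing problem is then simply the identity on input graphs $I$: a YES instance of the PCSP satisfies $I \to K_{p/q}$, whence $\chi_c(I) \leq p/q \leq r$, while a NO instance satisfies $I \not\to K_{p'/q'}$, whence $\chi_c(I) > p'/q' \geq r'$. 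Hence any solver for the distinguishing problem decides the PCSP correctly, and NP-hardness transfers.

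The only nontrivial ingredient is Lemma~\ref{lem:boxes}, i.e.\ the construction of a $\ZZ_2$-map $\BBox{K_{p/q}} \to \Sphere^1$ whenever $2 < p/q < 4$; I expect this to use the natural $\ZZ_p$-action on the vertex set of $K_{p/q}$ to produce an equivariant winding map onto the circle of appropriate winding number. Given that lemma, the corollary itself is a direct deduction, so the main obstacle is encapsulated in the separate proof of Lemma~\ref{lem:boxes} rather than here.
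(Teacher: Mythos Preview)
Your proof is correct and matches the paper's intended argument: the paper presents this corollary with only the word ``Therefore'' after noting (via Lemma~\ref{lem:boxes} and Corollary~\ref{cor:S1}) that circular cliques $K_{p/q}$ with $2<\tfrac{p}{q}<4$ are left-hard, and your write-up simply makes that deduction explicit. The only cosmetic point is that you need not invoke attainment of the infimum for $\chi_c$: choosing $p'/q'$ strictly between $r'$ and $4$ already gives $\chi_c(I)\geq p'/q'>r'$ from $I\not\to K_{p'/q'}$.
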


In this sense, we conclude that $K_{4-\eps}$ is left-hard, thus extending the result for $K_3$.
However, the closeness to $K_4$ is only deceptive and no conclusions on 4-colourings follow.
For $K_4$, since the box complex is equivalent to the standard 2-dimensional sphere, we can at least conclude that to prove left-hardness of $K_4$ it would be enough to prove left-hardness of any other graph with the same topology:
these include all non-bipartite quadrangulations of the projective plane, in
particular the Gr\"{o}tzsch graph, 4-chromatic generalised Mycielskians, and
4-chromatic Schrijver graphs~\cite{matousek2008using,BjornerL03}.
In this sense, the exact geometry of $K_4$ is irrelevant.
However, the fact that it is a finite graph, with only finitely many possible maps from $C_n^L$ for any fixed $n,L$ should still be relevant, as it is for $K_3$.
It is also quite probable that any proof for a ``spherical'' graph would apply just as well to $K_4$, where the proof could be just notationally much simpler.

\bigskip

Finally, in Appendix~\ref{app:topo} we rephrase Krokhin and Opr\v{s}al's~\cite{KrokhinO19} proof of Theorem~\ref{thm:K3} in terms of the box complex.
In particular, left-hardness of $K_3$ follows from some general principles and the fact that $\BBox{K_3}$ is a circle.
The proof also extends to all graphs $H$ such that $\BBox{H}$ admits a $\ZZ_2$-map to $\Sphere^1$,
giving an independent, self-contained proof of Corollary~\ref{cor:S1} (and Theorem~\ref{thm:K3}  in particular).

The general principle is that a homomorphism $C_n^L \to H$ induces a $\ZZ_2$-map $(\Sphere^1)^L \to \BBox{H}$, in a way that preserves minors (identifications within the $L$ variables) and automorphisms. (In the language of category theory, the box complex is a functor from the category of graphs to that of $\ZZ_2$-spaces, and the functor preserves products).
In turn, the $\ZZ_2$-map induces a group homomorphism between the fundamental group of $(\Sphere^1)^L$, which is just $\ZZ^L$, and that of $\BBox{H}$.
This is essentially the map $\ZZ^L \to \ZZ$ obtained in~\cite{KrokhinO19}.
While this rephrasing requires a bit more technical definitions, the main advantage is that it allows to replace a tedious combinatorial argument (about winding numbers preserving minors) with straightforward statements about preserving products.

\subsection{Methodology -- adjoint functors}

While the proof of the first main result is given elementarily in Section~\ref{sec:arc}, it fits together with the second main result in a much more general pattern.
The underlying principle is that pairs of graph constructions satisfying a simple duality condition give reductions between PCSPs.
To introduce them, let us consider a concrete example.
For a graph $G$ and an odd integer $k$, $\Lambda_k G$ is the graph obtained by subdividing each edge into a path of $k$ edges;
$\Gamma_k G$ is the graph obtained by taking the $k$-th power of the adjacency matrix (with zeroes on the diagonal); equivalently, the vertex set remains unchanged and two vertices are adjacent if and only if there is a walk of length exactly $k$ in $G$.
(For example $\Gamma_3 G$ has loops if $G$ has triangles).

We say a graph construction $\Lambda$ (a function from graphs to graphs) is a \emph{thin (graph) functor}
if $G\to H$ implies $\Lambda G \to \Lambda H$ (for all $G,H$).
A pair of thin functors $(\Lambda,\Gamma)$ is a \emph{thin adjoint pair} if
\begin{center}
	$\Lambda G \to H$ if and only if $G \to \Gamma H$.
\end{center}
We call $\Lambda$ the \emph{left adjoint} of $\Gamma$ and $\Gamma$ the \emph{right adjoint} of $\Lambda$.

For all odd $k$, $(\Lambda_k,\Gamma_k)$ are a thin adjoint pair.
For example, since $\Gamma_3 C_5 = K_5$, we have $G \to K_5$ if and only if $\Lambda_k G \to C_5$.
This is a basic reduction that shows the NP-hardness of $C_5$-colouring; in
fact adjointness of various graph construction is the principal tool behind the
original proof of Hell and Ne\v{s}et\v{r}il's theorem (characterising the complexity of
$H$-colouring)~\cite{HellN90}.

In category theory, there is a stronger and more technical notion of (non-thin) \emph{functors} and \emph{adjoint pairs}.
A thin graph functor is in fact a functor in the \emph{thin category of graphs}, that is, the category whose objects are graphs,
and with at most one morphism from one graph to another, indicating whether a homomorphism exists or not.
In other words, we are only interested in the existence of homomorphisms, and not in their identity and how they compose.
Equivalently, we look only at the preorder of graphs by the $G \to H$ relation (we can also make this a poset by considering graphs up to homomorphic equivalence).
In order-theoretic language, thin functors are just order-preserving maps, while thin adjoint functors are known as Galois connections.
We prefer the categorical language as most of the constructions we consider are in fact functors (in the non-thin category of graphs), which is important for connections to the algebraic framework of~\cite{BulinKO18}, as we discuss in Section~\ref{subsec:category}.
While unnecessary for our main results, we believe it may be important to understand these deeper connections to resolve the conjectures completely.

Thin adjoint functors give us a way to reduce one PCSP to another. We say that a graph functor $\Gamma$ is log-space computable if, given a graph $G$, $\Gamma G$ can be computed in logarithmic space in the size of $G$.

\begin{observation}\label{obs:adj1}
	Let $\Lambda,\Gamma$ be thin adjoint graph functors and let $\Lambda$ be log-space computable.
	Then $\PCSP(G, \Gamma H)$ reduces to $\PCSP(\Lambda G, H)$
	in log-space, for all graphs $G,H$.\looseness=-1
\end{observation}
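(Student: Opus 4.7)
The plan is to use $\Lambda$ itself as the reduction: given an input graph $I$ to $\PCSP(G,\Gamma H)$, output $\Lambda I$ as the input to $\PCSP(\Lambda G, H)$. Log-space computability of the reduction is then immediate from the assumption on $\Lambda$. First, I should check that $\PCSP(\Lambda G, H)$ is even well-defined, i.e., that $\Lambda G \to H$; this follows from $G \to \Gamma H$ (the hypothesis for $\PCSP(G,\Gamma H)$) and the adjointness condition applied in one direction.

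Next I verify completeness (YES maps to YES). If $I$ is a YES instance, meaning $I \to G$, then because $\Lambda$ is a thin graph functor it preserves the existence of homomorphisms, so $\Lambda I \to \Lambda G$. Hence $\Lambda I$ is a YES instance of $\PCSP(\Lambda G, H)$.

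Then I verify soundness (NO maps to NO) by contrapositive: suppose $\Lambda I$ is not a NO instance of $\PCSP(\Lambda G, H)$, meaning $\Lambda I \to H$. Applying the adjoint pair condition $\Lambda I \to H \iff I \to \Gamma H$ in the other direction yields $I \to \Gamma H$, so $I$ was not a NO instance of $\PCSP(G,\Gamma H)$ to begin with. Contrapositively, if $I \not\to \Gamma H$ then $\Lambda I \not\to H$, as required.

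There is essentially no obstacle here; the statement is just the unwinding of the adjointness definition together with functoriality. The only subtle point worth emphasising is that the two directions of the thin adjunction are used for different purposes: the direction $\Lambda G \to H \Rightarrow G \to \Gamma H$ (together with functoriality of $\Lambda$) is used for completeness, while the converse direction $G \to \Gamma H \Rightarrow \Lambda G \to H$, applied to $I$ in place of $G$, is used for soundness. In particular, merely assuming $\Lambda$ to be a thin functor would suffice for completeness, but both halves of the thin adjunction are needed for soundness.
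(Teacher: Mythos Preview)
Your proof is correct and is essentially the paper's own argument: map an instance $I$ to $\Lambda I$, use functoriality for completeness, and use adjointness for soundness. One small correction to your closing commentary: completeness uses only functoriality of $\Lambda$, soundness uses only the single direction $\Lambda I \to H \Rightarrow I \to \Gamma H$, and the \emph{other} direction $G \to \Gamma H \Rightarrow \Lambda G \to H$ is what you invoked for well-definedness of $\PCSP(\Lambda G,H)$---so your final paragraph has the roles of the two directions reversed.
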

\begin{proof}
	Let $F$ be an instance of $\PCSP(G, \Gamma H)$.
	Then $\Lambda F$ is an appropriate instance of $\PCSP(\Lambda G, H)$.
	Indeed, if $F \to G$, then $\Lambda F \to \Lambda G$ (because $\Lambda$ is a thin functor).
	If $\Lambda F \to H$, then $F \to \Gamma H$ by adjointness.
\end{proof}

In some cases, a thin functor $\Gamma$ that is a thin right adjoint in a pair $(\Lambda, \Gamma)$ is also a thin left adjoint in a pair $(\Gamma,\Omega)$.
This allows to get a reduction in the opposite direction:

\begin{observation}\label{obs:adj2}
	Let $(\Lambda,\Gamma)$ and $(\Gamma,\Omega)$ be thin adjoint pairs of functors.
	Then $\PCSP(\Gamma G, H)$ and $\PCSP(G, \Omega H)$ are log-space equivalent
	(assuming $\Lambda$ and $\Gamma$ are log-space computable).
\end{observation}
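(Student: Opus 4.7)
The plan is to prove each of the two reductions separately, and for each to simply exhibit the functor to apply to instances, verifying both the YES and NO conditions using adjointness. The only slightly subtle ingredient beyond Observation~\ref{obs:adj1} is the \emph{unit} of an adjunction in the thin setting, namely the fact that $F \to \Gamma \Lambda F$ for any graph $F$. This follows immediately from $(\Lambda,\Gamma)$: applying the equivalence $\Lambda G \to H \iff G \to \Gamma H$ with $G = F$ and $H = \Lambda F$, the trivial identity $\Lambda F \to \Lambda F$ yields $F \to \Gamma \Lambda F$.

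First I would handle the direction $\PCSP(G,\Omega H) \leq_{\log} \PCSP(\Gamma G, H)$: this is just Observation~\ref{obs:adj1} applied verbatim to the adjoint pair $(\Gamma,\Omega)$, using that $\Gamma$ is log-space computable. An instance $F$ is mapped to $\Gamma F$; the YES case uses thinness of $\Gamma$ and the NO case uses the adjointness $\Gamma F \to H \iff F \to \Omega H$.

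Second, for $\PCSP(\Gamma G, H) \leq_{\log} \PCSP(G,\Omega H)$, I would map an instance $F$ to $\Lambda F$, which is log-space computable by assumption. For the YES case, if $F \to \Gamma G$, then adjointness $(\Lambda,\Gamma)$ gives $\Lambda F \to G$, so $\Lambda F$ is a YES instance of $\PCSP(G,\Omega H)$. For the NO case I argue contrapositively: if $\Lambda F \to \Omega H$, then adjointness $(\Gamma,\Omega)$ gives $\Gamma \Lambda F \to H$; composing with the unit $F \to \Gamma \Lambda F$ yields $F \to H$, so $F$ was a YES instance of $\PCSP(\Gamma G, H)$. Equivalently, if $F \not\to H$ then $\Lambda F \not\to \Omega H$, as required.

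Essentially there is no real obstacle here: the whole content is the observation that when $\Gamma$ sits as a right adjoint to $\Lambda$ \emph{and} as a left adjoint to $\Omega$, the unit of the first adjunction lets one ``cancel $\Lambda$ against $\Gamma$'' in the NO direction, yielding the converse reduction to Observation~\ref{obs:adj1}. The only care needed is to track which side of each adjunction is being used and to note that log-space computability of $\Lambda$ (used for the second reduction) and of $\Gamma$ (used for the first) are both required, matching the hypothesis of the observation.
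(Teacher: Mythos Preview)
Your proof is correct and follows essentially the same approach as the paper. The only cosmetic difference is in the NO direction of the second reduction: you use the unit $F \to \Gamma\Lambda F$ of the $(\Lambda,\Gamma)$ adjunction together with $(\Gamma,\Omega)$-adjointness to get $\Gamma\Lambda F \to H$, whereas the paper uses $(\Lambda,\Gamma)$-adjointness to get $F \to \Gamma\Omega H$ together with the counit $\Gamma\Omega H \to H$ of the $(\Gamma,\Omega)$ adjunction; these are symmetric one-line arguments and neither buys anything over the other.
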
	
\begin{proof}
	The previous observation gives a reduction from $\PCSP(G, \Omega H)$ to $\PCSP(\Gamma G, H)$. For the other direction, let $F$ be an instance of $\PCSP(\Gamma G, H)$.
	Then $\Lambda F$ is an appropriate instance of $\PCSP(G, \Omega H)$.
	Indeed, if $F \to \Gamma G$, then $\Lambda F \to G$.
	If $\Lambda F \to \Omega H$, then $F \to \Gamma \Omega H \to H$.
	The last arrow follows from the trivial $\Omega H \to \Omega H$.
\end{proof}

The proofs of Observations~\ref{obs:adj1} and~\ref{obs:adj2} of course extend to digraphs and general relational
structures. Note that the above proofs reduce decision problems; they work just as well for search problems: all the thin adjoint pairs $(\Lambda,\Gamma)$ we consider with $\Lambda$ log-space computable also have the property that a homomorphism $\Lambda F \to H$ can be computed from a homomorphism $F \to \Gamma H$ and vice versa, in space logarithmic in the size of $F$.

As we discuss in Section~\ref{sec:functors}, all of our results follow from reductions that are either trivial (homomorphic relaxations) or instantiations of Observation~\ref{obs:adj1}.
While for the first main result we prefer to first give a direct proof that avoids this formalism (in Section~\ref{sec:arc}),
it will be significantly more convenient for the second main result (in Section~\ref{subsec:secondMainProof}),
where we use a certain right adjoint $\Omega_k$ to the $k$-th~power~$\Gamma_k$.

\subsection{Hedetniemi's conjecture}

Another leitmotif of this paper is the application of various tools developed in research around Hedetniemi's conjecture.
A graph $K$ is \emph{multiplicative} if $G \times H \to K$ implies $G \to K$ or $H \to K$.
The conjecture states that all cliques $K=K_n$ are multiplicative.
Equivalently, $\chi(G \times H) = \min(\chi(G),\chi(H))$;
see~\cite{zhu1998survey,Sauer01,Tardif08:survey} for surveys.
In a very recent breakthrough, Shitov~\cite{Shitov19} proved that the conjecture is false (for large $n$).

The arc digraph construction, which we will use in Section~\ref{sec:arc} to
prove Theorem~\ref{thm:asymp}, was originally used by Poljak and Rödl~\cite{PoljakR81} to show certain asymptotic bounds on chromatic numbers of products.
The functors $\Lambda_k,\Gamma_k,\Omega_k$ were applied by
Tardif~\cite{Tardif05:jctb} to show that colourings to circular cliques $K_{p/q}$ ($2<\frac{p}{q}<4$) satisfy the conjecture.
Matsushita~\cite{Matsushita17} used the box complex to show that Hedetniemi's conjecture would imply an analogous conjecture in topology.
This was independently proved by the first author~\cite{Wrochna17b} using $\Omega_k$ functors, while the box complex was used to show that square-free graphs are multiplicative~\cite{Wrochna17}.
See~\cite{FoniokT17} for a survey on applications of adjoint functors to the conjecture.

The refutation of Hedetniemi's conjecture  and the fact that methods for proving the multiplicativity of $K_3$ extend to $K_{4-\eps}$ and square-free graphs, but fail to extend to $K_4$, might suggest that the Conjecture~\ref{conj:main} is doomed to the same fate.
However, it now seems clear that proving multiplicativity requires more than just topology~\cite{TardifW18}: known methods do not even extend to all graphs $H$ such that $\BBox{H}$ is a circle.
This contrasts with Theorem~\ref{thm:topoMain}: topological tools work much more gracefully in the setting of PCSPs.

\section{The arc digraph construction}\label{sec:arc}
Let $D$ be a digraph. The \emph{arc digraph} (or \emph{line digraph}) of $D$, denoted $\delta D$ , is the digraph whose vertices are arcs (directed edges) of $D$ and whose arcs are pairs of the form $((u,v),(v,w))$.
We think of undirected graphs as symmetric relations: digraphs in which for every arc $(u,v)$ there is an arc $(v,u)$.
So for an undirected graph $G$, $\delta(G)$ has $2|E(G)|$ vertices and is a directed graph:
the directions will not be important in this section, but will be in
Section~\ref{subsec:otherExamples}.
The chromatic number of a digraph is the chromatic number of the underlying undirected graph (obtained by symmetrising each arc; so $\chi(D) \leq n$ if and only if $D \to K_n$).

The crucial property of the arc digraph construction is that it decreases the chromatic number in a controlled way (even though it is computable in log-space!).
We include a short proof for completeness. We denote by $[n]$ the set
$\{1,2,\ldots,n\}$.

\begin{lemma}[Harner and Entringer~\cite{HarnerE72}]\label{lem:approxPoljakRodl}
	For any graph $G$:
	\begin{itemize}
		\item if $\chi(\delta(G)) \leq n$, then $\chi(G) \leq 2^n$;
		\item if $\chi(G) \leq \binom{n}{\lfloor n/2\rfloor}$, then $\chi(\delta(G)) \leq n$.
	\end{itemize}
\end{lemma}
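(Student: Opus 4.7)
The plan is to color vertices of $G$ using subsets of $[n]$ for the first direction, and to choose an arc color from a set difference for the second.

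For the first implication, suppose $c\colon V(\delta(G)) \to [n]$ is a proper coloring, i.e.\ an assignment of colors in $[n]$ to arcs of $G$ such that $c(u,v) \neq c(v,w)$ whenever both $(u,v)$ and $(v,w)$ are arcs of $G$. I would define $f\colon V(G) \to 2^{[n]}$ by
\[
  f(v) \defeq \{\,c(u,v) : (u,v) \in \text{arcs}(G)\,\},
\]
the set of colors appearing on arcs entering $v$. To see this is a proper coloring of $G$, take any edge $\{v,w\} \in E(G)$, so both $(w,v)$ and $(u,w)$ (for any neighbour $u$ of $w$) are arcs of $G$. On one hand, $c(w,v) \in f(v)$ by definition. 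On the other hand, for every arc $(u,w)$, the pair $((u,w),(w,v))$ is an arc of $\delta(G)$, so $c(u,w) \neq c(w,v)$, whence $c(w,v) \notin f(w)$. Thus $f(v) \neq f(w)$, and as $f$ takes at most $2^n$ values we get $\chi(G) \leq 2^n$.

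For the second implication, let $N = \binom{n}{\lfloor n/2 \rfloor}$ and identify the $N$ colors with the $\lfloor n/2 \rfloor$-element subsets of $[n]$. Let $c\colon V(G) \to \binom{[n]}{\lfloor n/2 \rfloor}$ be a proper coloring. Since $c(u)$ and $c(v)$ have the same cardinality, whenever they are distinct (in particular, for $\{u,v\} \in E(G)$) the set difference $c(v) \setminus c(u)$ is non-empty. I would then define an arc coloring $d\colon V(\delta(G)) \to [n]$ by letting $d(u,v)$ be, say, the minimum element of $c(v) \setminus c(u)$.

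To verify $d$ is proper, take any arc $((u,v),(v,w))$ of $\delta(G)$. By construction $d(u,v) \in c(v)$ while $d(v,w) \in c(w) \setminus c(v)$, so $d(v,w) \notin c(v)$; in particular $d(u,v) \neq d(v,w)$. Since adjacencies in the underlying undirected graph of $\delta(G)$ are precisely of the head-to-tail form $((u,v),(v,w))$ (the reverse direction $((v,w),(u,v))$ amounts to the same condition with the roles swapped), this shows $\chi(\delta(G)) \leq n$. There is no real obstacle here — the only slightly delicate point is observing that $d$ is well-defined precisely because the colors $c(u),c(v)$ form an antichain in the subset order, which is automatic for sets of equal size, explaining why the extremal choice $\lfloor n/2 \rfloor$ gives the bound $\binom{n}{\lfloor n/2 \rfloor}$ via the LYM/Sperner style size.
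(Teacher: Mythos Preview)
Your proof is correct and essentially identical to the paper's own argument: the first part colours vertices by the set of colours on incoming arcs, and the second part picks an arc colour from a set difference of the endpoint colours. The only cosmetic difference is that you take $d(u,v)\in c(v)\setminus c(u)$ whereas the paper takes it in $c(u)\setminus c(v)$; both choices work for the same reason (one colour lies in $c(v)$ and the other does not).
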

\begin{proof}
	Suppose $\delta G$ has an $n$-colouring.
	Recall that we think of $G$ as a digraph with two arcs $(u,v)$ and $(v,u)$ for each edge $\{u,v\} \in E(G)$; thus $\delta G$ contains two vertices $(u,v)$ and $(v,u)$, as well as (by definition of $\delta$) two arcs from one pair to the other.
	In particular, an $n$-colouring of $\delta G$ gives distinct colours to $(u,v)$ and $(v,u)$.
	Define a $2^n$-colouring $\phi$ of $G$ by assigning to each vertex $v$ the set $\phi(v)$ of colours of incoming arcs.
	For any edge $\{u,v\}$ of $G$, $\phi(v)$ contains the colour $c$ of the arc $(u,v)$.
	Since every arc incoming to $u$ gets a different colour from $(u,v)$, the set $\phi(u)$ does not contain $c$.
	Hence $\phi(u) \neq \phi(v)$, so $\phi$ is a proper colouring.
	
	Suppose $G$ has a $\binom{n}{\lfloor n/2\rfloor}$-colouring $\phi$.
	We interpret colours $\phi(v)$ as $\lfloor n/2\rfloor$-element subsets of $[n]$.
	Define an $n$-colouring of $\delta G$ by assigning to each arc $(u,v)$
	an arbitrary colour in $\phi(u)\setminus \phi(v)$ (the minimum, say).
	Such a colour exists because $\phi(u) \neq \phi(v)$.
	For arcs $(u,v)$, $(v,w)$ clearly $\phi(u)\setminus \phi(v)$ is disjoint from $\phi(v) \setminus \phi(w)$, so this is a proper colouring of $\delta(G)$.\looseness=-1
\end{proof}

The proofs in fact works for digraphs as well.
For graphs, it is not much harder to show an exact correspondence
(we note however that most conclusions only require the above approximate correspondence).
Let us denote $b(n)\defeq \B{n}$.

\begin{lemma}[Poljak and R\"odl~\cite{PoljakR81}]\label{lem:PoljakRodl}
	For a (symmetric) graph $G$, 
  \[\chi(\delta(G)) = \min\{n \mid \chi(G) \leq b(n)\}.\]
	In other words, $\delta G \to K_n$ if and only if $G \to K_{b(n)}$.
\end{lemma}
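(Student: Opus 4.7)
The plan is to refine the Harner--Entringer argument: the second bullet of Lemma~\ref{lem:approxPoljakRodl} already gives the direction ``$G \to K_{b(n)}$ implies $\delta G \to K_n$,'' so only the reverse direction needs work, namely upgrading their $2^n$ bound on $\chi(G)$ to $b(n) = \binom{n}{\lfloor n/2 \rfloor}$. The key observation is that the set-valued colouring $\phi(v) \subseteq [n]$ produced in their proof actually satisfies a stronger incomparability property, not merely distinctness. Then Sperner's theorem (or more directly, a symmetric chain decomposition of the Boolean lattice) lets us collapse $2^{[n]}$ down to $b(n)$ colours.

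In detail, suppose $\delta(G) \to K_n$ via an $n$-colouring $c$ of the arcs of~$G$. For each $v \in V(G)$, let $\phi(v) \defeq \{c(u,v) \mid (u,v) \text{ is an arc of } G\}$, so $\phi(v) \subseteq [n]$. The first step is to prove that for every edge $\{u,v\}$ of~$G$, the sets $\phi(u)$ and $\phi(v)$ are \emph{incomparable}. Since $G$ is symmetric, both $(u,v)$ and $(v,u)$ are arcs; set $a \defeq c(u,v)$ and $b \defeq c(v,u)$. Clearly $a \in \phi(v)$ by definition. For any arc $(w,u)$ of $G$, the pair $((w,u),(u,v))$ is an arc of $\delta(G)$, so $c(w,u) \neq a$; hence $a \notin \phi(u)$, and thus $a \in \phi(v)\setminus \phi(u)$. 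Symmetrically, $b \in \phi(u)\setminus \phi(v)$, establishing incomparability.

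The second step is to use the De Bruijn--Tengbergen--Kruyswijk symmetric chain decomposition (or, equivalently, combine Sperner's theorem with Dilworth's theorem) to partition the Boolean lattice $2^{[n]}$ into exactly $b(n) = \binom{n}{\lfloor n/2 \rfloor}$ chains $\mathcal{C}_1, \dots, \mathcal{C}_{b(n)}$. Define $\psi(v) \defeq i$ where $\phi(v) \in \mathcal{C}_i$. For adjacent $u,v$, incomparability forces $\phi(u)$ and $\phi(v)$ into different chains, so $\psi(u) \neq \psi(v)$. Hence $\psi$ is a proper $b(n)$-colouring of $G$, giving $G \to K_{b(n)}$ as required.

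Combining with the second bullet of Lemma~\ref{lem:approxPoljakRodl} yields the two-sided equivalence $\delta G \to K_n \iff G \to K_{b(n)}$, from which the displayed formula $\chi(\delta(G)) = \min\{n \mid \chi(G) \leq b(n)\}$ follows immediately. The only non-routine ingredient is the symmetric chain decomposition; I do not expect a real obstacle, but the subtlety worth emphasising is that Harner and Entringer's original argument only extracts distinctness of $\phi(u),\phi(v)$, whereas symmetry of~$G$ gives the much stronger incomparability that unlocks Sperner-type bounds.
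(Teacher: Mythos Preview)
Your proof is correct. The paper does not actually prove Lemma~\ref{lem:PoljakRodl}; it only proves the approximate version (Lemma~\ref{lem:approxPoljakRodl}) and then merely cites Poljak and R\"odl for the exact statement, remarking that ``for graphs, it is not much harder to show an exact correspondence.'' Your argument is precisely that refinement and is essentially the original Poljak--R\"odl proof: observing that symmetry upgrades distinctness of $\phi(u),\phi(v)$ to incomparability, and then invoking a symmetric chain decomposition (equivalently Dilworth plus Sperner) to collapse $2^{[n]}$ to $b(n)$ chains.
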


This immediately gives the following implication for approximate colouring:

\begin{lemma}\label{lem:red}
	$\PCSP(K_{b(n)},K_{b(k)})$ log-space reduces to $\PCSP(K_n,K_k)$, for all $n,k \in \NN$.
\end{lemma}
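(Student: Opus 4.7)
The plan is to use the arc digraph construction $\delta$ itself as the reduction: given an input graph $F$ of $\PCSP(K_{b(n)},K_{b(k)})$, I would output $\delta F$ as an input of $\PCSP(K_n,K_k)$. The first step is to verify that $\delta$ is log-space computable, which is immediate from its local definition: the vertices of $\delta F$ are the ordered pairs $(u,v)$ with $\{u,v\}\in E(F)$, and the arcs are exactly the pairs $((u,v),(v,w))$ with $w\neq u$, both of which can be enumerated in logarithmic space while scanning the adjacency representation of $F$.

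The next step is to verify the two directions of the reduction, each of which is a direct appeal to Lemma~\ref{lem:PoljakRodl}. For completeness, if $F\to K_{b(n)}$ then Lemma~\ref{lem:PoljakRodl} gives $\delta F\to K_n$, so $\delta F$ is a YES instance of $\PCSP(K_n,K_k)$. For soundness, the contrapositive of the same lemma says that $F\not\to K_{b(k)}$ implies $\delta F\not\to K_k$, so a NO instance of $\PCSP(K_{b(n)},K_{b(k)})$ is mapped to a NO instance of $\PCSP(K_n,K_k)$. Together these establish the reduction.

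I do not expect a genuine obstacle here; the only small housekeeping point is that $\delta F$ is naturally produced as a (not necessarily symmetric) digraph, whereas $\PCSP(K_n,K_k)$ is phrased for undirected graphs. This is resolved by the paper's convention that a digraph $D$ satisfies $D\to K_n$ if and only if its underlying undirected graph is $n$-colourable, so Lemma~\ref{lem:PoljakRodl} applies verbatim; if one prefers a strictly undirected output, the reduction can just as well emit the symmetrisation of $\delta F$, which is also log-space computable. Iterating this single-step reduction is precisely the mechanism by which Theorem~\ref{thm:asymp} will be bootstrapped from Huang's asymptotic hardness result.
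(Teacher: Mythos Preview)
Your proposal is correct and matches the paper's proof essentially verbatim: map an instance $F$ to $\delta F$ and invoke Lemma~\ref{lem:PoljakRodl} for both the completeness and soundness directions. One small slip: in your description of $\delta$ you require $w\neq u$, but the paper's definition (and Lemma~\ref{lem:PoljakRodl}) does not exclude the arcs $((u,v),(v,u))$; just drop that constraint so that the cited lemma applies as stated.
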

\begin{proof}
	Let $G$ be an instance of the first problem.
	Then $\delta G$ is a suitable instance of $\PCSP(K_n,K_k)$:
	if $G \to K_{b(n)}$, then $\delta G \to K_n$.
	If $\delta G \to K_k$, then $G \to K_{b(k)}$.
\end{proof}

\begin{remark}
	As a side note, adding a universal vertex gives the following obvious reduction:
	$\PCSP(K_n,K_k)$ log-space reduces to $\PCSP(K_{n+1},K_{k+1})$,
	for $n,k \in \NN$.
\end{remark}

Recall also that if $n \leq n' \leq k' \leq k$, then $\PCSP(K_n,K_k)$ trivially reduces to $\PCSP(K_{n'},K_{k'})$.
One corollary of Lemma~\ref{lem:red} is that if any clique of size at least 4 is right-hard, then all of them are:

\begin{proposition}\label{prop:rightHard}
	For all integers $n,n' \geq 4$,
		$\PCSP(K_n,K_k)$ is NP-hard for all $k\geq n$
		if and only if 
		$\PCSP(K_{n'},K_{k'})$ is NP-hard for all $k' \geq n'$.
\end{proposition}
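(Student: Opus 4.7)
The plan is to combine the arc-digraph reduction of Lemma~\ref{lem:red} with the trivial homomorphic-relaxation reduction. By the symmetry of the two sides of the equivalence in $n$ and $n'$, it suffices to prove one direction: assuming $\PCSP(K_n, K_k)$ is NP-hard for all $k \geq n$, show that for every $n' \geq 4$ and every $k' \geq n'$, $\PCSP(K_{n'}, K_{k'})$ is NP-hard.

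First I would dispatch the easy case $n' \geq n$: since $K_n \to K_{n'}$, homomorphic relaxation reduces $\PCSP(K_n, K_{k'})$ to $\PCSP(K_{n'}, K_{k'})$, and the hypothesis gives NP-hardness of the former (using $k' \geq n' \geq n$).

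For the interesting case $n' < n$, I would iterate Lemma~\ref{lem:red}. Writing $b(m) \defeq \binom{m}{\lfloor m/2 \rfloor}$ and $b^i$ for its $i$-fold composition, applying the lemma $i$ times chains into a single log-space reduction from $\PCSP(K_{b^i(n')}, K_{b^i(k')})$ to $\PCSP(K_{n'}, K_{k'})$. Since $b(4) = 6$ and $b$ is strictly increasing on $[4,\infty)$ with $b(m) > m$ there, the orbit $b^i(n')$ diverges; picking the smallest $i$ for which $b^i(n') \geq n$, we also have $b^i(k') \geq b^i(n') \geq n$, so the previous case gives NP-hardness of $\PCSP(K_{b^i(n')}, K_{b^i(k')})$, and this transports down the composed reduction to $\PCSP(K_{n'}, K_{k'})$. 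Note that $i$ depends only on $n$ and $n'$ (not on the instance), and is in fact tiny since $b$ grows roughly as $2^m/\sqrt{m}$, so composing constantly many log-space reductions stays log-space.

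The whole argument is essentially a bookkeeping exercise once Lemma~\ref{lem:red} is available, so there is no real obstacle. The mild subtlety worth flagging is why the threshold $n, n' \geq 4$ is necessary: $b(2) = 2$ and $b(3) = 3$ are fixed points of $b$, so the iteration stalls and this technique cannot cross the $K_3$--$K_4$ boundary. This is the same barrier that leaves the $3$-colourable case of Conjecture~\ref{conj:main} out of reach of purely arc-digraph methods.
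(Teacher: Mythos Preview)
Your proof is correct and follows essentially the same approach as the paper: homomorphic relaxation for the trivial direction, and iterating Lemma~\ref{lem:red} (using that $b^i(m)\to\infty$ for $m\ge 4$) for the other. The only cosmetic difference is that the paper phrases the second direction as a contrapositive (``if $K_n$ is not right-hard then $K_{b(n)}$ is not right-hard, hence $K_{b^i(n)}$ is not right-hard for all $i$''), whereas you argue directly by composing the reduction a fixed number of times; the content is identical, including the observation that the barrier at $n=3$ comes from $b(3)=3$.
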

\begin{proof}
	Let $n \leq n'$.
	For one direction, right-hardness of $K_n$ trivially implies right-hardness of $K_{n'}$.
	
	On the other hand, we claim that if $K_{b(n)}$ is right-hard, then so is $K_n$.
	Indeed, suppose $\PCSP(K_{b(n)},K_k)$ is hard for all $k \geq b(n)$.
	In particular it is hard for all $k$ of the form $k=b(k')$ for an integer $k' \geq n$.
	Hence by Lemma~\ref{lem:red}, $\PCSP(K_n,K_{k'})$ is hard for all $k' \geq n$.
	
	Suppose $K_n$ is not right-hard.
	Then $K_{b(n)}$ is not right-hard, $K_{b(b(n))}$ is not right-hard and so on.
	Since starting with $n\geq 4$, the sequence $b(b(\dots n \dots))$ grows to infinity,
	we conclude that $K_{n''}$ is not right-hard for some $n'' \geq n'$.
	Therefore, trivially $K_{n'}$ is not right-hard.
\end{proof}
In other words if any loop-less graph $H$ is right-hard, then trivially some
large enough clique $K_{\chi(H)}$ is right-hard; by the above, $K_4$ and all graphs right of it are right-hard.
This proves Proposition~\ref{prop:right-hard}.
The proof fails to extend to $K_3$ because $b(3)=\B{3}$ is not strictly greater than 3.

\bigskip
The other consequence we derive from Lemma~\ref{lem:red} is a strengthening of Huang's result:

\begin{theorem}[Huang~\cite{Huang13}]\label{thm:Huang}
	For all sufficiently large $n$, $\PCSP(K_n, K_{2^{\Omega(n^{1/3})}})$ is NP-hard.
\end{theorem}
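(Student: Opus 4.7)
The plan is to follow the PCP-based hardness paradigm for approximate graph colouring developed by Khot and refined by Huang. Starting from a suitable gap version of Label Cover on variables $V$ with alphabet $[R]$ obtained from the PCP theorem, I would construct a graph $G_\Phi$ whose vertices are pairs $(v,f)$ with $v\in V$ and $f$ ranging over an error-correcting encoding of potential assignments to $v$, and whose edges encode local consistency tests between variables sharing a Label Cover constraint. On the completeness side, the encoding is chosen so that when $\Phi$ is satisfiable, ``dictator'' encodings (indexed by a single coordinate) yield a proper $n$-colouring of $G_\Phi$, with $n$ essentially determined by the alphabet size of the underlying code. The key improvement over Khot's original bound of $n^{\Omega(\log n)}$ is to replace the long code by the \emph{short code} of Barak, Gopalan, H\aa{}stad, Meka, Raghavendra, and Steurer, whose block length is only singly exponential in $R$; this is what allows $n$ to enter the soundness bound through $n^{1/3}$ rather than $\log n$.

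On the soundness side, I would assume a colouring of $G_\Phi$ with $c=2^{\Omega(n^{1/3})}$ colours and derive a good labelling of $\Phi$. A $c$-colouring produces a density-$1/c$ independent set in each vertex cloud; applying a noise operator and performing a Fourier-analytic influence decomposition extracts, for each variable, a short list of coordinates carrying non-trivial influence. A Frankl--R\"odl-type bound on the independence number of the corresponding noise graph is what converts density into influence in this regime. Pushing these influential coordinates through the Label Cover projection constraints then recovers an assignment to $\Phi$ satisfying a non-trivial fraction of constraints, contradicting the hardness of Label Cover.

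The main technical obstacle is the simultaneous balancing of three parameters: the short-code block length, the noise rate in the Fourier analysis, and the Frankl--R\"odl independence bound. The specific exponent $1/3$ in $n^{1/3}$ emerges precisely from this optimisation, and any weaker control on a single ingredient would immediately degrade the final bound. For the present paper, however, Theorem~\ref{thm:Huang} is invoked purely as a black box feeding into Lemma~\ref{lem:red} in order to obtain Theorem~\ref{thm:asymp}, so Huang's construction does not need to be reproduced in detail here.
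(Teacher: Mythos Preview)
Your proposal is correct and matches the paper's treatment: Theorem~\ref{thm:Huang} is not proved in the paper at all but is cited as a black-box result from Huang~\cite{Huang13}, exactly as you note in your final sentence. Your high-level sketch of Huang's actual argument (Label Cover plus short-code encodings, Fourier-analytic soundness via influence decoding, Frankl--R\"odl bounds, and the parameter balancing that yields the exponent $1/3$) is a fair summary of that paper, but none of it appears here; the present paper simply invokes the statement and immediately feeds it into Lemma~\ref{lem:red}.
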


{\renewcommand{\thetheorem}{\getrefnumber{thm:asymp}}
	\begin{theorem}[\textbf{Main Result \#1}]
		\ThmAsymp
	\end{theorem}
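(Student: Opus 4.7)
The plan is to combine Huang's asymptotic theorem (Theorem~\ref{thm:Huang}) as a base case with a reverse induction on $n$ driven by Lemma~\ref{lem:red}, exploiting the fact that $b(n) := \B{n}$ satisfies $b(n) > n$ for every $n \geq 4$.

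First I would establish the base case for all sufficiently large $n$. By Theorem~\ref{thm:Huang}, $\PCSP(K_n, K_{2^{c n^{1/3}}})$ is NP-hard once $n \geq N_0$ for some $N_0$ and constant $c > 0$. Since $\log_2(b(n)-1) \sim n$ while $\log_2(2^{cn^{1/3}}) = cn^{1/3}$, enlarging $N_0$ if necessary we may assume $2^{c n^{1/3}} \leq b(n) - 1$ for all $n \geq N_0$. Homomorphic relaxation (replacing the right-hand clique $K_{2^{cn^{1/3}}}$ by the larger $K_{b(n)-1}$) then yields NP-hardness of $\PCSP(K_n, K_{b(n)-1})$ for every $n \geq N_0$.

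Next I would extend this downwards by reverse induction on $n \in \{4, 5, \dots, N_0 - 1\}$. Fix such an $n$ and assume the theorem holds for every $n' > n$. Since $b(4) = 6 > 4$ and $b$ is strictly increasing on $n \geq 1$, we have $b(n) > n$, so the inductive hypothesis applies at the strictly larger value $n' = b(n)$, yielding NP-hardness of $\PCSP(K_{b(n)}, K_{b(b(n))-1})$. Because $b$ is strictly increasing, $b(b(n)-1) \leq b(b(n)) - 1$, so homomorphic relaxation (shrinking the right-hand clique) upgrades this to NP-hardness of $\PCSP(K_{b(n)}, K_{b(b(n)-1)})$. Finally, invoking Lemma~\ref{lem:red} with parameters $(n, k) = (n, b(n)-1)$ provides a log-space reduction from $\PCSP(K_{b(n)}, K_{b(b(n)-1)})$ to $\PCSP(K_n, K_{b(n)-1})$, so the latter is NP-hard, closing the induction.

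The main obstacle is conceptual rather than computational: one must carefully track the direction of homomorphic relaxation (shrinking the right-hand clique preserves and strengthens hardness, while enlarging it weakens it), and recognise that Lemma~\ref{lem:red} is precisely the tool that converts hardness for the large clique pair $(K_{b(n)}, K_{b(k)})$ into hardness for the small pair $(K_n, K_k)$, which is exactly what lets the induction descend in $n$. Once these directions are pinned down, the only quantitative facts needed are $b(n) > n$ for $n \geq 4$ and the strict monotonicity $b(b(n)-1) < b(b(n))$, both of which follow trivially from the definition of $b$; no further asymptotic analysis beyond the straightforward comparison $2^{cn^{1/3}} \leq b(n)-1$ for large $n$ is required. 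The argument fails for $n = 3$ precisely because $b(3) = 3$, so the induction cannot be primed there, which matches the hypothesis of the theorem.
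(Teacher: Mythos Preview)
Your reverse induction step is correct and is essentially the same descent argument the paper uses at the very end: from NP-hardness of $\PCSP(K_{b(n)}, K_{b(b(n))-1})$ one shrinks the right clique to $K_{b(b(n)-1)}$ and then applies Lemma~\ref{lem:red} to drop to $\PCSP(K_n, K_{b(n)-1})$. That part is fine.

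The problem is the base case. You write that since $2^{cn^{1/3}} \leq b(n)-1$ for large $n$, homomorphic relaxation lets you replace $K_{2^{cn^{1/3}}}$ by the \emph{larger} clique $K_{b(n)-1}$ and retain NP-hardness. But that is precisely the forbidden direction you yourself warn about two paragraphs later: enlarging the right-hand clique \emph{weakens} the problem, so hardness of $\PCSP(K_n, K_{2^{cn^{1/3}}})$ says nothing about $\PCSP(K_n, K_{b(n)-1})$. Huang's gap $2^{cn^{1/3}}$ is sub-exponential and hence \emph{much smaller} than $b(n)-1 \sim 2^n/\sqrt{\pi n/2}$; you cannot reach the latter from the former by relaxation.

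Establishing the base case is in fact the bulk of the work. The paper does it by repeatedly composing Lemma~\ref{lem:red} with relaxation in the \emph{correct} direction: from Huang at $b(k)$ one gets hardness of $\PCSP(K_{b(k)}, K_{2^{C b(k)^{1/3}}})$, and since $b^{-1}$ is roughly logarithmic while Huang's bound at $b(k)$ is roughly $2^{2^{k/3}}$, one application of the reduction boosts the exponent from $n^{1/3}$ to roughly $n/4$. Three or four further iterations push it up to $b(n)-1$. Only after this bootstrapping does the paper invoke the descent to reach $n=4$. Your proposal skips this entire amplification and therefore has a genuine gap.
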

	\addtocounter{theorem}{-1}
}
We thus improve the asymptotics from sub-exponential $f(n) \defeq 2^{n^{1/3}}$ to single-exponential $b(n) = \B{n} \sim \frac{2^n}{\sqrt{\pi n/2}}$.
The informal idea of the proof is that any $f(n)$ can be improved to $b^{-1}(f(b(n)))$.
Since $b(n)$ is roughly exponential and $b^{-1}(n)$ is roughly logarithmic, starting from a function $f(n)$ of order $\exp^{(i+1)}(\alpha \cdot \log^{(i)}(n))$ with $i$-fold compositions and a constant $\alpha > 0$, such as $f(n)=2^{n^{1/3}} = 2^{2^{\frac{1}{3} \log n}}$ from Huang's hardness,
results~in 
\[b^{-1}(f(b(n))) \approx \log\Big(\exp^{(i+1)}\big(\alpha \cdot \log^{(i)}(\exp(n))\big)\Big) = \exp^{(i)}(\alpha \cdot \log^{(i-1)}(n)),\]
so a similar composition but with $i$ decreased.
In a constant number of steps, this results in a single-exponential function.
In fact using one more step, but without approximating the function $b(n)$, this results in exactly $b(n)-1$.
We note it would not be sufficient to start from a quasi-polynomial $f(n)$, like
$n^{\Theta(\log n)}$ in Khot's~\cite{Khot01} result.

\begin{proof}[Proof of Theorem~\ref{thm:asymp}]
	By Lemma~\ref{lem:red}:
	\begin{center}
	 $\PCSP(K_{b(n)},K_{b(m)})$ log-space reduces to $\PCSP(K_n, K_m)$, for all $n,m \in \NN$.
	\end{center}
	For any $k \in \NN$, let $m = \lfloor \log k\rfloor$ (all logarithms are base-2); then $b(m) \leq 2^m \leq k$,
	hence $\PCSP(K_{b(n)},K_{k})$ trivially reduces to $\PCSP(K_{b(n)},K_{b(m)})$.

	Therefore, composing the two reductions:
	\begin{center}	
	$\PCSP(K_{b(n)},K_{k})$ reduces to $\PCSP(K_{n},K_{\lfloor \log k \rfloor})$, for any $n,k \in \NN$.
	\end{center}
	Starting from Theorem~\ref{thm:Huang} we have a constant $C$ such that:
	\begin{center}
		$\PCSP(K_{n}, K_{2^{\lfloor C \cdot n^{1/3}\rfloor}})$ is NP-hard, for sufficiently large $n$.
	\end{center}	
	Hence, substituting $n=b(k)$:
	\begin{center}
		 $\PCSP(K_{b(k)}, K_{2^{\lfloor C \cdot b(k)^{1/3}\rfloor }})$ is NP-hard, for sufficiently large $k$.
	\end{center}
	Applying the above reduction, since $\lfloor \log 2^{\lfloor C \cdot b(k)^{1/3}\rfloor} \rfloor = \lfloor C \cdot b(k)^{1/3}\rfloor \geq (\frac{2^k}{k})^{1/3}\geq 2^{k/4}$ for sufficiently large $k$, we conclude:
	\begin{center}
		$\PCSP(K_{k}, K_{2^{k/4}})$ is NP-hard, for sufficiently large $k$.
	\end{center}
	We repeat this process to bring the constant further ``down''.
	That is, we substitute $b(k)$ for $k$ and apply the above reduction again.
	Since $\lfloor \log 2^{b(k)/4} \rfloor = \lfloor b(k)/4 \rfloor \geq 2^k/4k$ for sufficiently large $k$, we conclude:
	\begin{center}
	$\PCSP(K_{k}, K_{2^{k}/4k})$ is NP-hard, for sufficiently large $k$.
	\end{center}
	To apply the reduction one more time, notice that for large $k$, $b(k) \geq \frac{3}{2} b(k-1)$ (because $b(2k) = \binom{2k}{k} = \binom{2k-1}{k-1} \frac{2k}{k} = 2 \cdot b(2k-1) \geq \frac{3}{2} b(2k-1)$ and $b(2k+1) = \binom{2k+1}{k} = \binom{2k}{k} \frac{2k+1}{k+1} = b(2k) (2-\frac{1}{k+1})\geq \frac{3}{2} b(2k)$).
	Therefore $\lfloor \log (2^{b(k)}/4b(k)) \rfloor \geq b(k) - \log b(k) \geq \frac{2}{3} b(k) \geq b(k-1)$ for sufficiently large $k$, hence:
	\begin{center}
	$\PCSP(K_{k}, K_{b(k-1)})$ is NP-hard, for sufficiently large $k$.
	\end{center}
	Substituting $b(k)$ for $k$ one last time:
	\begin{center}
	$\PCSP(K_{b(k)}, K_{b(b(k)-1)})$ is NP-hard, for sufficiently large $k$.
	\end{center}
	Composing with Lemma~\ref{lem:red} one last time:
	\begin{center}
	$\PCSP(K_{k}, K_{b(k)-1})$ is NP-hard, for sufficiently large $k$.
	\end{center}
	This concludes the improvement in asymptotics.
	Moreover, one can notice that the requirements on ``sufficiently large $k$'' gets
	relaxed whenever we substitute $b(k)$ for~$k$.
	Formally,
	let $k$ be maximum such that $\PCSP(K_{k}, K_{b(k)-1})$ is not NP-hard.
	Then because of Lemma~\ref{lem:red}, $\PCSP(K_{b(k)}, K_{b(b(k)-1)})$ is not NP-hard,
	and because $b(b(k)-1) \leq b(b(k))-1$, trivially $\PCSP(K_{b(k)}, K_{b(b(k))-1})$ is not NP-hard either.
	That is, $\PCSP(K_n,K_{b(n)-1})$ is not NP-hard for $n=b(k)$.
	By maximality of $k$, $k \geq n$. But $k \geq b(k)$ is only possible when $k < 4$.
	Hence hardness holds for all $k \geq 4$.
\end{proof}

\section{Adjoint functors and topology}\label{sec:functors}
\subsection{\texorpdfstring{Thin functors $\Lambda_k,\Gamma_k,\Omega_k$}{Thin functors}}
\label{subsec:secondMainProof}
Recall that $\Lambda_k$ denotes $k$-subdivision and $\Gamma_k$ denotes the $k$-th power of a graph.
For all odd $k$, they are thin adjoint graph functors:
\begin{center}
	$\Lambda_k G \to H$ if and only if $G \to \Gamma_k H$.
\end{center}

\noindent
More surprisingly, $\Gamma_k$ is itself the thin \emph{left} adjoint of a certain thin functor $\Omega_k$:
\begin{center}
	$\Gamma_k G \to H$ if and only if $G \to \Omega_k H$.
\end{center}
This characterizes $\Omega_k G$ up to homomorphic equivalence.
The exact definition is irrelevant, but we state it for completeness:
for $k=2\ell+1$, the vertices of $\Omega_k$ are tuples $(A_0,\dots,A_\ell)$ of vertex subsets $A_i \subseteq V(G)$ such that $A_0$ contains exactly one vertex.
Two such tuples $(A_0,\dots,A_\ell)$ and $(B_0,\dots,B_\ell)$ are adjacent if
$A_i \subseteq B_{i+1}$, $B_i \subseteq A_{i+1}$ (for $i=0\dots \ell-1$) and $A_\ell$ is fully adjacent to $B_\ell$ (meaning $a$ is adjacent to $b$ in $G$, for $a \in A_k, b \in B_k$).
We note that $\Lambda_k$ and $\Gamma_k$ are log-space computable, for all odd $k$;
however, $\Omega_k$ is not: $\Omega_k G$~is exponentially larger than $G$.
See~\cite{Wrochna17b} for more discussion about the thin functors $\Lambda_k,\Gamma_k,\Omega_k$ and their properties.

Observation~\ref{obs:adj1} tells us that $\PCSP(G, \Omega_k H)$ log-space reduces to $\PCSP(\Gamma_k G, H)$ (in fact, by Observation~\ref{obs:adj2}, they are equivalent).
To give conclusions on left-hardness, we will need to observe only two more facts about the functors $\Lambda_k,\Gamma_k,\Omega_k$.
First, $\Omega_k G \to G$ for all $G$ (it suffices to map $(A_0,\dots,A_{l-1},A_\ell) \in V(\Omega_{2\ell+1} G)$ to the unique vertex in $A_0$).
Second, it is not hard to check that $\Gamma_k \Lambda_k G \to G$ and hence by adjointness $\Lambda_k G \to \Omega_k G$ for all $G$ and odd $k$ (see Lemma 2.3 in~\cite{Wrochna17b}).

\begin{lemma}\label{lem:left-hard}
	For every odd $k$, $\Omega_k H$ is left-hard if and only if $H$ is left-hard.
\end{lemma}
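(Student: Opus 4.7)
The plan is to prove both implications using the log-space equivalence
$\PCSP(\Gamma_k G, H) \equiv \PCSP(G, \Omega_k H)$ supplied by
Observation~\ref{obs:adj2} applied to the two thin adjoint pairs
$(\Lambda_k,\Gamma_k)$ and $(\Gamma_k,\Omega_k)$ (whose ``left'' members
$\Lambda_k$ and $\Gamma_k$ are indeed log-space computable), together with three
elementary homomorphisms: $G \to \Gamma_k G$ (the identity map, valid because
for each edge $uv\in E(G)$ the walk $u\to v\to u \to\cdots\to v$ has length $k$
for odd $k$), $\Lambda_k G \to \Omega_k G$, and $\Gamma_k \Lambda_k G \to G$; the
last two are recorded in the excerpt just before the lemma.

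For the direction ``$H$ left-hard $\Rightarrow \Omega_k H$ left-hard'', I would
take an arbitrary non-bipartite $G$ with $G \to \Omega_k H$. By adjointness of
$(\Gamma_k,\Omega_k)$ this is equivalent to $\Gamma_k G \to H$. Because
$G \to \Gamma_k G$ via the identity, $\Gamma_k G$ contains $G$ as a subgraph and
is therefore non-bipartite. Hence $\Gamma_k G$ is a legitimate witness for the
left-hardness of $H$, which yields NP-hardness of $\PCSP(\Gamma_k G, H)$. By the
equivalence this transfers to NP-hardness of $\PCSP(G, \Omega_k H)$, proving
$\Omega_k H$ is left-hard.

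For the converse ``$\Omega_k H$ left-hard $\Rightarrow H$ left-hard'', I would
take an arbitrary non-bipartite $G$ with $G \to H$ and apply the left-hardness
of $\Omega_k H$ to the $k$-subdivision $\Lambda_k G$. The graph $\Lambda_k G$ is
non-bipartite: each odd cycle of length $\ell$ in $G$ becomes an odd cycle of
length $k\ell$ in $\Lambda_k G$ (odd times odd). Functoriality of $\Omega_k$
applied to $G\to H$ gives $\Omega_k G \to \Omega_k H$, and precomposing with
$\Lambda_k G \to \Omega_k G$ gives $\Lambda_k G \to \Omega_k H$. Therefore
$\PCSP(\Lambda_k G, \Omega_k H)$ is NP-hard; by Observation~\ref{obs:adj2} this
is log-space equivalent to $\PCSP(\Gamma_k \Lambda_k G, H)$, which in turn
log-space reduces to $\PCSP(G, H)$ by homomorphic relaxation using
$\Gamma_k \Lambda_k G \to G$. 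Hence $\PCSP(G, H)$ is NP-hard.

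The subtlety to double-check is compatibility with loops, since left-hardness is
only meaningful when the target is loopless (otherwise the PCSP is trivial), and
$\Gamma_k G$ can create loops when $G$ has odd cycles of length at most $k$.
This turns out to be self-regulating: in the forward direction the assumption
$G\to \Omega_k H$ translates by adjointness to $\Gamma_k G \to H$, which, for
loopless $H$, automatically forces $\Gamma_k G$ to be loopless; in the reverse
direction $\Lambda_k G$ is always loopless whenever $G$ is, since subdividing
edges cannot create loops. No additional case analysis is needed.
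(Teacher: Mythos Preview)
Your proof is correct, and the backward implication is essentially the paper's argument with a minor variation: where the paper applies the left-hardness of $\Omega_k H$ to the witness $\Omega_k G'$ (arguing it is non-bipartite via $\Lambda_k G' \to \Omega_k G'$) and then uses $\Gamma_k \Omega_k G' \to G'$, you apply it directly to $\Lambda_k G$ and use $\Gamma_k \Lambda_k G \to G$. Both routes land on the same reduction chain via Observation~\ref{obs:adj1}/\ref{obs:adj2}.

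For the forward implication, however, you work harder than necessary. The paper dispatches it in one line: since $\Omega_k H \to H$, any graph left of $\Omega_k H$ is also left of $H$, so left-hardness of $H$ immediately gives left-hardness of $\Omega_k H$ by homomorphic relaxation. Your detour through $\Gamma_k G$ (checking $G \to \Gamma_k G$, non-bipartiteness, and the loop issue) is valid but avoidable. The discussion of loops in your final paragraph is accordingly moot for this direction.
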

\begin{proof}
	If $H$ is left-hard, then trivially so is $\Omega_k H$ because $\Omega_k H \to H$.
	For the other implication, suppose $\Omega_k H$ is left-hard, that is,
	$\PCSP(G, \Omega_k H)$ is hard for every non-bipartite $G$ such that $G \to \Omega_k H$.
	By Observation~\ref{obs:adj1}, this implies $\PCSP(\Gamma_k G, H)$ is hard.
	Let $G'$ be any non-bipartite graph such that $G' \to H$.
	We want to show that $\PCSP(G', H)$ is hard.
	Observe that $\Omega_k G'$ is non-bipartite, because $\Lambda_k G' \to \Omega_k G'$ and $\Lambda_k$ subdivides each edge of $G'$ an odd number of times.
	Since $\Omega_k G' \to \Omega_k H$,
	using $G := \Omega_k G'$ we conclude that $\PCSP(\Gamma_k \Omega_k G', H)$ is hard.
	Since $\Gamma_k \Omega_k G' \to G'$,
	this implies $\PCSP(G', H)$ is hard.
\end{proof}

As an example, consider the circular clique $K_{7/2}$ (we have $K_3 \to K_{7/2} \to K_4$).
Knowing that $K_3$ is left-hard, one could check that $\Omega_3(K_{7/2})$ is 3-colorable and hence left-hard as well;
the above lemma then allows to conclude that $K_{7/2}$ is left-hard.

What other graphs could one use in place of $K_{7/2}$?
The answer turns out to be topological.
Intuitively, while the operation $\Gamma_k$ gives a ``thicker'' graph,
the operation $\Omega_k$ gives a ``thinner'' one.
In fact, $\Omega_k$ behaves like barycentric subdivision in topology: it preserves the topology of a graph (formally: its box complex is $\ZZ_2$-homotopy equivalent to the original graph's box complex) but refines its geometry.
With increasing $k$, this eventually allows to model any continuous map with a graph homomorphism; in particular:

\begin{theorem}[\cite{Wrochna17b}]\label{thm:approx}
	There exists a $\ZZ_2$-map $\BBox{G} \to_{\ZZ_2} \BBox{H}$ if and only if for some odd $k$, $\Omega_k G \to H$.
\end{theorem}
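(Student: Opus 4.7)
The plan is to prove the two directions by quite different means: the ``if'' direction is a short functoriality argument, while the ``only if'' direction requires an equivariant simplicial approximation.

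For the ``if'' direction, suppose $\Omega_k G \to H$ for some odd $k$. Since $\mathrm{Box}$ is a functor from graphs to $\ZZ_2$-spaces, this homomorphism induces a $\ZZ_2$-map $\BBox{\Omega_k G} \to_{\ZZ_2} \BBox{H}$. The central topological fact I would establish first is that $\Omega_k$ preserves the $\ZZ_2$-homotopy type of the box complex, that is, $\BBox{\Omega_k G} \zeq \BBox{G}$. Morally, $\Omega_k G$ plays the role of a barycentric subdivision of $\BBox{G}$; to verify this I would exhibit explicit $\ZZ_2$-maps in both directions (one using the canonical ``choose the unique element of $A_0$'' retraction $\Omega_k G \to G$, the other going from a suitable triangulation of $\BBox{G}$ into its refinement) and check that the compositions are $\ZZ_2$-homotopic to the identity, appealing to nerve-style lemmas relating adjacency patterns in $\Omega_k G$ to simplices (complete bipartite subgraphs) of $\BBox{G}$. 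Composing the resulting $\ZZ_2$-equivalence with the induced functorial map yields the desired $\BBox{G} \to_{\ZZ_2} \BBox{H}$.

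For the ``only if'' direction, given a continuous $\ZZ_2$-map $f : \BBox{G} \to_{\ZZ_2} \BBox{H}$, I would invoke an equivariant simplicial approximation theorem: because both complexes are finite $\ZZ_2$-CW-complexes, after sufficiently many barycentric subdivisions of the domain, $f$ is $\ZZ_2$-homotopic to a simplicial $\ZZ_2$-map $\tilde f$ from the subdivided complex to $\BBox{H}$. The crucial combinatorial step is then to identify, for some odd $k$ large enough, the $k$-fold subdivided complex with $\BBox{\Omega_k G}$ in a way that matches vertices with vertices of $\Omega_k G$ and top-dimensional simplices with complete bipartite subgraphs of the underlying $\Omega_k G \times K_2$. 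A vertex of $\BBox{H}$ is, by construction, a vertex of $H$ sitting on one of the two sides, so $\tilde f$ yields an assignment $V(\Omega_k G) \to V(H)$; preservation of simplices together with $\ZZ_2$-equivariance forces this assignment to send every edge of $\Omega_k G$ to an edge of $H$, producing the required homomorphism.

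The main obstacle is the combinatorial identification at the heart of the second direction: one must match the slightly baroque definition of $\Omega_k G$ (tuples $(A_0, \dots, A_\ell)$ of vertex subsets with nested inclusions and a final complete-bipartite adjacency condition) with an iterated barycentric subdivision of $\BBox{G}$, compatibly with the involution swapping the two sides. The finer the approximation demanded by the simplicial approximation theorem, the larger $k$ must be chosen, which is precisely why the parameter $k$ in the statement cannot be bounded a priori in terms of $G$ and $H$. Once this dictionary between refined cells of $\BBox{G}$ and vertex-tuples of $\Omega_k G$ is in place, translating a simplicial $\ZZ_2$-map into a graph homomorphism $\Omega_k G \to H$ becomes mechanical, and the ``if'' direction is immediate from the same dictionary via functoriality of $\mathrm{Box}$.
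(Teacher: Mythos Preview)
The paper does not prove this theorem: it is quoted verbatim as a result from~\cite{Wrochna17b} and used as a black box in the proof of Theorem~\ref{thm:topoMain}. So there is no ``paper's own proof'' to compare your proposal against.

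That said, your outline is broadly faithful to the intuition the paper does record (just before the theorem statement): that $\Omega_k$ ``behaves like barycentric subdivision in topology'', preserving the $\ZZ_2$-homotopy type of the box complex while refining it, so that for large enough $k$ any continuous $\ZZ_2$-map can be modelled by a graph homomorphism. Your ``if'' direction is exactly right and matches the one-line justification the paper gives for $\BBox{\Omega_k G} \zeq \BBox{G}$. For the ``only if'' direction, the overall strategy of equivariant simplicial approximation followed by a combinatorial dictionary is indeed what is done in~\cite{Wrochna17b}, though the identification is not literally with an iterated barycentric subdivision of $\Bx{G}$; rather, one shows that $\Omega_k G$ is rich enough to simulate any simplicial $\ZZ_2$-map out of a sufficiently fine subdivision. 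The step where you pass from a simplicial $\ZZ_2$-map into $\Bx{H}$ to a graph homomorphism $\Omega_k G \to H$ also needs a bit more care than ``vertices of $\Bx{H}$ are vertices of $H$ on one side'': one must check that the adjacency constraints encoded by faces of $\Bx{H}$ translate correctly through the tuple-of-subsets definition of $\Omega_k$. These are genuine technicalities rather than gaps in your plan, and they are precisely what~\cite{Wrochna17b} works out.
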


\noindent
This concludes our second main result:

\begin{proof}[Proof of Theorem~\ref{thm:topoMain}]
	Let $H$ be left-hard and let $H'$ be a graph such that $\BBox{H'}$ admits a $\ZZ_{2}$-map to $\BBox{H}$.
	By Theorem~\ref{thm:approx}, $\Omega_k H' \to H$ for some odd $k$.
	Trivially then, $\Omega_k H'$ is left-hard.
	By Lemma~\ref{lem:left-hard}, $H'$ is left-hard.
\end{proof}

\subsection{Other examples	of adjoint functors}\label{subsec:otherExamples}
The arc construction $\delta$ is also an example of a digraph functor which admits both a thin left adjoint $\delta_L$ and a thin right adjoint $\delta_R$;%
\footnote{For the interested reader: $\delta_L D$ is obtained by making a new arc $(s_v,t_v)$ for each vertex of $D$ and then for each arc $(u,v)$ of $D$, gluing $t_u$ with $s_v$ (which results in many transitive gluings); $\delta_R D$ has a vertex for each pair $S,T \subseteq V(D)$ such that $S \times T \subseteq E(D)$, and an arc from $(S,T)$ to $(S',T')$ iff $T \cap S' \neq \emptyset$.}
this adjointness essentially gives a proof of Lemma~\ref{lem:approxPoljakRodl}, see~\cite[Proposition~3.3]{FoniokT17}.
In fact, Lemma~\ref{lem:red}, and hence all results of Section~\ref{sec:arc}, can be deduced
as instantiations of Observation~\ref{obs:adj1} and homomorphic relaxations as follows.
Let $\sym(D)$ be the symmetric closure of a digraph $D$ and let $\sub(D)$ be the maximal symmetric subgraph of $D$; note $\sub(D) \to D \to \sym(D)$.
Observe that they are thin adjoint functors: $\sym(D) \to D'$ if and only if $D \to \sub(D')$, for all digraphs $D,D'$.\footnote{As Jakub Opr\v{s}al observed, this is in fact the composition of two adjoint pairs: taking $\sym$ and $\sub$ as functors from digraphs to graphs and the inclusion functor $\iota$ from graphs to digraphs, we have $\sym(D) \to G$ iff $D \to \iota(G)$ and $\iota(G) \to D$ iff $G \to \sub(D)$.}
Poljak and R\"odl~\cite{PoljakR81} showed that $\sub(\delta_R(K_{k})) \to K_{b(k)}$ (the $\sub$ is essential here); recall also that $\delta(\sym(K_{b(n)})) \to K_n$.
Therefore, $\PCSP(K_{b(n)},K_{b(k)})$ trivially reduces to $\PCSP(K_{b(n)},\sub(\delta_R(K_{k})))$,
which by Observation~\ref{obs:adj1} log-space reduces to $\PCSP(\delta(\sym(K_{b(n)})),K_k)$, which trivially reduces to $\PCSP(K_n,K_k)$,
proving Lemma~\ref{lem:red}.
From Observation~\ref{obs:adj2} we also have:

\begin{corollary}
	$\PCSP(\delta(G),H)$ is log-space equivalent to $\PCSP(G, \delta_R(H))$, 
	for all digraphs~$G, H$.
\end{corollary}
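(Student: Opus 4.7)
The plan is to read off this corollary as a direct instantiation of Observation~\ref{obs:adj2}, with no further combinatorial content beyond what is already assembled in the preceding paragraph. That paragraph asserts that $\delta$ sits between two thin adjoints: $(\delta_L,\delta)$ and $(\delta,\delta_R)$ are both thin adjoint pairs of digraph functors. Taking $\Lambda \defeq \delta_L$, $\Gamma \defeq \delta$ and $\Omega \defeq \delta_R$, Observation~\ref{obs:adj2} immediately yields that $\PCSP(\delta G, H)$ and $\PCSP(G, \delta_R H)$ are log-space equivalent, for all digraphs $G$ and $H$, provided only that $\delta_L$ and $\delta$ are log-space computable. The proofs of Observations~\ref{obs:adj1} and~\ref{obs:adj2} are already stated in full generality for relational structures (so in particular for digraphs), so there is nothing to redo.

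The remaining step is to verify the log-space computability hypothesis. For $\delta$ itself this is immediate: the vertex set of $\delta D$ is read off as $E(D) \subseteq V(D)\times V(D)$, and an arc $((u,v),(v',w))$ of $\delta D$ is emitted by checking $v=v'$ and $(u,v),(v,w)\in E(D)$, which requires only a constant number of pointers into the input. For $\delta_L$, I would use the explicit construction recalled in the footnote: start with disjoint arcs $(s_v,t_v)$ for every $v\in V(D)$ and then glue $t_u$ with $s_v$ for every arc $(u,v)\in E(D)$. The vertex set of $\delta_L D$ is then the set of equivalence classes of the symmetric transitive closure of this identification relation on the disjoint union $\{s_v,t_v : v\in V(D)\}$, and two equivalence classes are adjacent in $\delta_L D$ whenever some representative pair $(s_v,t_v)$ witnesses it. Equivalence of two representatives reduces to undirected $s$-$t$ reachability in a graph of size $O(|D|)$, which is in $\mathrm{L}$ by Reingold's theorem, and a canonical representative of each class can be picked in log-space by iterating over candidates. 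Hence $\delta_L D$ can be emitted in log-space.

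I do not anticipate any obstacle: the substantive work (defining $\delta_L$, $\delta_R$, proving the two adjunctions, and proving Observation~\ref{obs:adj2}) has already been done earlier in the paper, and the only new ingredient is the log-space computability of $\delta_L$, which is a routine application of undirected reachability being in $\mathrm{L}$.
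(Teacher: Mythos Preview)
Your proposal is correct and follows exactly the approach the paper intends: the corollary is stated immediately after the sentence ``From Observation~\ref{obs:adj2} we also have'', with no further proof, so instantiating $\Lambda=\delta_L$, $\Gamma=\delta$, $\Omega=\delta_R$ is precisely what is meant. Your additional verification that $\delta_L$ is log-space computable (via Reingold's theorem for the gluing equivalence classes) fills in a detail the paper leaves implicit.
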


Another example of a thin adjoint pair (but not triple) of functors is given by products and exponential graphs (see e.g.~\cite{FoniokT13} for definitions):
for any graphs $F,G,H$, we have $F \times G \to H$ if and only if $G \to H^F$.
That is, for any graph $F$, the operations $G \mapsto F \times G$ and $H \mapsto H^F$ are left and right adjoints, respectively.
By Observation~\ref{obs:adj1}:

\begin{corollary}
	$\PCSP(G, H^F)$ reduces to $\PCSP(F \times G, H)$ in log-space.
\end{corollary}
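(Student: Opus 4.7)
The plan is to recognise the corollary as a direct instance of Observation~\ref{obs:adj1} applied to the product--exponential adjunction. Fix the graph $F$ and define two thin graph functors $\Lambda \defeq F \times -$ and $\Gamma \defeq (-)^F$ on the category of graphs. The adjunction $\Lambda I \to H'$ iff $I \to \Gamma H'$, valid for all graphs $I$ and $H'$, is precisely the identity $F \times I \to H'$ iff $I \to (H')^F$ that is stated immediately before the corollary, so $(\Lambda, \Gamma)$ forms a thin adjoint pair.

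Second, I would verify that $\Lambda$ is log-space computable. Since $F$ is a fixed graph (a parameter of the PCSP templates, not part of the input), producing $F \times G$ from an input $G$ only requires enumerating the pairs $(f,g) \in V(F) \times V(G)$ and, for each ordered pair of such pairs, checking adjacency in $F$ and in $G$. With $|V(F)|$ constant this is a standard log-space transducer computation.

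With both hypotheses in hand, I would conclude by invoking Observation~\ref{obs:adj1} with this $(\Lambda,\Gamma)$ and the graphs $G,H$ appearing in the corollary. The observation then gives exactly the stated log-space reduction from $\PCSP(G, \Gamma H) = \PCSP(G, H^F)$ to $\PCSP(\Lambda G, H) = \PCSP(F \times G, H)$. There is essentially no obstacle: the content of the corollary is entirely carried by the adjunction, and the only non-tautological verification is the log-space computability of the product, which is immediate for fixed $F$. The one subtlety worth flagging explicitly, to avoid a naming clash with the variable $F$ used as a generic instance in the proof of Observation~\ref{obs:adj1}, is that here $F$ denotes the fixed graph defining the functors, while the role of the input instance in Observation~\ref{obs:adj1} is played by an arbitrary instance of $\PCSP(G, H^F)$.
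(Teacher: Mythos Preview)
Your proposal is correct and matches the paper's approach exactly: the paper simply states ``By Observation~\ref{obs:adj1}'' after noting that $G \mapsto F \times G$ and $H \mapsto H^F$ form a thin adjoint pair, and your write-up just makes explicit the (immediate) log-space computability of $F \times -$ for fixed~$F$.
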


Here $\times$ is the \emph{tensor} (or \emph{categorical}) product,
in particular $G \to H_1 \times H_2$ if and only if $G \to H_1$ and $G \to H_2$.
Nevertheless, a few other products have an associated exponentiation as well.
These and other examples fall into a pattern known as \emph{Pultr functors} -- see~\cite{FoniokT13} for an extended discussion (we note here that \emph{central Pultr functors}, like $\Gamma_k$ or $\delta$, are a kind of pp-interpretation).
Foniok and Tardif~\cite{FoniokT15} studied which digraph functors admit both thin left and right adjoints.

The box complex also admits a left adjoint, though they involve two categories.
More precisely, the functor $G \mapsto \Hom(K_2, G)$ (see definitions in Appendix~\ref{app:topo})
gives a $\ZZ_2$-simplicial complex that is $\ZZ_2$-homotopy equivalent to the box complex.
As proved by Matsushita~\cite{Matsushita17}, it admits a left adjoint $A$ from the category of $\ZZ_2$-simplicial complexes (with $\ZZ_2$-simplicial maps as morphisms) to the category of graphs.

\subsection{Relation to the algebraic framework}\label{subsec:category}

We will need basic concepts from the algebraic approach to (P)CSPs, such as
polymorphisms~\cite{Austrin17:sicomp,Brakensiek18:soda}, minions, and minion
homomorphisms~\cite{BulinKO18}. We shall define them only for graphs as we do
not need them for relational structures. We refer the reader
to~\cite{bkw17:survey,BulinKO18} for more details, examples, and general
definitions.

An $n$-ary \emph{polymorphism} of two graphs $G$ and $H$ is a homomorphism from
$G^n$ to $H$; that is, a map $f:V(G)^n\to V(H)$ such that, for all edges
$(u_1,v_1),\ldots,(u_n,v_n)$ in $G$,
$(f(u_1,\ldots,u_n),f(v_1,\ldots,v_n))$ is an edge in $H$. We denote by
$\Pol(G,H)$ the set of all polymorphisms of $G$ and $H$.

Given an $n$-ary function $f:A^n\to B$, the, say, first coordinate is called
\emph{essential} if there exist $a,a'\in A$ and $\vec{a}\in A^{n-1}$ such that
$f(a,\vec{a})\neq f(a',\vec{a})$; otherwise, the first coordinate is called
\emph{inessential} or \emph{dummy}. Analogously, one defines the $i$-th
coordinate to be (in)essential.
The \emph{essential arity} of $f$ is the number of essential coordinates.

Let $f:A^n\to B$ and $g:A^m\to B$ be $n$-ary and $m$-ary functions,
respectively. We call $f$ a \emph{minor} of $g$ if $f$ can be obtained from $g$ by
identifying variables, permuting variables, and introducing inessential
variables. More formally, $f$ is a minor of $g$ given by a map
$\pi:[m]\to[n]$ if $f(x_1,\ldots,x_n)=g(x_{\pi(1)},\ldots,x_{\pi(m)})$.

A \emph{minion} on a pair of sets $(A,B)$ is a non-empty set of functions (of
possibly different arities) from $A$ to $B$ that is closed under taking minors.
A minion is said to have \emph{bounded essential arity} if there is some $k$
such that every function from the minion has essential arity at most $k$.

Let $\mathscr M $ and $\mathscr N $ be two minions, not necessarily on the same pairs of
sets. A map $\xi:\mathscr M\to\mathscr N$ is called a \emph{minion
homomorphism} if (1) it preserves arities; i.e., maps $n$-ary functions to
$n$-ary functions, for all $n$; and (2) it preserves taking minors; i.e., for
each $\pi:[m]\to[n]$ and each $m$-ary $g\in\mathscr M$, we have
$\xi(g)(x_{\pi(1)},\ldots,x_{\pi(m)})=\xi(g(x_{\pi(1)},\ldots,x_{\pi(m)}))$.
Minion homomorphisms provide an algebraic way to give reductions between PCSPs.

\begin{theorem}[\cite{BulinKO18}]
	If there is a minion homomorphism $\xi \colon \Pol(G_1,H_1) \to \Pol(G_2,H_2)$,
	then $\PCSP(G_2,H_2)$ is log-space reducible to $\PCSP(G_1,H_1)$.
\end{theorem}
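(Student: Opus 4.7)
The plan is to apply the standard ``free structure'' reduction from the algebraic framework of~\cite{BulinKO18}, in which $\xi$ transports soundness between the polymorphism minions $\mathscr M := \Pol(G_1, H_1)$ and $\mathscr N := \Pol(G_2, H_2)$. Given an instance $I$ of $\PCSP(G_2, H_2)$ with $n := |V(I)|$, I would construct a graph $J = F_{\mathscr M}(I)$ of size polynomial in $|I|$ so that homomorphisms $\phi \colon J \to H_1$ are in natural bijection with elements $f \in \mathscr M^{(n)}$ satisfying a family of minor identities indexed by $E(I)$ (each identity pinning down how the relevant 2-ary minor of $f$ acts on the corresponding edge of $G_1$). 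The construction is log-space because each edge of $I$ contributes a bounded-size gadget depending only on the fixed templates $(G_1, H_1)$.

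\textbf{Completeness.} From $h \colon I \to G_2$, build the projection-style polymorphism in $\mathscr M^{(n)}$ whose $i$-th coordinate plays the role that $h(i) \in V(G_2)$ plays inside $\mathscr N$; projections always lie in $\mathscr M$ and, through the bijection, assemble into a homomorphism $J \to G_1$. \textbf{Soundness.} Given $\phi \colon J \to H_1$, read off the corresponding $f \in \mathscr M^{(n)}$ satisfying the identities, then apply $\xi$ to obtain $\xi(f) \in \mathscr N^{(n)}$, which satisfies the \emph{same} identities (by arity- and minor-preservation), now interpreted inside $\Pol(G_2, H_2)$. Evaluating $\xi(f)$ on a canonical $n$-tuple of vertices of $G_2$ produces an assignment $V(I) \to V(H_2)$; the preserved identities force this to be a homomorphism $I \to H_2$.

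The conceptual heart --- and the main obstacle --- is designing the gadgets so that the bijection encodes exactly the right minor identities: rich enough that every $\phi \colon J \to H_1$ corresponds to such an $f$, and yet constraining enough that, after translation through $\xi$ to $\mathscr N$, they yield precisely what is needed to reconstitute a homomorphism $I \to H_2$. Once this correspondence is in place, the role of $\xi$ in transferring soundness, and the log-space bound on the construction, both follow essentially automatically, because arity- and minor-preservation are built into the definition of a minion homomorphism and each gadget of $J$ can be emitted locally from one edge of $I$.
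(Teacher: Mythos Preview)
The paper does not give its own proof of this statement; it is quoted from~\cite{BulinKO18} and used as a black box. Your outline follows the same broad strategy as the original argument there (encode the instance as a system of minor identities, build an indicator structure, and transport solutions through $\xi$), and your soundness direction is essentially correct.

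Two genuine gaps remain, though. First, the free structure $F_{\mathscr M}(I)$ with domain $\mathscr M^{(n)}=\Hom(G_1^n,H_1)$ is \emph{not} polynomial in $n=|V(I)|$: a bijection between $\Hom(J,H_1)$ and a set of full $n$-ary polymorphisms forces $|V(J)|\ge |V(G_1)|^n$, contradicting your ``bounded-size gadget per edge'' description. In~\cite{BulinKO18} the instance $I$ is instead encoded as a minor condition whose variables have arity bounded by $|V(G_2)|$ (a constant, since $G_2$ is fixed), and $J$ is the indicator structure of \emph{that} condition---copies of $G_1^{|V(G_2)|}$ glued along the identities---which is linear in $|I|$ and computable in log-space. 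Second, your completeness step does not yield what is needed: exhibiting an element of $\mathscr M^{(n)}=\Pol(G_1,H_1)^{(n)}$ gives only $J\to H_1$ via your bijection, whereas the reduction requires $J\to G_1$. The correct argument is that $h\colon I\to G_2$ witnesses that the minor condition is satisfied by \emph{projections}, and projections lie already in $\Pol(G_1,G_1)$, not merely in $\mathscr M$; it is this that yields $J\to G_1$. The phrase ``projection-style polymorphism in $\mathscr M^{(n)}$ whose $i$-th coordinate plays the role that $h(i)\in V(G_2)$ plays inside $\mathscr N$'' does not name a well-defined object, since vertices of $G_2$ have no intrinsic meaning inside $\mathscr M$.
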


The following hardness result is a special case of a result obtained
in~\cite{BulinKO18} via a reduction from Gap Label Cover. It gives an algebraic
tool to prove hardness for PCSPs.

\begin{theorem}[\cite{BulinKO18}]\label{thm:minbndarity}
Let $G$ and $H$ be two graphs with $G\to H$. Assume that there exists a minion
  homomorphism $\xi:\Pol(G,H)\to\mathscr M$ for some minion $\mathscr M$ on a
  pair of (possibly infinite) sets such that $\mathscr M$ has bounded essential
  arity and does not contain a constant function (i.e., a function without
  essential variables). Then $\PCSP(G,H)$ is NP-hard.
\end{theorem}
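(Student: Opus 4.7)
The plan is to reduce from Gap Label Cover, which is NP-hard with arbitrarily small soundness $\eps>0$ (with label set size growing accordingly) by the PCP theorem combined with parallel repetition. Let $k$ be the bound on essential arity in $\mathscr M$; I will fix the target soundness $\eps<1/k$. Given a bipartite Label Cover instance with variables $w$ of label sets $[\ell_w]$ and projection constraints $\pi_e\colon[\ell_u]\to[\ell_v]$ for each edge $e=(u,v)$, I build a $\PCSP(G,H)$ instance $I$ as follows. For each variable $w$, include a disjoint copy of $G^{\ell_w}$. For each edge $e=(u,v)$, attach a \emph{gluing gadget} that forces any homomorphism $I\to H$, restricted to the $u$- and $v$-copies, to yield polymorphisms $f_u,f_v\in\Pol(G,H)$ satisfying the minor relation $f_v(x_1,\dots,x_{\ell_v})=f_u(x_{\pi_e(1)},\dots,x_{\pi_e(\ell_u)})$. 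Such a gadget is pp-definable from $\PCSP(G,H)$ via the \emph{free structure} construction that underlies the algebraic framework of~\cite{BulinKO18}.

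\textbf{Completeness.} A satisfying labeling $\sigma$ yields a homomorphism $I\to G$ (hence into $H$) by mapping the $w$-copy of $G^{\ell_w}$ to $G$ via projection onto its $\sigma(w)$-th coordinate. The Label Cover constraint $\pi_e(\sigma(u))=\sigma(v)$ ensures that these coordinate projections agree across each gluing gadget.

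\textbf{Soundness.} Suppose $h\colon I\to H$ exists. Extracting the restrictions to the $G^{\ell_w}$ copies gives polymorphisms $f_w\in\Pol(G,H)$ with $f_v$ the $\pi_e$-minor of $f_u$ for every edge $e=(u,v)$. Applying $\xi$, the $g_w\defeq\xi(f_w)\in\mathscr M$ inherit the same minor relations because $\xi$ preserves arities and minors. Since $\mathscr M$ contains no constant function, the set $E(g_w)\subseteq[\ell_w]$ of essential coordinates is nonempty for every $w$; since $\mathscr M$ has bounded essential arity, $|E(g_w)|\le k$. A direct check from the definition of essentiality shows that if $j\in E(g_v)$, i.e.\ varying $x_j$ changes $g_v(\vec x)=g_u(x_{\pi_e(1)},\dots,x_{\pi_e(\ell_u)})$, then at least one coordinate $i\in\pi_e^{-1}(j)$ must be essential in $g_u$; consequently $E(g_v)\subseteq\pi_e(E(g_u))$. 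Now sample a random labeling by drawing $\sigma(w)$ uniformly from $E(g_w)$: for each edge, $\Pr[\pi_e(\sigma(u))=\sigma(v)]\ge\sum_{j\in E(g_v)}\tfrac{1}{|E(g_v)|}\cdot\tfrac{1}{|E(g_u)|}\ge\tfrac{1}{|E(g_u)|}\ge\tfrac{1}{k}$. Some fixed labeling therefore satisfies at least a $1/k$-fraction of the Label Cover constraints, contradicting $\eps<1/k$ soundness. Hence no such $h$ exists on \textsf{NO} instances, completing the reduction.

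\textbf{Main obstacle.} The technical heart of the argument is realising the gluing gadget that pp-defines the minor relation $f_v=f_u\circ\pi_e$ from $\PCSP(G,H)$; once this is in hand, the essentiality calculus and the linearity-of-expectation argument above are routine. This pp-construction via the free structure is precisely the bridge between minion homomorphisms and Label Cover that is established in the algebraic framework of~\cite{BulinKO18}, so the proof essentially amounts to threading that bridge through the bounded-arity/no-constant hypothesis on $\mathscr M$.
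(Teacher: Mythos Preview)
The paper does not prove this theorem: it is quoted from~\cite{BulinKO18} (with the remark that it is ``obtained in~\cite{BulinKO18} via a reduction from Gap Label Cover'') and used as a black box. Your sketch correctly reconstructs that Label Cover reduction---building $G^{\ell_w}$-blocks, enforcing minor relations across Label Cover edges, completeness via coordinate projections, and soundness by decoding essential coordinates of $\xi(f_w)$ and averaging---so it matches the approach of the cited source rather than anything in the present paper.

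One small wording caveat: in~\cite{BulinKO18} the ``gluing'' is not a separate pp-definable gadget but is built directly into the instance (for each Label Cover edge one adds, for every edge of the appropriate power of $G$, an edge of $I$ between the two blocks, using the map $\pi_e$ to reindex coordinates); your phrasing ``pp-definable \dots\ via the free structure'' is morally right but slightly blurs this point. The soundness calculation you give (that $E(g_v)\subseteq\pi_e(E(g_u))$ and hence a uniformly random label from $E(g_w)$ satisfies each constraint with probability at least $1/k$) is exactly the standard argument.
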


Our methods do not give minion homomorphisms in general: while Observation~\ref{obs:adj1} gives a reduction from $\PCSP(G,\Gamma H)$ to $\PCSP(\Lambda G, H)$, it does not give a minion homomorphism from which the reduction would follow (from $\Pol(\Lambda G, H)$ to $\Pol(G,\Gamma H)$).
Indeed it cannot, as discussed below Proposition~\ref{prop:right-hard}.
However, adjoint functors in the (non-thin) category of graphs do imply such a minion homomorphism.

In the remainder of this section, we assume knowledge of basic definitions in category theory.
One can define minions in any Cartesian category $\Cc$ (i.e. a category with all finite products), using morphisms of $\Cc$ in place of functions.
For objects $G,H \in \Cc$, $\Pol_{\Cc}(G,H)$ is the minion of morphisms from $G^L$ (the $L$-fold categorical product of $G$) to $H$.
A function $\pi\colon [L] \to [L']$ induces a morphism $\pi_G \colon G^{L'} \to G^L$.
For a graph $G$, it maps $(v_1,\dots,v_{L'})$ to $(v_{\pi(1)},\dots,v_{\pi(L)})$.
In general, it can be defined as the product morphism $\langle p_{\pi(1)},\dots,p_{\pi(L)} \rangle$ of appropriate projections $p_i \colon G^L \to G$.
For a polymorphism $f \colon G^L \to H$, the minor of $f$ by $\pi$ is then simply $f \circ \pi_G \colon G^{L'} \to H$.

For objects $G$ and $H$ of a category, we denote by $\hom(G,H)$ the set of morphisms from $G$ to $H$.

\begin{lemma}\label{lem:cat}
	Let $\Gamma\colon \Cc\to\Dd$ and $\Omega\colon \Dd\to\Cc$ be adjoint functors between Cartesian categories $\Cc,\Dd$.
	Then for all objects $G$ in $\Cc$ and $H$ in $\Dd$,
	there is a minion homomorphism from $\Pol_{\Dd}(\Gamma G, H)$ to $\Pol_{\Cc}(G,\Omega H)$.
	If, moreover, $\Gamma$ preserves products then this is a minion isomorphism.
\end{lemma}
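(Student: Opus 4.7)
The plan is to build $\xi$ by composing two natural bijections: the adjunction bijection $\hom_{\Dd}(\Gamma(G^L), H) \cong \hom_{\Cc}(G^L, \Omega H)$, and a comparison map between $(\Gamma G)^L$ and $\Gamma(G^L)$ that arises from the universal property of products. Concretely, the morphisms $\Gamma(p_i) \colon \Gamma(G^L) \to \Gamma G$, where $p_i \colon G^L \to G$ are the projections, induce a canonical morphism $\alpha_L \colon \Gamma(G^L) \to (\Gamma G)^L$. Given a polymorphism $f \colon (\Gamma G)^L \to H$, I would define $\xi(f) \colon G^L \to \Omega H$ to be the image of $f \circ \alpha_L \colon \Gamma(G^L) \to H$ under the adjunction. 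Arity is preserved by construction, so only minor-preservation needs to be checked.

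For $\pi \colon [L] \to [L']$, the identity to verify is $\xi(f \circ \pi_{\Gamma G}) = \xi(f) \circ \pi_G$. The key ingredient is the naturality square $\pi_{\Gamma G} \circ \alpha_{L'} = \alpha_L \circ \Gamma(\pi_G)$, which I would check by postcomposing with each projection $q_i \colon (\Gamma G)^L \to \Gamma G$: both sides yield $\Gamma(p'_{\pi(i)})$, using the defining identities $q_i \circ \alpha_L = \Gamma(p_i)$ and $q_i \circ \pi_{\Gamma G} = q'_{\pi(i)}$. Combined with the naturality of the adjunction in the first variable---so that precomposition by $\Gamma(\pi_G)$ on the $\Dd$ side corresponds to precomposition by $\pi_G$ on the $\Cc$ side---this yields that $\xi(f \circ \pi_{\Gamma G})$ and $\xi(f) \circ \pi_G$ both transport under the adjunction to $f \circ \alpha_L \circ \Gamma(\pi_G)$, hence are equal.

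For the isomorphism statement, note that $\Gamma$ preserves products precisely when each $\alpha_L$ is an isomorphism; in that case, precomposition by $\alpha_L$ is a bijection $\hom_{\Dd}((\Gamma G)^L, H) \to \hom_{\Dd}(\Gamma(G^L), H)$, and composing with the adjoint bijection makes $\xi$ bijective at every arity, so it becomes a minion isomorphism. The main obstacle is essentially bookkeeping: identifying $\alpha_L$ as the right comparison map, verifying the naturality square, and applying the naturality of the adjunction; no creative step is required beyond these categorical manipulations.
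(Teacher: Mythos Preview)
Your proposal is correct and follows essentially the same route as the paper: your comparison map $\alpha_L$ is exactly the paper's $\psi_L = \langle \Gamma p_1,\dots,\Gamma p_L\rangle$, and the paper likewise defines $\xi(f)=\Phi_{G^L,H}(f\circ\psi_L)$ and verifies minor preservation via the same naturality square $\pi_{\Gamma G}\circ\psi_{L'}=\psi_L\circ\Gamma(\pi_G)$ together with naturality of the adjunction. The isomorphism clause is also argued identically, by noting that $\psi_L$ is an isomorphism when $\Gamma$ preserves products.
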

\begin{proof}
	This essentially amounts to checking definitions.
	We have a natural morphism $\psi_L \colon \Gamma(G^L) \to (\Gamma G)^L$
	defined as the product morphism $\langle \Gamma p_1,\dots, \Gamma p_L \rangle$ for projections $p_i \colon G^L \to G$.
	It is natural in the following sense: for every function $\pi\colon [L] \to [L']$, the following diagram commutes:\vspace*{-\baselineskip}
	$$\begin{tikzcd}
		\Gamma(G^{L'}) \ar[d, "\Gamma \pi_G"] \ar[r, "\psi_{L'}"] &
		(\Gamma G)^{L'} \ar[d, "\pi_{\Gamma G}"] \\
		\Gamma(G^L) \ar[r, "\psi_L"] & (\Gamma G)^L
	\end{tikzcd}$$
	\noindent
	Indeed, $\psi_L \circ  \Gamma \pi_G = \pi_{\Gamma G} \circ \psi_{L'}$, because it is the unique morphism whose composition with $p'_i \colon (\Gamma G)^L \to \Gamma G$ is $\Gamma p_{\pi(i)}$ (in other words, it is the product morphism $\langle \Gamma p_{\pi(1)},\dots, \Gamma p_{\pi(L)} \rangle$).

	Let $\Omega$ be a right adjoint of $\Gamma$.
	Let $\Phi_{G^L\kern-.2em,H} \colon  \hom(\Gamma(G^L),H) \to \hom(G^L,\Omega H)$ be the natural isomorphism given by definition of adjunction. Naturality here means that in particular the right square in the following diagram commutes:

	\[\begin{tikzcd}
		\hom((\Gamma G)^L,    H) \ar[d, "-\circ\pi_{\Gamma G}"] \ar[r, "-\circ\psi_L"]               &
		\hom(\Gamma(G^L),    H)  \ar[d, "-\circ\Gamma\pi_G"]    \ar[r, "\Phi_{G^L\kern-.2em,H}"]     &			
		\hom(G^L,   \Omega H)    \ar[d, "-\circ\pi_G"]                                               \\
		\hom((\Gamma G)^{L'}, H)                                \ar[r, "-\circ\psi_{L'}"]            &
		\hom(\Gamma(G^{L'}), H)                                 \ar[r, "\Phi_{G^{L'}\kern-.2em,H}"]  &
		\hom(G^{L'},\Omega H)                                   
	\end{tikzcd}\]

	\noindent
	That is, for $f\colon \Gamma(G^L) \to \Omega H$, we have $\Phi_{G^L\kern-.2em,H}(f) \circ \pi_G = \Phi_{G^{L'}\kern-.2em,H}(f \circ \Gamma \pi_G)$.
	The left square also commutes because of the previously discussed commutation.
	Therefore, we can define a minion homomorphism $\xi \colon \hom((\Gamma G)^L, H) \to \hom(G^L,   \Omega H)$
	as $\xi(f) \defeq \Phi_{G^L\kern-.2em,H}(f \circ \psi_L)$.
	Indeed, $\xi$ preserves minors, because
	$\xi(f \circ \pi_{\Gamma G}) = \xi(f) \circ \pi_G$ as seen on the perimeter of the above diagram.

	If $\Gamma$ preserves products, then $\psi_L$ is an isomorphism.
	Since $\Phi_{G^L\kern-.2em,H}$ is a bijection, this means $\xi$ is a minion isomorphism.
\end{proof}

A basic lemma in category theory says that if a functor $\Gamma$ admits a left adjoint, then it preserves products (indeed, all limits).
So a pair of adjoint pairs $(\Lambda,\Gamma)$, $(\Gamma,\Omega)$ implies a minion isomorphism.
Hence the first part of Lemma~\ref{lem:cat} is analogous to Observation~\ref{obs:adj1},
while the second part is analogous to Observation~\ref{obs:adj2}.
We can also derive the second direction as a corollary to the following lemma.
\begin{lemma}\label{lem:cat2}
	Let $\Gamma\colon \Cc \to \Dd$ be a functor which preserves products.
	Then there is a minion homomorphism $\Pol_\Cc(G,H) \to \Pol_{\Dd}(\Gamma G, \Gamma H)$, for all $G,H \in \Cc$.
\end{lemma}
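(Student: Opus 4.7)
The plan is to mimic the construction from the proof of Lemma~\ref{lem:cat}, but using that $\Gamma$ acts directly on morphisms (rather than going through an adjunction). First, I will recall the canonical comparison morphism $\psi_L\colon \Gamma(G^L)\to (\Gamma G)^L$ defined as the product morphism $\langle \Gamma p_1,\dots,\Gamma p_L\rangle$. The hypothesis that $\Gamma$ preserves products is exactly the statement that $\psi_L$ is an isomorphism for every $L$, which will be the only substantive use of the assumption.

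Given these isomorphisms, I will define the candidate map $\xi\colon \Pol_\Cc(G,H)\to\Pol_\Dd(\Gamma G,\Gamma H)$ on an $L$-ary polymorphism $f\colon G^L\to H$ by
\[
\xi(f) \;\defeq\; \Gamma f \circ \psi_L^{-1}\colon (\Gamma G)^L \to \Gamma H.
\]
Arity preservation is immediate from the definition. For minor preservation, given $\pi\colon [L]\to[L']$, I need to show $\xi(f\circ \pi_G)=\xi(f)\circ \pi_{\Gamma G}$. Expanding the left-hand side using functoriality of $\Gamma$ gives $\Gamma f \circ \Gamma\pi_G \circ \psi_{L'}^{-1}$, while the right-hand side is $\Gamma f \circ \psi_L^{-1} \circ \pi_{\Gamma G}$. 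So the required identity reduces to $\Gamma \pi_G \circ \psi_{L'}^{-1} = \psi_L^{-1}\circ \pi_{\Gamma G}$, equivalently, to the commutativity of the square
\[\begin{tikzcd}
\Gamma(G^{L'}) \ar[d,"\Gamma\pi_G"'] \ar[r,"\psi_{L'}"] & (\Gamma G)^{L'} \ar[d,"\pi_{\Gamma G}"] \\
\Gamma(G^L) \ar[r,"\psi_L"] & (\Gamma G)^L
\end{tikzcd}\]
which is the same naturality square already verified inside the proof of Lemma~\ref{lem:cat} (both composites agree with the product morphism $\langle \Gamma p_{\pi(1)},\dots,\Gamma p_{\pi(L)}\rangle$, by the universal property of the product $(\Gamma G)^L$).

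There is no real obstacle; the entire argument is bookkeeping against the universal property of products and functoriality of $\Gamma$. The only place where the hypothesis enters is in inverting $\psi_L$, so if one merely wanted a minion homomorphism into something coarser one could drop the invertibility assumption and use $\psi_L$ directly, but then the target would not literally be $\Pol_\Dd(\Gamma G,\Gamma H)$. As a sanity check, one recovers the second part of Lemma~\ref{lem:cat} by applying this construction: for an adjoint pair $(\Gamma,\Omega)$ the left adjoint $\Gamma$ automatically preserves products (a standard fact), so Lemma~\ref{lem:cat2} yields a minion homomorphism $\Pol_\Cc(G,H)\to \Pol_\Dd(\Gamma G,\Gamma H)$ that, composed with the isomorphism from Lemma~\ref{lem:cat}, reproduces the expected map.
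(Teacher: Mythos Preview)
Your proof is correct and essentially identical to the paper's: both define $\xi(f)=\Gamma f\circ\psi_L^{-1}$ and verify minor preservation via the same naturality square $\psi_L\circ\Gamma\pi_G=\pi_{\Gamma G}\circ\psi_{L'}$.

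One small correction to your closing ``sanity check'': it is not true that a left adjoint preserves products. Left adjoints preserve colimits; it is \emph{right} adjoints that preserve limits (and hence products). In the paper's setting $\Gamma$ preserves products because it admits a \emph{left} adjoint $\Lambda$ (i.e.\ $\Gamma$ is a right adjoint in the pair $(\Lambda,\Gamma)$), not because of the pair $(\Gamma,\Omega)$. This does not affect the validity of your proof of the lemma itself.
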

\begin{proof}	
	Recall from the proof of Lemma~\ref{lem:cat} the following diagram, for $G \in \Cc$, $L,L' \in \NN$, and $\pi\colon [L] \to [L']$:
	$$\begin{tikzcd}
		\Gamma(G^{L'}) \ar[d, "\Gamma \pi_G"] \ar[r, "\psi_{L'}"] &
		(\Gamma G)^{L'} \ar[d, "\pi_{\Gamma G}"] \\
		\Gamma(G^L) \ar[r, "\psi_L"] &
		(\Gamma G)^L
	\end{tikzcd}$$
	Since $\Gamma$ preserves products, $\psi_L$ is an isomorphism,
	so we can define a minion homomorphism $\xi\colon \Pol_\Cc(G,H) \to \Pol_{\Dd}(\Gamma G, \Gamma H)$ as follows: $\xi(f) \defeq \Gamma(f) \circ \psi_{L}^{-1}$, for $f \colon G^L \to H$.
	This preserves minors, because from the diagram's commutation we have:
	\[\xi(f \circ \pi_G) = \Gamma(f \circ \pi_G) \circ \psi_{L'}^{-1} = \Gamma(f) \circ \Gamma(\pi_G) \circ \psi_{L'}^{-1} =
	\Gamma(f) \circ \psi_L^{-1} \circ \pi_{\Gamma G} = \xi(f) \circ \pi_{\Gamma G}.\vspace*{-\baselineskip}\]
\end{proof}

\begin{corollary}\label{cor:cat}
	Let $\Gamma\colon \Cc \to \Dd$ be a functor which preserves products.
	Let $\Omega$ be a thin right adjoint to $\Gamma$.
	Then there is a minion homomorphism $\Pol_\Cc(G,\Omega H) \to \Pol_{\Dd}(\Gamma G, H)$ for all $G \in \Cc, H \in \Dd$.
\end{corollary}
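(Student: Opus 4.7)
The plan is to combine Lemma~\ref{lem:cat2} with a post-composition trick using the (thin) counit of the adjunction. Since $\Omega$ is a thin right adjoint to $\Gamma$, the trivial morphism $\Omega H \to \Omega H$ in the thin category corresponds via adjointness to the existence of a morphism $\Gamma \Omega H \to H$ in $\Dd$. I would begin by fixing any such morphism and calling it $\epsilon$.

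Next, I would apply Lemma~\ref{lem:cat2} to the objects $G$ and $\Omega H$ of $\Cc$ to obtain a minion homomorphism
\[\xi_1 \colon \Pol_\Cc(G,\Omega H) \to \Pol_\Dd(\Gamma G,\Gamma \Omega H),\]
given by $\xi_1(f) = \Gamma(f) \circ \psi_L^{-1}$ for $f \colon G^L \to \Omega H$, where $\psi_L \colon \Gamma(G^L) \to (\Gamma G)^L$ is the canonical morphism, an isomorphism thanks to the assumption that $\Gamma$ preserves products. I would then define $\xi \colon \Pol_\Cc(G,\Omega H) \to \Pol_\Dd(\Gamma G, H)$ by $\xi(f) \defeq \epsilon \circ \xi_1(f)$.

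It remains to verify that $\xi$ preserves arities and taking minors. Arity preservation is immediate, since post-composing with $\epsilon$ does not change the domain $(\Gamma G)^L$. For minors, recall that the minor of a polymorphism along $\pi \colon [L] \to [L']$ is implemented by pre-composition with $\pi_G$ (respectively $\pi_{\Gamma G}$); since $\xi_1$ already preserves minors by Lemma~\ref{lem:cat2}, we obtain
\[\xi(f \circ \pi_G) \;=\; \epsilon \circ \xi_1(f \circ \pi_G) \;=\; \epsilon \circ \xi_1(f) \circ \pi_{\Gamma G} \;=\; \xi(f) \circ \pi_{\Gamma G},\]
as desired. There is no genuine obstacle here; the one subtlety worth flagging is that thin adjointness only produces $\epsilon$ non-canonically (it asserts the existence of \emph{some} morphism $\Gamma \Omega H \to H$, not a distinguished one), but any choice suffices because the statement asks only for the existence of a minion homomorphism.
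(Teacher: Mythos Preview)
Your proposal is correct and matches the paper's proof essentially line for line: the paper also obtains a morphism $\eps_H \colon \Gamma\Omega H \to H$ from thin adjointness (explicitly noting it need not be natural), applies Lemma~\ref{lem:cat2} to get a minion homomorphism $\Pol_\Cc(G,\Omega H) \to \Pol_\Dd(\Gamma G, \Gamma\Omega H)$, and then post-composes with $\eps_H$.
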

\begin{proof}
	Since $\Gamma$ has a \emph{thin} right adjoint $\Omega$,
	there exists a morphism $\eps_H\colon \Gamma \Omega H \to H$ for all $H$ (we don't need it to be natural in any way).
	Hence we can compose the minion homomorphism $\Pol_\Cc(G,\Omega H) \to \Pol_{\Dd}(\Gamma G, \Gamma \Omega H)$
	from Lemma~\ref{lem:cat2} with the trivial minion homomorphism $\Pol_{\Dd}(\Gamma G, \Gamma \Omega H) \to \Pol_{\Dd}(\Gamma G, H)$ obtained by composing with $\eps_H$.
\end{proof}

If we have adjoint functors in the (non-thin) category of graphs (or multigraphs), then Lemma~\ref{lem:cat} implies a minion homomorphism between the standard polymorphism minions (because a morphism is associated with a function between vertex sets).
One could also apply Lemma~\ref{lem:cat} to the \emph{thin} category of graphs, but the conclusion is then about minions of polymorphisms in that thin category, which is useless, since it does not distinguish between different projections $G^L \to G$.

All the thin functors we have considered are in fact functors in the category of graphs or digraphs: in particular $\Lambda_k,\Gamma_k,\Omega_k,\delta_L,\delta,\delta_R$.
The definitions can also be extended to give functors in the category of multi(di)graphs.
The pairs $(\Lambda_k,\Gamma_k)$ and $(\delta_L,\delta)$ are adjoint pairs in the categories of multi(di)graphs
(this fails in the category of (di)graphs; e.g. the number of homomorphisms $\Lambda_3 G \to H$ is not always equal to the number of homomorphisms $G \to \Gamma_3 H$).
This implies minion homomorphisms $\Pol(\Lambda_k G, H) \to \Pol(G,\Gamma_k H)$ and 
$\Pol(\delta_L G, H) \to \Pol(G,\delta H)$.

In contrast, the pairs $(\Gamma_k,\Omega_k)$ and $(\delta,\delta_R)$ are not adjoint pairs; they are only thin adjoints.
Since $\Gamma_k$ and $\delta$ are right adjoints (of $\Lambda_k$ and $\delta_L$), they preserve products.
Applying Corollary~\ref{cor:cat} hence at least gives minion homomorphisms
$\Pol(G, \Omega_k H) \to \Pol(\Gamma_k G, H)$ and 
$\Pol(G, \delta_R H) \to \Pol(\delta G, H)$.
However, our results would only follow from the opposite direction.
This is impossible to obtain in general: a minion homomorphism
$\Pol(\delta G, H) \xrightarrow{?} \Pol(G, \delta_R H)$
would imply the following minion homomorphism
\[\Pol(K_4, K_k) \to \Pol(\delta K_6, K_k) \xrightarrow{?} \Pol(K_6, \delta_R K_k) \to \Pol(K_6, K_{2^k})\]
(trivially from $\delta K_6 \to K_4$ and $\delta_R K_k \to K_{2^k}$),
which is impossible by~\cite[Proposition~10.3]{BartoBKO19}.
Thus the seemingly technical difference between adjoints and thin adjoints turns out to be crucial.

As proved by Matsushita~\cite{Matsushita17}, the hom complex $\Hom(K_2,-)$ has a left adjoint from the category of $\ZZ_2$-simplicial complexes with $\ZZ_2$-simplicial maps to the category of graphs; the left adjoint preserves products.

\section{Conclusions}
The reduction in Lemma~\ref{lem:red}, on which our first main result relies, does not have a corresponding minion homomorphism.
Given the simplicity of the reduction itself, this contrasts with the success of minion homomorphism in explaining other reductions between promise constraint satisfaction problems.
It is to been seen whether this notion can be extended to a more general relation between polymorphism sets in a way that would imply Lemma~\ref{lem:red}.

The question of whether $K_4$ is left-hard stands open.
In principle, it may be possible to extend the proof in Appendix~\ref{app:topo} using more tools from algebraic topology to analyse $\ZZ_2$-maps $(\Sphere^1)^L \to \Sphere^2$ and deduce an appropriate minion homomorphism.
It could also be interesting to consider how $\delta$ or $\delta_R$ affect the topology of a graph, cliques in particular.

Another direction could be to look at Huang's Theorem~\ref{thm:Huang} not as a black-box: could constructions like $\delta$ be useful to say something directly about PCPs?

\appendix%
\section{Left-hardness using the box complex}\label{app:topo}
\subsection*{Basic definitions in topology}
For topological spaces $X,Y$, we call a continuous function $f: X \to Y$ a \emph{map}, for short. 
Two maps $f,g: X \to Y$ are \emph{homotopic} if they can be continuously transformed into one another; formally: there is a family of maps $\phi_t: X \to Y$ for $t\in [0,1]$ (called a \emph{homotopy}) such that $\phi_0 = f$, $\phi_1=g$ and such that the function $(t,x) \mapsto \phi_t(x)$ from $[0,1] \times X$ to $Y$ is continuous.
Two spaces $X,Y$ are \emph{homotopy equivalent} if there are maps $f:X \to Y$ and $g: Y \to X$ such that $g \circ f$ and $f \circ g$ are homotopic to identity maps on $X$ and on $Y$.

We shall only consider topological spaces described in the following simple combinatorial way.
A \emph{(simplicial) complex} $K$ is a family of non-empty finite sets that is downward closed, in the sense that $\emptyset \neq \sigma'\subseteq \sigma \in K$ implies $\sigma' \in K$.
The sets in $K$ are called \emph{faces} (or \emph{simplices}) of the complex, while their elements $V(K) := \bigcup_{\sigma \in K} \sigma$ are the \emph{vertices} of the complex.
The \emph{geometric realisation} $|\sigma|$ of a face $\sigma \in K$ is the subset of $\RR^{V(K)}$ defined as the convex hull of $\{e_v \mid v \in \sigma\}$, where $e_v$ is the standard basis vector corresponding to the $v$ coordinate in $\RR^{V(K)}$.
The geometric realisation $|K|$ of $K$ is the topological space obtained as the subspace $\bigcup_{\sigma \in K} |\sigma| \subseteq \RR^{V(K)}$.
We represent the points of $|K|$ as linear combinations of vertices $\lambda_1 v_1 + \dots \lambda_n v_n$ such that $\{v_1,\dots,v_n\}\in K$ and $\lambda_i$ are non-negative reals summing to 1.
We often refer to $K$ itself as a topological space, meaning $|K|$.
A \emph{simplicial map} $K\to K'$ is a function $f \colon V(K) \to V(K')$ such that $f(\sigma) \defeq \{f(v) \mid v\in \sigma\}$ is a face of $K'$ whenever $\sigma$ is a face of $K$.
It induces a map $|f|\colon|K|\to |K'|$ by extending it linearly from vertices on each face: $|f|(\sum_i \lambda_i v_i) \defeq \sum \lambda_i f(v_i)$.

For example, the circle may be represented as the triangle $K=\{\{1\},\{2\},\{3\}, $ $\{1,2\}, \{2,3\}, \{3,1\}\}$, meaning that $|K|$, which is the sum of three intervals in $\RR^3$, is homotopy equivalent to the unit circle $\Sphere^1$ in $\RR^2$. Adding the face $\{1,2,3\}$ to $K$ would make $|K|$ contractible, that is, homotopy equivalent to the one-point space.

\subsection*{Equivariant topology -- topology with symmetries}
Rather than asking about ``non-trivial maps'' (maps not homotopic to a constant map) it is easier to work with equivariant topology, that is, considering topological spaces together with their symmetries and symmetry-preserving maps.
A \emph{$\ZZ_2$-space} is a topological space $X$ equipped with a map $-: X \to X$, called a $\ZZ_2$-action on $X$, satisfying $-(-x) = x$ (for all $x\in X$).
We will call $-x$ the \emph{antipode} of $x$.
The main example is the $n$-dimensional \emph{sphere}: the $\ZZ_2$-space defined as the unit sphere in $\RR^{n+1}$ with $\ZZ_2$-action $x \mapsto -x$ as vectors.
A \emph{$\ZZ_2$-map} from $(X,-_X)$ to $(Y,-_Y)$ is a map $f: X \to Y$ that preserves the symmetry: $f(-_X\ x) = -_Y\ f(x)$ (this is also called an \emph{equivariant} map).
We write $X \to_{\ZZ_2} Y$ if such a map exists (the $\ZZ_2$-actions being clear from context).

Standard notions extend in a fairly straightforward way to equivariant notions.
A \emph{$\ZZ_2$-complex} is a simplicial complex $K$ together with a function $- : V(K) \to V(K)$ such that $-(-v) = v$ and $-\sigma \defeq \{-v \mid v \in \sigma\}\in K$ for $\sigma \in K$; this induces a $\ZZ_2$-action on $|K|$.
The \emph{product} of two $\ZZ_2$-spaces $X,Y$ is $X \times Y$ with ``simultaneous'' $\ZZ_2$-action $(x,y)\mapsto (-x,-y)$.
A homotopy $\phi_t$ between $\ZZ_2$-maps $f,g:X \to Y$ is called a \emph{$\ZZ_2$-homotopy} if $\phi_t$ is a $\ZZ_2$-map for all $t\in[0,1]$.
We say that two $\ZZ_2$-spaces $X,Y$ are $\ZZ_2$-homotopy equivalent, denoted $X \zeq Y$, if there are $\ZZ_2$-maps $f: X \to_{\ZZ_2} Y$ and $g: Y \to_{\ZZ_2} X$ such that $g \circ f$ and $f \circ g$ are $\ZZ_2$-homotopic to the identity.
Note this is stronger than just requiring $X \to_{\ZZ_2} Y$ and $Y \to_{\ZZ_2} X$; homotopy equivalence is more similar to graph isomorphism than to homomorphic equivalence of graphs.

\subsection*{The box complex -- the topology of a graph}
The \emph{box complex} $\Bx{G}$ of a graph $G$ is a $\ZZ_2$-complex defined as the family of vertex sets of complete bipartite subgraphs of $G \times K_2$ (with both sides non-empty) and their subsets.
In particular it contains all edges of $G \times K_2$ and every $K_{2,2} = C_4$ subgraph.
The topology of box complexes of the following graphs is folklore.

\begin{lemma}\label{lem:boxes}
	The following spaces are $\ZZ_2$-homotopy equivalent:
	\begin{enumerate}[label=(\roman*),itemsep=0pt]
	\item $\BBox{K_n} \zeq \Sphere^{n-2}$ for $n\geq 2$,
	\item $\BBox{C_n} \simeq_{\ZZ_2} \Sphere^1$ for odd $n\geq 3$,
	\item $\BBox{K_{p/q}} \simeq_{\ZZ_2} \Sphere^1$ for $2<\frac{p}{q}<4$,
	\item for every loop-less square-free graph $K$, $\BBox{K}$ is $\ZZ_2$-homotopy-equivalent to a 1-dimensional complex (a complex in which every face has at most 2 vertices).
	\end{enumerate}
\end{lemma}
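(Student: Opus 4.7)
The plan is to verify each of the four claims separately, leveraging in each case the explicit combinatorial structure of the maximal complete bipartite subgraphs of $G \times K_2$.

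For (i), I would identify $\BBox{K_n}$ with the subcomplex of the boundary of the $n$-dimensional cross-polytope (where $(v,0)$ plays the role of $+e_v$ and $(v,1)$ the role of $-e_v$) consisting of faces intersecting both sides of the labeling. Since $K_n$ is complete, a subset of $V(K_n)\times\{0,1\}$ is a complete bipartite subgraph of $K_n \times K_2$ iff it contains no antipodal pair $(v,0),(v,1)$; combined with the ``both sides non-empty'' requirement, this makes $\BBox{K_n}$ the boundary of the cross-polytope minus the two monochromatic polar $(n-1)$-simplices. These two facets are antipodal closed $(n-1)$-disks swapped by the $\ZZ_2$-action, so removing them from $\Sphere^{n-1}$ yields a cylinder $\Sphere^{n-2} \times (0,1)$ that $\ZZ_2$-equivariantly deformation retracts onto the equatorial $\Sphere^{n-2}$.

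For (iv), a loop-less square-free graph $G$ has no $K_{2,2}$ as a subgraph, so $G\times K_2$ is also $K_{2,2}$-free, and every maximal complete bipartite subgraph of $G\times K_2$ is a star $K_{1,d(v)}$ centered at some $v \in V(G)\times\{0,1\}$. The corresponding maximal face $\sigma_v$ of $\BBox{G}$ is a simplex of dimension $d(v)$, and any two such simplices $\sigma_v, \sigma_{v'}$ share at most a single edge of $G\times K_2$ as a common sub-face: stars on opposite sides share the edge $\{v,v'\}$ iff $v \sim v'$, while stars on the same side share no common sub-face (the ``$A_0$-parts'' of the box faces must coincide). One can therefore $\ZZ_2$-equivariantly deformation retract each $\sigma_v$ onto its spoke graph at the apex (these retractions agree with the identity on the at-most-1-dimensional pairwise intersections), yielding a $\ZZ_2$-homotopy equivalence $\BBox{G}\zeq G\times K_2$, which is a 1-complex. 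Part (ii) is then immediate: $C_n$ is square-free, so for odd $n$ we obtain $\BBox{C_n}\zeq C_n\times K_2 = C_{2n}$ (the product is connected since $\gcd(n,2)=1$), and $C_{2n}$ with the antipodal $\ZZ_2$-action is $\zeq \Sphere^1$.

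Part (iii) is the main obstacle: (iv) only covers the cycle case $p=2q+1$, since $K_{p/q}$ contains $K_{2,2}$ subgraphs whenever $p - 2q \geq 2$. For general $K_{p/q}$ with $2<\frac{p}{q}<4$ the plan is to construct a $\ZZ_2$-equivariant map $\BBox{K_{p/q}} \to \Sphere^1$ by sending each vertex $(j,\epsilon) \in V(K_{p/q})\times\{0,1\}$ to $(-1)^\epsilon \cdot e^{2\pi \sqrt{-1}\, j / p}$ and extending linearly across each face; the hypothesis $\frac{p}{q}<4$ ensures that the vertices of any maximal face lie within an open half-plane of $\RR^2$ (their total angular span being strictly less than $\pi$), so the linear extension avoids the origin and can be radially normalized to $\Sphere^1$. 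A candidate $\ZZ_2$-homotopy inverse $\Sphere^1 \to \BBox{K_{p/q}}$ can be built from the cyclic $\ZZ_p$-symmetry of $K_{p/q}$ by mapping arcs of $\Sphere^1$ into corresponding maximal faces. The most delicate step is showing the two compositions are $\ZZ_2$-homotopic to the identity; here the cyclic rotation symmetry should constrain the possibilities enough to make the argument tractable, though the details would require care.
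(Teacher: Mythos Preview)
Your plan is correct and, for parts (i)--(iii), essentially coincides with the paper's proof: the cross-polytope model for $\BBox{K_n}$, the derivation of (ii) from (iv) via $C_n\times K_2=C_{2n}$, and the ``project radially to $\Sphere^1$'' map for $K_{p/q}$ are exactly what the paper does (after reparametrising $K_{p/q}\times K_2$ as a Cayley graph on $\ZZ_{2p}$ for odd $p$). One small inaccuracy in your (iv): two stars $\sigma_v,\sigma_{v'}$ centred on the \emph{same} side can share a vertex (a common neighbour of $v,v'$), not ``no common sub-face''; this is harmless since your spoke-retractions fix vertices. The genuine methodological difference is in (iv): you collapse each maximal star simplex onto its spoke tree by an explicit $\ZZ_2$-equivariant deformation retraction, while the paper pairs each non-edge face $\{\vR{w_1},\dots,\vR{w_n}\}$ with $\{\vL{v},\vR{w_1},\dots,\vR{w_n}\}$ (square-freeness making $v$ unique) and invokes discrete Morse theory to collapse. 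Your approach is more elementary and self-contained; the paper's is shorter once one is willing to cite the acyclic-matching machinery. For (iii), the step you flag as ``delicate'' is handled in the paper by taking the inverse map to be simply the inclusion $|C_{2p}|\hookrightarrow\BBox{K_{p/q}}$ and linearly interpolating between $\iota\circ f$ and the identity inside each face; the half-plane claim you assert (faces span angle $<\pi$) is exactly the paper's observation that every complete bipartite subgraph lies in an interval of length $<p$ in $\ZZ_{2p}$, which uses $p-2q<p/2$ and a short two-step containment argument---worth spelling out, since it is not entirely immediate.
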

\begin{proof}
	For (i), see Proposition 19.8 in~\cite{Kozlov2008book}, Proposition 4.3 in~\cite{babson2006}, or Lemma 5.9.2 in~\cite{matousek2008using}.
	Informally, the vertices of $\Bx{K_n}$ can be mapped bijectively to points in $\RR^n$ of the form $\pm e_i \defeq (0,\dots,0,\pm1,0,\dots,0)$.
	These are vertices of the cross-polytope in $\RR^n$ (the $n$-dimensional counterpart of the octahedron).
	Faces of $\Bx{K_n}$ are exactly those subsets of $\{\pm e_1,\dots,\pm e_n\}$ that do not contain repeated indices ($+e_i$ and $-e_i$ for any $i$), except for the two sets $\{+e_1,\dots,+e_n\}$ and $\{-e_1,\dots,-e_n\}$ (since a bipartite complete graph containing all $n$ vertices on one side cannot contain any vertex on the other side).
	The complex is thus isomorphic to the cross-polytope (the $n$-dimensional counterpart to the octahedron) in $\RR^n$, but with the interior and two opposite facets removed.
	The cross-polytope after removing the interior is $\ZZ_2$-homotopy equivalent to $\Sphere^{n-1}$ and after removing two opposite facets it is $\ZZ_2$-homotopy equivalent to $\Sphere^{n-2}$.
	
	For (iv), let use denote the two vertices of $\Bx{K}$ corresponding to $v \in V(K)$ as $\vL{v}$ and $\vR{v}$.
	Observe that $\Bx{K}$ would be isomorphic to $K \times K_2$ (meaning the 1-dimensional simplicial complex with $V(K \times K_2)$ as vertices and with $E(K \times K_2)$ and their subsets as faces), except that it also contains $N[\vL{v}] \defeq \{\vL{v}\} \cup \{\vR{w} \colon w \in N(v)\}$ and $N[\vR{v}]$ for each $v \in V(K)$ (except those with empty neighbourhood).
	However, these additional faces can be collapsed.
	Formally, every face not in $E(K \times K_2)$ is either of the form $\{\vL{v},\vR{w_1},\dots,\vR{w_n}\}$ or $\{\vR{w_1},\dots,\vR{w_n}\}$ for some $w_i \in N(v)$ and $n\geq 2$,
	or the same with $\circ$ and $\bullet$ swapped.
	Since $K$ is square-free, even in the second case $v$ is uniquely determined by the $w_i$.
	Hence we can match these faces in pairs.
	This matching is easily checked to satisfy the definitions of a so-called \emph{acyclic $\ZZ_2$-matching} in Discrete Morse Theory,
	which allows to show that removing these faces gives a $\ZZ_2$-homotopy equivalent complex:
	see Section 3 in~\cite{Wrochna17b} for definitions and details.

	For (ii), observe that by the above, $\Bx{C_n}$ is $\ZZ_2$-homotopy equivalent to $C_n \times K_2 = C_{2n}$ as a simplicial complex (for odd $n$).
	It is straightforward to give a $\ZZ_2$-homotopy equivalence (in fact a homeomorphism) to $\Sphere^1$.

	For (iii), we first consider the case when $p$ is odd.
	Then, $K_{p/q} \times K_2$ is isomorphic to the Caley graph $K'$ of $\ZZ_{2p}$ with generators $\{\pm 1,\pm 3,\dots,\pm p-2q\}$ (the isomorphism maps $(i,0)$ to $2i$ and $(i,1)$ to $2i+p$).
	In particular, $K'$ includes a cycle $C_{2p}$ on $0,1,\dots,2p-1$
	and the $\ZZ_2$-action on $K_{p/q} \times K_2$ correspond to point reflection on $C_{2p}$.
	We thus have an inclusion map $\iota\colon |C_{2p}| \to |K'|$ (where $|K'|$ is is shorthand for $\BBox{K_{p/q}\times K_2}$ and $C_{2p}$ is meant as a subcomplex).
	Note that $\frac{p}{q} < 4$ is equivalent to $p-2q < \frac{p}{2}$, so two adjacent vertices of $K'$ are at distance at $<\frac{p}{2}$ in $C_{2p}$.
	Therefore, every face of the box complex (a complete bipartite subgraph of $K'$) is contained in an interval of length $<p$ in $\ZZ_{2p}$.
	Every point in the geometric realization of such a face can be unambiguously mapped by linear extension in the interval
	to a point in the geometric realization of $C_{2p}$, giving a $\ZZ_2$-map $f \colon |K'| \to |C_{2p}|$.
	The maps $\iota,f$ give a $\ZZ_2$-homotopy equivalence ($f \circ \iota \colon |C_{2p}|\to |C_{2p}|$ is equal to the identity, while $\iota \circ f$ is $\ZZ_2$-homotopic to the identity, since one can also linearly extrapolate between the definition of $f$ and the identity map).
	The proof for even $p$ is similar, the main difference being that $K'$ should be the graph on $\ZZ_p \times \{0,1\}$ with $(i,a)$ adjacent to $(j,b)$ if $a\neq b$ and $i,j$ are at distance $\leq\frac{p-2q}{2}$.
\end{proof}

Note that for a loop-less graph $K$, $\Bx{K}$ is a free $\ZZ_2$-complex, which means every face $\sigma$ is disjoint from $-\sigma$.
This in turn implies that $\BBox{K}$ is a free $\ZZ_2$-space, which means that a point is never its own antipode.
Proposition 5.3.2.(v) in~\cite{matousek2008using} shows that a free $\ZZ_2$-complex of dimension $n$ admits a $\ZZ_2$-map to $\Sphere^n$.
Hence for loop-less, square-free graphs $K$, we have $\BBox{K} \to_{\ZZ_2} \Sphere^1$.

\subsection*{The hom complex -- preserving products}
Instead, we will use the Hom complex $\Hom(K_2,G)$, which is $\ZZ_2$-homotopy equivalent to $\Bx{G}$,
as proved by Csorba~\cite{Csorba08}.
Its vertices are homomorphisms $K_2 \to G$, that is, oriented edges $(u,v)$ of $G$.
For every $U,V \subseteq V(G)$ such that $U \times V \subseteq E(G)$, $U \times V$ and its subsets are faces of $\Hom(K_2,G)$.
In other words, a set $\sigma$ of oriented edges is a face if for every two $(u,v), (u',v') \in \sigma$, $(u,v')$ is an oriented edge of $G$.
The $\ZZ_2$-action swaps $(u,v)$ to $(v,u)$.

This definition has the advantage that it respects products trivially (and exactly, not just up to homotopy equivalence):
$\Hom(K_2,G \times H)$ is isomorphic to $\Hom(K_2,G) \times \Hom(K_2,H)$ (as $\ZZ_2$-simplicial complexes).
The isomorphism simply maps the oriented edge between pairs $(g_1,h_1)$ and $(g_2,h_2) \in V(G) \times V(H)$ to the pair of oriented edges $((g_1,h_1),(g_2,h_2))$.
In the same way, $\Hom(K_2,G^L)$ is isomorphic $\Hom(K_2,G)^L$, mapping pairs of
$L$-tuples to $L$-tuples of pairs.

\begin{lemma}\label{lem:minionComplex}
	Let $f \colon G^L \to H$ be a graph homomorphism.
	Let $f' \colon \Hom(K_2,G)^L \to \Hom(K_2,H)$ be the induced simplical $\ZZ_2$-map, defined as:
	$$ f'((u_1,v_1),\dots,(u_L,v_L)) \defeq (f(u_1,\dots,u_L),f(v_1,\dots,v_L)).$$
	Then the transformation $f \mapsto f'$ preserves minors and composition.
\end{lemma}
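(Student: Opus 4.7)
The plan is to unwind the definitions in three passes: well-definedness and $\ZZ_2$-equivariance of $f'$ on vertices, the simplicial property on faces, and finally the two functoriality statements. No step is substantial — the content of the lemma is really that the hom complex has been set up so that these checks become automatic; I would emphasise in the exposition that this is the crucial reason for preferring $\Hom(K_2,-)$ over $\Bx{-}$ here.

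First I would check that $f'$ is well-defined and $\ZZ_2$-equivariant on vertices. Using the isomorphism $\Hom(K_2,G)^L \cong \Hom(K_2,G^L)$ noted just before the lemma, a vertex of $\Hom(K_2,G)^L$ is the same datum as an oriented edge $(\vec u,\vec v)$ of $G^L$. Since $f\colon G^L\to H$ is a graph homomorphism, $(f(\vec u),f(\vec v))$ is an oriented edge of $H$, hence a legitimate vertex of $\Hom(K_2,H)$. The $\ZZ_2$-action on either side swaps $\vec u\leftrightarrow\vec v$, with which $f'$ visibly commutes.

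Next I would verify that $f'$ extends to a simplicial map. Under the same isomorphism, a face of $\Hom(K_2,G)^L$ is a set $\sigma$ of oriented edges of $G^L$ forming a complete bipartite subgraph, i.e.\ $\sigma \subseteq U\times V$ for some $U,V\subseteq V(G)^L$ with $U\times V \subseteq E(G^L)$. Applying $f$ coordinatewise, the complete bipartite property is preserved (this is just the homomorphism property of $f$): for $\vec u \in U$ and $\vec v\in V$, $(\vec u,\vec v) \in E(G^L)$ implies $(f(\vec u),f(\vec v)) \in E(H)$. Hence $f'(\sigma) \subseteq f(U)\times f(V)$ is a face of $\Hom(K_2,H)$. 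This is the only spot where the homomorphism hypothesis on $f$ is used in an essential way, and is as close to a ``main obstacle'' as the proof has.

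Finally, I would check the two functoriality properties by direct substitution. Preservation of minors: if $\pi\colon[M]\to[L]$ and $f(x_1,\dots,x_L) = g(x_{\pi(1)},\dots,x_{\pi(M)})$ for some $g\colon G^M\to H$, then plugging into the defining formula for $f'$ gives
\[f'((u_1,v_1),\dots,(u_L,v_L)) = g'((u_{\pi(1)},v_{\pi(1)}),\dots,(u_{\pi(M)},v_{\pi(M)})),\]
which is precisely the minor of $g'$ by $\pi$ under the product structure on $\Hom(K_2,G)^L$. Preservation of composition is a one-line check: for a graph homomorphism $h\colon H\to H'$ and its induced simplicial $\ZZ_2$-map $h_*((x,y)) = (h(x),h(y))$, evaluating both $(h\circ f)'$ and $h_* \circ f'$ on an arbitrary vertex yields $(h(f(\vec u)),h(f(\vec v)))$. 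The only bookkeeping to keep straight is which product structure is in play at each step, and the identification $\Hom(K_2,G^L) \cong \Hom(K_2,G)^L$ makes that choice uniform.
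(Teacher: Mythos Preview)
Your proposal is correct and matches the paper's approach, which simply says ``This is straightforward from the definitions'' --- you have in fact supplied the routine verifications the paper omits. One small point: the paper clarifies (just after the lemma) that by ``composition'' it means functions of the form $h(f(g_1(x_1),\dots,g_L(x_L)))$ with $g_i\colon G'\to G$ on each input coordinate as well as $h\colon H\to H'$ on the output; you only checked post-composition with $h$, but the pre-composition side is an equally trivial substitution and poses no difficulty.
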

This is straightforward from the definitions.
Here by compositions we mean functions of the form $h(f(g_1(x_1),\dots,g_L(x_L)))$ for $g_i\colon G' \to G$ and $h\colon H \to H'$; the graph homomorphisms $g_i$ and $h$ induce simplicial maps just as above for $L=1$.
Preserving compositions means in particular that if $\mu$ is an automorphism of $G$ and $\mu'$ is the automorphism of $\Hom(K_2,G)$ it induces, then $f(x_1,\dots,\mu(x_i),\dots,x_L)$ induces $f'(x_1,\dots,\mu'(x_i),\dots,x_L)$).

In the geometric realisation, the above-mentioned isomorphism induces (by linear extension) an isomorphism from $\left|\Hom(K_2,G \times H)\right|$
to $\left|\Hom(K_2,G) \times \Hom(K_2,H)\right|$.
The latter has a natural $\ZZ_2$-homotopy equivalence to $\left|\Hom(K_2,G)\right| \times \left|\Hom(K_2,H)\right|$, implicit in the following claim:

\begin{lemma}\label{lem:minionHTop}
	Let $f \colon X^L \to Y$ be a $\ZZ_2$-simplicial map and let $x_0 \in V(X)$.
	Let $|f| \colon |X|^L \to |Y|$ be the induced $\ZZ_2$-map, defined as:
	$$\textstyle|f|(\sum_i \lambda^{(1)}_i v^{(1)}_i, \dots, \sum_i \lambda^{(L)}_i v^{(L)}_i) \defeq \sum_{i_1,\dots,i_L} \lambda_{i_1}\cdots \lambda_{i_L} f(v^{(1)}_{i_1},\dots,v^{(L)}_{i_L})$$
	for faces $\{v^{(1)}_i \mid i\}, \dots, \{v^{(L)}_i \mid i\} \in X$.
	Then the transformation $f \mapsto |f|$ preserves minors up to $\ZZ_2$-homotopy rel $x_0$ and preserves composition exactly.
\end{lemma}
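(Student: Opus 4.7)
My plan is to check that $|f|$ is well-defined, continuous and $\ZZ_2$-equivariant, then verify preservation of composition on the nose and preservation of minors via an explicit straight-line $\ZZ_2$-homotopy rel the constant tuple $(x_0,\dots,x_0)$. For well-definedness, I would observe that on any product of faces $|\sigma_1|\times\dots\times|\sigma_L|\subseteq|X|^L$ the coefficients $\lambda_{i_1}^{(1)}\cdots\lambda_{i_L}^{(L)}$ are non-negative and sum to $1$, while every vertex $f(v_{i_1}^{(1)},\dots,v_{i_L}^{(L)})$ lies in the face $f(\sigma_1\times\dots\times\sigma_L)$ of $Y$ (since $f$ is simplicial and $\sigma_1\times\dots\times\sigma_L$ is a face of the categorical product $X^L$). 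Hence $|f|$ lands in $|Y|$, is polynomial in barycentric coordinates, agrees on boundaries between adjacent product faces, and therefore is continuous. $\ZZ_2$-equivariance follows from $f$ being $\ZZ_2$ together with the fact that the involution on $|Y|$ is the linear extension of the vertex involution.

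Composition preservation is a direct expansion: for simplicial $\ZZ_2$-maps $g_j\colon X_j'\to X$ and $h\colon Y\to Y'$, applying the defining formula to $|h\circ f\circ(g_1,\dots,g_L)|$ at $(\sum_i\lambda_i^{(j)}v_i^{(j)})_j$ yields
\[ \sum_{i_1,\dots,i_L} \lambda_{i_1}^{(1)}\cdots\lambda_{i_L}^{(L)}\,h\bigl(f(g_1(v_{i_1}^{(1)}),\dots,g_L(v_{i_L}^{(L)}))\bigr), \]
which matches $|h|\circ|f|\circ(|g_1|,\dots,|g_L|)$ term by term, using linearity of $|h|$ and the multilinearity built into $|f|$.

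For minor preservation, fix $\pi\colon[L]\to[L']$ and consider the minor $f\circ\pi_X\colon X^{L'}\to Y$. The two maps $|f\circ\pi_X|$ and $|f|\circ\pi_{|X|}$ agree on $V(X)^{L'}$ (both send $(v_1,\dots,v_{L'})$ to $f(v_{\pi(1)},\dots,v_{\pi(L)})$), but differ elsewhere whenever $\pi$ identifies variables, since the first specializes to the diagonal before multilinearly expanding while the second expands in each coordinate first. The key observation is that on any product face $|\sigma_1|\times\dots\times|\sigma_{L'}|$, both maps take values in the same convex face $f(\sigma_{\pi(1)}\times\dots\times\sigma_{\pi(L)})$ of $|Y|$: each vertex appearing with nonzero coefficient in either formula has the form $f(v_{k_1}^{(\pi(1))},\dots,v_{k_L}^{(\pi(L))})$ with $v_i^{(j)}\in\sigma_j$. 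The straight-line homotopy
\[ H_t \defeq (1-t)\,|f\circ\pi_X| + t\,|f|\circ\pi_{|X|} \]
is therefore valued in $|Y|$, continuous (polynomial in barycentric coordinates on each product face of $|X|^{L'}$), $\ZZ_2$-equivariant (a convex combination of $\ZZ_2$-maps under the linear involution on $|Y|$), and constant at $(x_0,\dots,x_0)$ because the endpoints coincide there. The main obstacle is precisely this common-face observation; once it is established the straight-line construction finishes the argument, while without it one would need a less explicit approach, such as a simplicial approximation argument, to produce a $\ZZ_2$-homotopy.
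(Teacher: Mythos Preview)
Your proposal is correct and follows essentially the same approach as the paper: both identify that the two maps land in a common face of $|Y|$ on each product cell and then use the straight-line homotopy between them, noting it fixes the basepoint. The only cosmetic difference is that the paper spells out the case $\pi\colon[2]\to[1]$ and declares the general case analogous, whereas you treat an arbitrary $\pi$ directly; you also make the $\ZZ_2$-equivariance of the homotopy explicit, which the paper leaves implicit.
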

\begin{proof}
	Preservation of composition is again straightforward.
	
	To see that the transformation preserves minors, consider for example the contraction (identification) of two coordinates.
	The general case is entirely analogous.
	Let $f \colon X^2 \to Y$ and let $f_{/2} \colon X \to Y$ be the minor obtained by contracting the two coordinates.
	Then 
	$$\textstyle |f_{/2}|(\sum_i \lambda_i v_i) = \sum_i \lambda_i f_{/2}(v_i) = \sum_i \lambda_i f(v_i, v_i).$$
	On the other hand, if we take the induced map first and only then contract, we obtain:
	$$\textstyle |f|_{/2}(\sum_i \lambda_i v_i) = |f|(\sum_i \lambda_i v_i, \sum_i \lambda_i v_i) = \sum_{i,j} \lambda_i \lambda_j f(v_i, v_j).$$
	The first point is in the face $\{f(v_i,v_i) \mid i\}$ of $Y$,
	the second is in the face $\{f(v_i,v_j) \mid i,j\}$ of $Y$ which contains the former.
	We can thus continuously move from one to the other.
	Formally, let $\mu_{i,j} \defeq \lambda_i$ if $i=j$ and $0$ otherwise.
	Then the functions (for $t \in [0,1]$)
	$$\textstyle f_t(\sum_i \lambda_i v_i) \defeq \left(t \cdot \mu_{i,j} + (1-t)\cdot \lambda_i \lambda_j\right) f(v_i,v_j)$$
	are always well-defined and give a $\ZZ_2$-homotopy between $|f_{/2}|$ and $|f|_{/2}$.
	For any vertex $x_0$ (i.e. $\lambda_1=1$) $f_t(x_0)$ is constantly equal to $f(x_0)$.
\end{proof}

We thus have a minion homomorphism from $\Pol(G,H)$ to the minion of maps-up-to-homotopy $|\Hom(K_2,G)|^L \to |\Hom(K_2,H)|$, which preserves automorphisms of~$G$.
This, as well as the minion homomorphism in the following subsection, can be interpreted as an instance of Lemma~\ref{lem:cat2}.

\subsection*{The fundamental group}
For a topological space $|X|$ and a point $x_0 \in |X|$, two maps from $|X|$ to some topological space are \emph{homotopic rel $x_0$} if there are homotopies that do not move the image of $x_0$.
In the \emph{fundamental group} $\pi_1(|X|, x_0)$, the elements are equivalence classes of loops at $x_0$ (maps $[0,1] \to |X|$ mapping 0 and 1 to $x_0$) under homotopy rel $x_0$,
the group operation is concatenation.
We skip $x_0$ when it is not important, since $\pi_1(|X|,x_0)$ is always
isomorphic to $\pi_1(|X|,x_0')$ if $|X|$ is path-connected\footnote{All the spaces we consider come from finite simplicial complexes, so connectivity in the topological sense is equivalent to path-connectivity (every two points being connected by a path) and to connectivity of the complex (as in a graph).} which we implicitly assume throughout.

Including information about the $\ZZ_2$-symmetry in the fundamental group is a bit less obvious.
For a $\ZZ_2$-space $|X|$ we can look at the fundamental group of $|X|$ but also the fundamental group of the quotient $|X|_{/\ZZ_2}$ (where every point is identified with its antipode; a.k.a. the orbit space or base space; we denote the equivalence class of $x$ by $\pm x$).
One way to think of elements of $\pi_1(|X|_{/\ZZ_2}, \pm x_0)$ is as paths from $x_0$ to either $x_0$ or~$-x_0$, with concatenation defined using the $\ZZ_2$-action if necessary.
Observe that $\pi_1(|X|_{/\ZZ_2})$ contains $\pi_1(|X|)$ as a subgroup, consisting of paths from $x_0$ to $x_0$.

Another way to describe the subgroup is by a group homomorphism to $\nu_X \colon \pi_1(|X|_{/\ZZ_2}) \to \ZZ_2$ mapping the subgroup (paths $x_0$ to $x_0$) to 0 and everything else (paths $x_0$ to $-x_0$) to~1.
Thus $\pi_1(|X|)$ is the subgroup given by the kernel of $\nu_X$.%
\footnote{In group theory, one would say $\pi_1(|X|)$ is a normal subgroup of index 2, or that $\pi_1(|X|)\to \pi_1(|X|_{/|\ZZ_2|}) \to \ZZ_2$ is a short exact sequence. In topology, one would say that $|X|$ is a degree-2 covering, or double cover, of $|X|_{/\ZZ_2}$; the group homomorphism $\nu_X$ is the \emph{monodromy action}, acting on the set $\{x_0,-x_0\}$.}

For example, consider $\Sphere^1$. The quotient $\Sphere^1_{/\ZZ_2}$ is again a circle,
so $\pi_1(\Sphere^1_{/\ZZ_2})$ is isomorphic to $\ZZ$ (a loop in the quotient is represented by its winding number);
$\nu$ is the remainder mod 2 (loops with odd winding number in the quotient correspond to paths from a point to its antipode in $\Sphere^1$)
and $\pi_1(\Sphere^1)$ is the subgroup $2\ZZ$ of even integers.
In contrast, the quotient $\Sphere^2_{/\ZZ_2}$ is the projective plane, so $\pi_1(\Sphere^2_{/\ZZ_2})$ is isomorphic to $\ZZ_2$; $\nu$ is the identity and the subgroup $\pi_1(\Sphere^2)$ is the trivial group.

A map $f\colon |X| \to |Y|$ induces a group homomorphism $f_* \colon \pi_1(|X|_{/\ZZ_2}) \to \pi_1(|Y|_{/\ZZ_2})$, simply by composing a loop with $f$.
This homomorphism preserves the subgroup: $\nu_Y(f_*(x))=\nu_X(x)$.
Equivalently, $f_*^{\ -1}(\pi_1(|Y|))=\pi_1(|X|)$.

The fundamental group of a product $\pi_1(|X|\times|Y|,(x_0,y_0))$ is isomorphic to the direct product of fundamental groups  $\pi_1(|X|,x_0) \times \pi_1(|Y|,y_0)$.
The isomorphism just maps a loop $\Sphere^1 \to |X|\times|Y|$ to the pair of loops obtained by composing with projections; the inverse maps a pair of loops $p \colon \Sphere^1 \to |X|$ and $q \colon \Sphere^1 \to |Y|$ to the ``simultaneous'' loop $(p,q) \colon t \mapsto (p(t),q(t))$.

However, $\pi_1((|X|\times|Y|)_{/\ZZ_2})$ is not isomorphic to $\pi_1(|X|_{/\ZZ_2}) \times \pi_1(|Y|_{/\ZZ_2})$, but to the subgroup of it given by elements $(x,y)$ such that $\nu_X(x)=\nu_Y(y)$.
Indeed, it contains paths from $(x_0,y_0)$ to either $(x_0,y_0)$ or $(-x_0,-y_0)$ but not to $(x_0,-y_0)$.

In other words, to a $\ZZ_2$-space $|X|$ we assign a group $\pi_1(|X|_{/\ZZ_2})$ together with a group homomorphism $\nu_X$ to $\ZZ_2$.
Consider the category whose objects are such pairs $(G,\nu)$ (a group with a homomorphism to $\ZZ_2$),
while morphisms $(G,\nu_G) \to (H,\nu_H)$ are group homomorphisms $G\to H$ preserving $\nu$.
The categorical product of $(G,\nu_G)$ and $(H,\nu_H)$ is $\{(g,h) \in G \times H \colon \nu_G(g)=\nu_H(h)\}$ with coordinate-wise multiplication and the homomorphism to $\ZZ_2$ defined in an obvious way ($\nu(g,h)\defeq\nu_G(g)=\nu_H(h)$).
Let us denote this product as $\star$ for clarity\footnote{In category theory, $(G,\nu_G) \star (H,\nu_H)$ is called the \emph{pullback} of $\nu_G$ and $\nu_H$, and may be denoted $G \times_{\ZZ_2} H$.} and the $L$-fold product of $(G,\nu_G)$ as $G^{\star L}$.
Then a $\ZZ_2$-map $f \colon |X|^L\to|Y|$ mapping a point $x_0$ to $y_0$ induces a morphism $f_* \colon \pi_1(|X|_{/\ZZ_2}, \pm x_0)^{\star L} \to \pi_1(|Y|_{/\ZZ_2}, \pm y_0)$ (a group homomorphism preserving $\nu$):
	$$\textstyle f_*([p_1],\dots,[p_L]) \defeq [t \mapsto f(p_1(t),\dots,p_L(t))]$$
(where $[p]$ denotes the equivalence class of a loop $\Sphere^1 \to |X|_{/\ZZ_2}$ under homotopy rel $\pm x_0$).

The following is straightforward to check from the definition of $f_*$:

\begin{lemma}\label{lem:minionFundGroup}
	Let $f \colon |X|^L\to|Y|$ be $\ZZ_2$-map, let $x_0 \in |X|$ be an arbitrary point and let $f(x_0)=y_0$.
	Let $f_* \colon \pi_1(|X|_{/\ZZ_2})^{\star L} \to \pi_1(|Y|_{/\ZZ_2})$ be the induced group homomorphism.
	Then the transformation $f \mapsto f_*$ preserves minors and preserves automorphisms of $|X|$ that fix $x_0$.%
	\footnote{More generally, one could consider pointed spaces (pairs $(|X|,x_0)$) and pointed $\ZZ_2$-maps $(|X|,x_0) \to (|Y|,y_0)$ (maps that map $x_0$ to $y_0$).
	Then $f \mapsto f_*$ preserves composition with pointed maps; automorphisms fixing $x_0$ are a special case.}
\end{lemma}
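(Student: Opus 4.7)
The plan is to verify both preservation claims by direct unwinding of the explicit formula defining $f_*$ given in the lemma statement. Both claims end up being essentially syntactic identities, once one checks that the formula makes sense in the pullback category; the only real content is in this well-definedness, which is why I expect that to be the main subtle point rather than the preservation properties themselves.

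For preservation of minors, let $\pi\colon [L]\to[L']$ and consider the minor $f_\pi\colon|X|^{L'}\to|Y|$ defined by $f_\pi(x_1,\dots,x_{L'})=f(x_{\pi(1)},\dots,x_{\pi(L)})$. Applying the formula from the lemma to $f_\pi$ yields
\[(f_\pi)_*([p_1],\dots,[p_{L'}]) \;=\; [\,t\mapsto f_\pi(p_1(t),\dots,p_{L'}(t))\,] \;=\; [\,t\mapsto f(p_{\pi(1)}(t),\dots,p_{\pi(L)}(t))\,].\]
On the other hand, the corresponding minor in the minion structure on induced group homomorphisms (using the product morphism $\pi_{\pi_1(|X|_{/\ZZ_2})}$ in the category of groups-over-$\ZZ_2$ under $\star$) gives
\[(f_*)_\pi([p_1],\dots,[p_{L'}]) \;=\; f_*([p_{\pi(1)}],\dots,[p_{\pi(L)}]) \;=\; [\,t\mapsto f(p_{\pi(1)}(t),\dots,p_{\pi(L)}(t))\,],\]
matching the previous expression verbatim.

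For preservation of automorphisms, suppose $\mu\colon|X|\to|X|$ is a $\ZZ_2$-automorphism with $\mu(x_0)=x_0$, and write $f'(x_1,\dots,x_L)\defeq f(x_1,\dots,\mu(x_i),\dots,x_L)$. Then $f'(x_0,\dots,x_0)=y_0$ as well, so $(f')_*$ is defined with the same base-point. Unwinding the formula gives $(f')_*([p_1],\dots,[p_L])=[\,t\mapsto f(p_1(t),\dots,\mu(p_i(t)),\dots,p_L(t))\,]$, while $\mu_*$ on loops is simply $\mu_*([p_i])=[\mu\circ p_i]$, so the right-hand side equals $f_*([p_1],\dots,\mu_*([p_i]),\dots,[p_L])$. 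Composition at the inner slot is preserved on the nose.

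The main thing that needs checking, underlying both parts, is that the defining formula for $f_*$ descends properly to the pullback $\pi_1(|X|_{/\ZZ_2})^{\star L}$ and to homotopy classes. Concretely: when $\nu_X([p_i])$ is the same element of $\ZZ_2$ for all $i$, the paths $p_i$ in $|X|$ (lifting the loops in $|X|_{/\ZZ_2}$) can be chosen to start at $x_0$ and end simultaneously at $x_0$ or simultaneously at $-x_0$; $\ZZ_2$-equivariance of $f$ then forces $f(p_1(t),\dots,p_L(t))$ to be a path in $|Y|$ from $y_0$ to either $y_0$ or $-y_0$, giving a loop at $\pm y_0$ in $|Y|_{/\ZZ_2}$ whose $\nu_Y$-value matches the common $\nu_X$-value. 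Homotopy-invariance is then immediate by applying $f$ coordinate-wise to simultaneous homotopies, and the fact that $\mu$ fixes $x_0$ in the second claim is precisely what ensures that composing with $\mu$ on a coordinate does not shift the base-point and therefore remains compatible with the induced-morphism construction.
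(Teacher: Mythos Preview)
Your proposal is correct and follows exactly the approach the paper intends: the paper does not give a proof at all beyond the remark that the claim is ``straightforward to check from the definition of $f_*$'', and your write-up is precisely that straightforward check, spelled out carefully (including the well-definedness on the pullback $\star$-product, which the paper leaves implicit).
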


\subsection*{Wrapping it up}
Let us denote $\pi_1(|\Hom(K_2,G)|)$ and $\pi_1(|\Hom(K_2,G)|/_{\ZZ_2})$ as respectively $\pi_1(G)$ and $\pi_1(G_{/\ZZ_2})$, for short.

Consider a graph homomorphism $f \colon C_n^L \to H$ ($n$ odd).
We have $|\Hom(K_2, C_n)| \simeq_{\ZZ_2} \Sphere^1$
and hence $\pi_1({C_n}_{/\ZZ_2})$ is $\ZZ$
with a group homomorphism $\nu_{C_n}\colon i \mapsto (i\mod 2)$.
In particular $\ZZ^{\star L}$ is the subgroup of $\ZZ^L$ given by $L$-tuples in which the integers are all even or all odd and $\pi_1(C_n)$ is the subgroup $2\ZZ$ of even integers in $\ZZ$.
For an arbitrarily fixed edge $e_0$ of $C_n$, the automorphism $\mu_{C_n}$ that mirrors the graph and fixes $e_0$
induces the automorphism of $\ZZ$ which maps $i$ to $-i$.

Therefore, composing the transformations from Lemmas~\ref{lem:minionComplex}, \ref{lem:minionHTop}, and \ref{lem:minionFundGroup},
we obtain a~group homomorphism $f_* \colon \ZZ^{\star L} \to \pi_1(H_{/\ZZ_2})$ which preserves the homomorphism to $\ZZ_2$ and the mirror automorphism on each coordinate.

Suppose that $|\Hom(K_2,H)|\zeq \Sphere_1$,
so again $\pi_1(H_{/\ZZ_2}) = \ZZ$ with the same homomorphism to $\ZZ_2$ ($i \mod 2$) and the same mirror automorphism ($-i$).
Since $f_*$ preserves the homomorphism to $\ZZ_2$, $d \defeq f_*(1,1,\dots,1) \in \ZZ$ is an odd number,
which means $f_*(2,2,\dots,2)=2d$ is non-zero.
This is why we needed the $\ZZ_2$-action: to conclude that $f_*$ is non-trivial.
We can now focus on what $f_*$ does on the subgroup of even integers.

Let $a_\ell \defeq f_*(0,\dots,0,2,0,\dots,0) \in  \ZZ$ with a $2$ in the $\ell$-th coordinate.
Then $f_*$ on even numbers is completely determined by these elements: $f_*(2i_1,\dots,2i_L) = a_1 \cdot {i_1} + \cdots + a_L \cdot i_L$ (because it is a group homomorphism).
By the above, $\sum_{\ell=1}^L a_\ell$ is non-zero.
Since $f \mapsto f_*$ preserves minors, 
we know that the minor $i \mapsto f_*(i,i,\dots,i)$ is a group homomorphism induced by some graph homomorphism $C_n \to H$ (namely by the corresponding minor $v \mapsto f(v,\dots,v)$), hence the integer $f_*(2,2,\dots,2)$ belongs to a set of at most $|H|^n$ possibilities.
The same holds for compositions with mirror symmetries:
the group homomorphism $i \mapsto f_*(i,\dots, -i, \dots,i)$ with a minus on any subset of coordinates is 
induced by the graph homomorphism $C_n \to H$ defined as
$f(v,\dots,\mu_{C_n}(v),\dots,v)$ with $\mu_{C_n}$ on the same set of coordinates.
Hence for $i_1,\dots,i_L \in \{+1,-1\}$, the values $f_*(2 i_1,2 i_2,\dots,2 i_L) = a_1 \cdot i_1 + \cdots + a_L \cdot i_L$ belong to a set of at most $|H|^n$ possibilities.
This implies less than $|H|^n$ of the integers $a_\ell$ are non-zero.
Indeed, if there are $L'$ coordinates $\ell$ for which $a_\ell$ is non-zero, then one can set the corresponding $i_\ell$ to make $a_\ell \cdot i_\ell$ positive, and then swap $i_\ell$ one-by-one in any order, resulting in a strictly decreasing sequence of values $a_1 \cdot i_1 + \cdots a_L \cdot i_L$, hence in $L'+1$ distinct values.
Hence $L'+1 \leq |H|^n$.

Therefore, the group homomorphism $(2i_1,\dots,2i_L) \mapsto
f_*(2i_1,\dots,2i_L) \colon (2\ZZ)^L \to (2\ZZ)$ has bounded (but non-zero)
essential arity.
Note that this is exactly the homomorphism $f_*|_{\pi_1(C_n)^L}$, from the subgroup $\pi_1(C_n)^L$ to the subgroup $\pi_1(H)$.
Therefore, the transformation $f \mapsto f_*|_{\pi_1(C_n)^L}$ is a minion homomorphism from $\Pol(C_n,H)$ to a minion of functions of bounded essential arity.

The same argument would work if instead of $|\Hom(K_2,H)|\zeq \Sphere_1$ we only assumed we had a $\ZZ_2$-map $g\colon|\Hom(K_2,H)|\to \Sphere_1$,
since it would induce a group homomorphism $g_* \colon \pi_1(H_{/\ZZ_2}) \to \ZZ$ which preserves the homomorphism to $\ZZ_2$, in a way that preserves mirror automorphisms of $\Sphere^1$; it then suffices to compose $g_*$ with $f_*$ and continue as above.

This concludes the proof of the following:
\begin{theorem}
	Let $H$ be a graph such that $|\Hom(K_2,H)| \to_{\ZZ_2} \Sphere_1$.
	Then for all odd $n$, $\Pol(C_n,H)$ admits a minion homomorphism to a minion of bounded essential arity with no constant functions.
\end{theorem}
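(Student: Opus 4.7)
The plan is to explicitly build the desired minion homomorphism by composing the transformations from Lemmas~\ref{lem:minionComplex}, \ref{lem:minionHTop}, and~\ref{lem:minionFundGroup} with the $\ZZ_2$-map $g \colon |\Hom(K_2,H)| \to \Sphere^1$ guaranteed by hypothesis, and then restrict to the index-$2$ subgroup on the source to land in an abelian group where bounded essential arity can be read off directly. Concretely, for a polymorphism $f\colon C_n^L \to H$, I would first apply the three lemmas in sequence to obtain a group homomorphism $f_* \colon \pi_1({C_n}_{/\ZZ_2})^{\star L} \to \pi_1(H_{/\ZZ_2})$ that preserves the twist $\nu$ and the mirror automorphism of $C_n$ on each coordinate. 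Since $|\Hom(K_2,C_n)| \zeq \Sphere^1$, the source is (up to isomorphism) the subgroup of $\ZZ^L$ of tuples with all coordinates of the same parity, with $\nu$ being parity of an entry and the mirror automorphism being negation. Composing with $g_* \colon \pi_1(H_{/\ZZ_2}) \to \ZZ$ and then restricting to the subgroup $(2\ZZ)^L$ (which corresponds to $\pi_1(C_n)^L$) yields a group homomorphism $\bar\phi_f \colon (2\ZZ)^L \to 2\ZZ$. Let $\mathscr M$ be the set of all such $\bar\phi_f$ closed under minors.

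The assignment $f \mapsto \bar\phi_f$ preserves minors by the combined minor-preservation statements of the three lemmas (the homotopy slack in Lemma~\ref{lem:minionHTop} is killed by passage to $\pi_1$), so it defines a minion homomorphism $\Pol(C_n, H) \to \mathscr M$. It remains to verify the two quantitative properties of $\mathscr M$.

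For the absence of constants, I would note that the preserved twist forces $d := f_*(1,\ldots,1)$ to be odd, hence nonzero, so $\bar\phi_f(2,\ldots,2) = 2d \neq 0$. Any minor of $\bar\phi_f$ is again a group homomorphism on some $(2\ZZ)^{L'}$; the only constant group homomorphism is the zero map, and the diagonal evaluation inherits the nonvanishing value $2d$, ruling out constants throughout $\mathscr M$. For bounded essential arity, write $\bar\phi_f(2i_1,\ldots,2i_L) = \sum_\ell a_\ell i_\ell$ with $a_\ell \defeq \bar\phi_f(0,\ldots,0,2,0,\ldots,0)$; a coordinate $\ell$ is essential iff $a_\ell \neq 0$. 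For each sign pattern $\epsilon \in \{\pm 1\}^L$, the value $\sum_\ell a_\ell\, \epsilon_\ell$ is, by preservation of minors and mirror automorphisms on coordinates, the value taken on $1$ by the group homomorphism induced by the $1$-ary polymorphism $v \mapsto f(\mu_{\epsilon_1}(v),\ldots,\mu_{\epsilon_L}(v))$, where $\mu_{-1}$ is the mirror automorphism of $C_n$ fixing a chosen edge. Since there are at most $|V(H)|^n$ graph homomorphisms $C_n \to H$, the set of values $\{\sum_\ell a_\ell\, \epsilon_\ell : \epsilon \in \{\pm 1\}^L\}$ has size at most $|V(H)|^n$.

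The last and main obstacle is to translate this cardinality bound into a bound on the number of nonzero $a_\ell$. I would use a sign-flipping argument: start from the sign pattern making every nonzero $a_\ell\, \epsilon_\ell$ positive, then flip the sign of each of the $L'$ nonzero coordinates in turn; this produces a strictly decreasing sequence of $L' + 1$ distinct values of $\sum_\ell a_\ell\, \epsilon_\ell$, hence $L' + 1 \leq |V(H)|^n$. Therefore every function in $\mathscr M$ has essential arity at most $|V(H)|^n - 1$, completing the construction. The subtle point requiring the most care is keeping track of the correct category-theoretic naturality (notably that composition with coordinate-wise mirror automorphisms lands in $\pi_1(C_n)^L$ and behaves as a minor at the minion level), so that the enumeration over $\epsilon$ really is an enumeration over polymorphisms of bounded arity.
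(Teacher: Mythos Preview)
Your proposal is correct and follows essentially the same route as the paper: compose the minor-preserving transformations of Lemmas~\ref{lem:minionComplex}, \ref{lem:minionHTop}, \ref{lem:minionFundGroup}, post-compose with $g_*$ to land in $\ZZ$, use $\nu$-preservation to force $f_*(1,\dots,1)$ odd (hence non-constancy), and bound the number of nonzero $a_\ell$ via the sign-flipping count against the at most $|V(H)|^n$ homomorphisms $C_n \to H$. The only cosmetic difference is that you explicitly name the image minion $\mathscr{M}$ and argue closure under minors, whereas the paper leaves this implicit; since the transformation preserves minors, every minor of some $\bar\phi_f$ is $\bar\phi_{f'}$ for a minor $f'$ of $f$, so your closure step adds nothing new and the argument goes through verbatim.
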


\noindent
By Theorem~\ref{thm:minbndarity}, this concludes the direct proof that $\PCSP(C_n,H)$ is NP-hard for all odd $n$:
\begin{corollary}
	Let $H$ be a graph such that $|\Hom(K_2,H)| \to_{\ZZ_2} \Sphere_1$.
	Then $H$ is left-hard.
\end{corollary}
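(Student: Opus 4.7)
My plan is to obtain the corollary as an almost immediate consequence of the preceding Theorem, combined with the abstract NP-hardness criterion of Theorem~\ref{thm:minbndarity} and the reduction-to-odd-cycles principle recorded after the definition of \emph{left-hard} in Section~\ref{sec:functors}.

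First, I would fix an odd integer $n\geq 3$. By the preceding Theorem, the hypothesis $|\Hom(K_2,H)|\to_{\ZZ_2}\Sphere^1$ supplies a minion homomorphism from $\Pol(C_n,H)$ into a minion of bounded essential arity that contains no constant function. Theorem~\ref{thm:minbndarity} then applies verbatim and yields NP-hardness of $\PCSP(C_n,H)$ (provided, of course, that $C_n\to H$; otherwise $\Pol(C_n,H)$ is empty and the statement is vacuous in this direction).

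Second, I would upgrade this cycle-by-cycle hardness to full left-hardness. Let $G$ be any non-bipartite graph with $G\to H$. Since $G$ is non-bipartite, it contains an odd closed walk and hence admits a homomorphism $C_n\to G$ for some (indeed, arbitrarily large) odd~$n$. Composing with $G\to H$ gives $C_n\to H$, so the instance of $\PCSP(C_n,H)$ is well-defined and, by the previous paragraph, NP-hard. By homomorphic relaxation applied to $C_n\to G$ (with identity on the right side), $\PCSP(C_n,H)$ reduces to $\PCSP(G,H)$; therefore $\PCSP(G,H)$ is NP-hard. Since $G$ was an arbitrary non-bipartite graph mapping to $H$, this is exactly the definition of $H$ being left-hard.

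The hard part of the argument is entirely concentrated in the preceding Theorem: translating a graph polymorphism $f\colon C_n^L\to H$ into an induced map on hom complexes, then into a $\ZZ_2$-map of geometric realisations, then into a group homomorphism on fundamental groups, and finally using the $\ZZ_2$-map $|\Hom(K_2,H)|\to_{\ZZ_2}\Sphere^1$ to bound the essential arity on the even subgroup $\pi_1(C_n)^L=(2\ZZ)^L$. Once that topological chain is in hand, the corollary itself is formally a two-step derivation, with no additional obstacle beyond invoking the right machinery.
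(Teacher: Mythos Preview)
Your proposal is correct and follows essentially the same route as the paper: invoke the preceding Theorem to get the minion homomorphism for each odd $n$, apply Theorem~\ref{thm:minbndarity} to obtain NP-hardness of $\PCSP(C_n,H)$, and then use the odd-cycle reduction (homomorphic relaxation from $C_n\to G$) to conclude left-hardness. The only slip is a cross-reference: the ``every non-bipartite $G$ admits a homomorphism from an odd cycle, so it suffices to show hardness for $\PCSP(C_n,H)$'' observation is stated right after the definition of left-/right-hard in Section~2, not in Section~\ref{sec:functors}.
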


\noindent
Since $|\Hom(K_2,H)|$ is $\ZZ_2$-homotopy equivalent to $\BBox{H}$ (hence they admit the same $\ZZ_2$-maps),
this is exactly equivalent to Corollary~\ref{cor:S1};
in particular it gives a proof of Theorem~\ref{thm:K3}.

\subsection*{Further remarks}
In the case of $\Sphere^1$, the fact that a $\ZZ_2$-map $g\colon|\Hom(K_2,H)|\to \Sphere_1$ induces a group homomorphism $g_* \colon \pi_1(H_{/\ZZ_2}) \to \ZZ$ which preserves the homomorphism to $\ZZ_2$ is in fact an exact characterisation.
That is, as stated by Matsushita~\cite{Matsushita17}, standard covering space theory yields the following:
\begin{lemma}
	A connected $\ZZ_2$-space $|X|$ admits a $\ZZ_2$-map to $\Sphere^1$ if and only if there exists a group homomorphism $f \colon \pi_1(|X|_{/\ZZ_2}) \to \ZZ$ which preserves the action (that is, $f^{-1}(2\ZZ) = \pi_1(X)$).
\end{lemma}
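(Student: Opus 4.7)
The plan is to apply standard covering space theory, using the double cover $\pi\colon \Sphere^1 \to \Sphere^1_{/\ZZ_2}$ and the fact that $\Sphere^1_{/\ZZ_2}$ is itself (homotopy equivalent to) a circle and hence an Eilenberg–MacLane space $K(\ZZ,1)$. Throughout I will write $q \colon |X| \to |X|_{/\ZZ_2}$ for the quotient map; by assumption this is a double cover (for the cases of interest, the $\ZZ_2$-action is free), so $q_*$ identifies $\pi_1(|X|)$ with an index-$2$ subgroup of $\pi_1(|X|_{/\ZZ_2})$ whose nontrivial coset is detected by $\nu_X$.

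For the forward implication, given a $\ZZ_2$-map $g \colon |X|\to \Sphere^1$, let $\bar g \colon |X|_{/\ZZ_2}\to \Sphere^1_{/\ZZ_2}$ be the induced map on quotients and set $f := \bar g_* \colon \pi_1(|X|_{/\ZZ_2})\to \pi_1(\Sphere^1_{/\ZZ_2}) \cong \ZZ$. The square formed by $g,\bar g$ and the two covering projections commutes, so a loop in $|X|_{/\ZZ_2}$ lifts to a loop in $|X|$ if and only if its image under $\bar g$ lifts to a loop in $\Sphere^1$, i.e.\ $f^{-1}(2\ZZ)=\pi_1(|X|)$.

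For the converse, first use that $\Sphere^1_{/\ZZ_2}\simeq K(\ZZ,1)$: homotopy classes of maps $|X|_{/\ZZ_2}\to \Sphere^1_{/\ZZ_2}$ are classified by group homomorphisms $\pi_1(|X|_{/\ZZ_2})\to \ZZ$, so there is a map $\bar g$ with $\bar g_* = f$. Now attempt to lift $\bar g \circ q \colon |X|\to \Sphere^1_{/\ZZ_2}$ across the covering $\pi$. The lifting criterion asks that $(\bar g\circ q)_*(\pi_1(|X|))\subseteq \pi_1(\Sphere^1)=2\ZZ$, and this is exactly the inclusion $\pi_1(|X|)\subseteq f^{-1}(2\ZZ)$ given by the hypothesis; pick a basepoint and choose a lift $g\colon |X|\to \Sphere^1$.

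The final and only delicate step is to show that $g$ can be chosen $\ZZ_2$-equivariant. Define $g'(x) := -g(-x)$; since $\bar g\circ q$ is invariant under the $\ZZ_2$-action on $|X|$ and $\pi$ commutes with antipodes, $g'$ is another lift of $\bar g \circ q$. By uniqueness of lifts on a connected space, $g'$ and $g$ differ by a deck transformation of $\pi$, i.e.\ either $g'=g$ (in which case $g$ is equivariant and we are done) or $g'=-g$, equivalently $g(-x)=g(x)$ for all $x$. In the latter case $g$ factors through $q$, giving a lift of $\bar g$ itself across $\pi$, which by the lifting criterion forces $f(\pi_1(|X|_{/\ZZ_2}))\subseteq 2\ZZ$, i.e.\ $f^{-1}(2\ZZ)=\pi_1(|X|_{/\ZZ_2})$. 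But this contradicts $f^{-1}(2\ZZ)=\pi_1(|X|)$ whenever $\pi_1(|X|)$ is a proper subgroup of $\pi_1(|X|_{/\ZZ_2})$, which is the case precisely when $|X|\to |X|_{/\ZZ_2}$ is a nontrivial (connected) double cover — the standing assumption. If instead the cover is trivial, $|X|$ splits as two copies of $|X|_{/\ZZ_2}$ on which the $\ZZ_2$-action swaps the sheets, and one exhibits a $\ZZ_2$-map to $\Sphere^1$ directly (e.g.\ by mapping the two sheets to antipodal points). The equivariance argument is the step I expect to be most subtle: the rest is essentially bookkeeping with the universal property of $K(\ZZ,1)$ and the path-lifting criterion.
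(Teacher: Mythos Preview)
Your argument is correct and is exactly the ``standard covering space theory'' the paper alludes to: the paper does not actually prove this lemma, it merely cites it (attributing it to Matsushita) as a routine consequence of covering space theory, and what you wrote is precisely that routine. The forward direction via the induced map on quotients, and the converse via $K(\ZZ,1)$-representability followed by the lifting criterion, are the expected steps.

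Two minor comments. First, your closing case ``if instead the cover is trivial'' cannot occur: the lemma assumes $|X|$ is connected, and a free $\ZZ_2$-action on a connected space always yields a connected (hence nontrivial) double cover, so $\pi_1(|X|)$ is automatically a proper index-$2$ subgroup and the contradiction goes through directly. Second, you are tacitly using that the action is free (so $q$ is a covering map) and that $|X|_{/\ZZ_2}$ has the homotopy type of a CW complex (so maps to $K(\ZZ,1)$ are classified by $\pi_1$); both hold in the paper's setting of geometric realisations of free $\ZZ_2$-simplicial complexes, and it would be worth saying so explicitly.
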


In the above proof, one could go directly from graphs to fundamental groups, avoiding simplicial complexes and topological spaces (though they remain the simplest way to prove that these fundamental groups preserve products).
A direct definition of the fundamental group of the quotient space $\BBox{H}_{/\ZZ_2}$ is as follows.
We consider closed walks (cycles that are allowed to self-intersect) from an arbitrary fixed vertex $v_0 \in V(H)$.
Two such walks are consider equivalent if one can be obtained from the other by adding/removing backtracks (a pair of consecutive edges going back and forth on the same edge of $H$) and 4-cycles (subwalks around a cycle of length 4).
The elements of the group are equivalence classes of walks, with concatenation as multiplication.
The resulting group is isomorphic to $\pi_1(H_{/K_2})$ (this combinatorial definition is known as the \emph{edge-path group}; see~\cite{Matsushita12} or Section 3.6 and 3.7 in~\cite{Spanier}).
Considering walks in $H \times K_2$ instead would yield a group isomorphic to $\pi_1(H)$.

For example, for odd cycles and more generally circular cliques $<4$ the group is just $\ZZ$ (Lemma 4.1 in~\cite{Wrochna17} has a direct but technical proof), for square-free graphs the group is a free (non-Abelian) group.
For $K_4$, the resulting group is just $\ZZ_2$ (all walks of the same parity are equivalent), which corresponds to the fact that $\BBox{K_4}$ is the 2-sphere and $\BBox{K_4}_{/\ZZ_2}$ is the projective plane.

Unfortunately, this makes the fundamental group useless for the question of whether $K_4$ is left-hard.
Indeed, there is only one possible induced group homomorphism $f_* \colon \ZZ^{\star L} \to \pi_1({K_4}_{/\ZZ_2}) = \pi_1(\RR\mathrm{P}_2) = \ZZ_2$: it maps $L$-tuples of even integers to 0 and  $L$-tuples of odd integers to 1 (because it has to preserve the homomorphism to $\ZZ_2$, which is the identity).
Whether other tools of algebraic topology can be useful remains to be seen.

\printbibliography

\end{document}